\newtheorem{remark}{Remark}
\newcommand\Z{\mathbb{Z}}
\newcommand\consort{\textsc{ConSORT}}
\newcommand\set[1]{\{#1\}}
\newcommand\Imp{\Rightarrow}
\newcommand\COL{\mathbin{:}}
\newcommand\INT{\mathbf{int}}
\newcommand\REF{\mathbf{ref}}
\newcommand\Tint[2]{\set{#1\COL\INT\mid #2}}
\newcommand\Tref[3]{(\lambda #1.#2)\;\mathbf{ref}^{#3}}
\newcommand\Tfunarg[1]{\langle #1\rangle}
\newcommand\TfunargL{\langle}
\newcommand\TfunargR{\rangle}
\newcommand\Tfunret[2]{\langle #1\mid #2\rangle}
\newcommand\p{\vdash}
\newcommand\FE{\Theta}
\newcommand\TE{\Gamma}
\newcommand\Empty{\mathit{Empty}}
\newcommand\emptyTE{\bullet}
\newcommand\ty{\tau}
\newcommand\To{\Longrightarrow}
\newcommand\form{\varphi}
\newcommand\dom{\mathit{dom}}
\newcommand\letexp[2]{\mathbf{let}\ #1=#2\ \mathbf{in}\ }
\newcommand\ifexp[2]{\mathbf{ifnp}\ #1\ \mathbf{then}\ #2\ \mathbf{else}\ }
\newcommand\OP{\mathtt{op}}
\newcommand\mkarray{\mathbf{alloc}}
\newcommand\mkref{\mathbf{ref}\;}
\newcommand\mkarrexp[2]{\mathbf{let}\ #1=\mkarray\ #2\ \mathbf{in}\ }
\newcommand\aliasexp[2]{\mathbf{alias}(#1=#2);}
\newcommand\ALIAS{\mathbf{alias}}
\newcommand\assertexp[1]{\mathbf{assert}(#1)}
\newcommand\pplus{\boxplus} %
\newcommand\red{\longrightarrow}
\newcommand\TRUE{\mathbf{true}}
\newcommand\sconfig[3]{\langle #1, #2, #3 \rangle}
\newcommand\redD{\longrightarrow_D}
\newcommand\update{\hookleftarrow}
\newcommand\ALIASFAIL{\mathbf{AliasFail}}
\newcommand\ASSERTFAIL{\mathbf{AssertFail}}
\newcommand*\FV{\mathbf{FV}}
\newcommand\funty{\sigma}
\newcommand\WF{\mathit{WF}}
\newcommand\pWF{\p_{\WF}}
\newcommand*\defeq{\stackrel{\text{def}}{=}}
\newcommand\Def{\overset{\text{def}}{\Longleftrightarrow}}
\newcommand\SAT[3]{\mathbf{SAT}(#1,#2,#3)}
\newcommand\SATv[4]{\mathbf{SATv}(#1,#2,\allowbreak #3,#4)}
\newcommand\Own[3]{\mathbf{Own}(#1,#2,#3)}
\newcommand\own[4][R]{\mathbf{own}(#2,#1,\allowbreak #3, #4)}
\newcommand\Addr[2]{(#1,#2)}
\newcommand\AddrArg[2]{#1,#2}
\newcommand\formTE[1]{\llbracket #1 \rrbracket}
\newcommand\seq[1]{\widetilde{#1}}
\newif\ifdraft\draftfalse
\newif\iffull\fulltrue
\newif\iftwocol\twocoltrue
\newcommand\nk[1]{\textcolor{red}{[#1 -nk]}}
\newcommand*\sk[1]{\textcolor{teal}{\scriptsize [#1 -sk]}}
\newcommand\nk[1]{}
\newcommand*\sk[1]{}
\newcommand*\Appendix[1]{Appendix~\ref{#1}}
\newcommand*\Appendix[1]{the full version~\cite{TanakaPEPM24full}}
\begin{document}

\title{Ownership Types for Verification of Programs with Pointer Arithmetic}

\author{Izumi Tanaka}
\orcid{0009-0006-3032-1595}
\affiliation{%
  \institution{The University of Tokyo}
  \city{}
  \country{Japan}
}
\email{tanaka-izumi@g.ecc.u-tokyo.ac.jp}

\author{Ken Sakayori}
\orcid{0000-0003-3238-9279}
\affiliation{%
  \institution{The University of Tokyo}
  \city{}
  \country{Japan}
}
\email{sakayori@is.s.u-tokyo.ac.jp}

\author{Naoki Kobayashi}
\orcid{0000-0002-0537-0604}
\affiliation{%
  \institution{The University of Tokyo}
  \city{}
  \country{Japan}
}
\email{koba@is.s.u-tokyo.ac.jp}

\renewcommand{\shortauthors}{Tanaka et al.}

\begin{abstract}
  Toman et al. have proposed a type system for automatic verification of
  low-level programs,
  which combines ownership types and refinement types to enable
  strong updates of refinement types in the presence of pointer aliases.
  We extend their type system to support pointer arithmetic, and prove its soundness.
  Based on the proposed type system, we have implemented a prototype tool
  for automated verification of the lack of assertion errors of low-level programs
  with pointer arithmetic, and confirmed its effectiveness through experiments.
\end{abstract}

\keywords{Automated Program Verification, Ownership Types, Refinement Types, Pointer Arithmetic}

\maketitle

\section{Introduction}
\label{sec:intro}
Techniques for fully automated program verification\footnote{By \emph{fully} automated program verification,
we refer to automated verification of functional correctness (typically expressed by using assertions)
of programs that require no or little user intervention. It should be contrasted with
semi-automated verification based on
the verification condition generation approach~\cite{DBLP:conf/lpar/Leino10,DBLP:conf/esop/FilliatreP13,DBLP:series/natosec/0001SS17}, which require annotations of loop invariants and
pre/post conditions of recursive procedures.}
have made a lot of progress in recent years~\cite{SLAM,KSU11PLDI,Bjorner15,DBLP:journals/fmsd/KomuravelliGC16,DBLP:conf/cav/GurfinkelKKN15,DBLP:conf/cav/KahsaiRSS16,DBLP:conf/esop/TomanSSI020,DBLP:journals/toplas/MatsushitaTK21}, but
it is still challenging to automatically verify low-level imperative programs that involve mutable references.
Due to possible aliasing, it is in general difficult to keep track of information about the value
stored in each reference cell in a sound, precise, and efficient manner.
Toman et al.~\cite{DBLP:conf/esop/TomanSSI020} addressed this problem
by combining refinement types with fractional ownerships, and constructed an automated verification tool
called \consort{}
for a simple imperative language with mutable references and first-order recursive functions.
In the present paper, we extend their type system to support pointer arithmetic.

\paragraph{Review of the Ideas of \consort{}}
Before explaining our extension, let us first explain the ideas of \consort{}.
Consider the program in Figure~\ref{fig:consort1}.
The first line creates a reference cell whose value is initialized to \(0\) and binds \(x\) to it,
the second line updates the value of \(x\) to \(1\), and the third line asserts that the value of
\(x\) is indeed \(1\). The comments of the first and second lines (after ``//'') show
the (refinement) type environments after each line. Here, please note that the type of \(x\)
is also updated: on the first line, the type of \(x\) indicates that the value stored in \(x\) is \(0\),
while the type of \(x\) on the second line  indicates that the value stored in \(x\) is now \(1\).
Please ignore the superscript \(1\) of \(\mathbf{ref}\) for the moment.
Based on that, we can conclude that the assertion on the third line succeeds.

\begin{figure}
  \begin{align*}
&    \letexp{x}{\mkref{0}} && \mbox{// }x\COL \Tint{\nu}{\nu=0}\ \mathbf{ref}^1\\
&    x := 1; && \mbox{// }x\COL \Tint{\nu}{\nu=1}\ \mathbf{ref}^1\\
&    assert(*x = 1); && \mbox{// ok}
  \end{align*}
  \caption{A \consort{} Program}
  \label{fig:consort1}
\end{figure}

As the example above shows, we need to strongly update %
refinement types in the presence of mutable
reference. A naive update of refinement types leads to a wrong reasoning.
For example, consider the program in Figure~\ref{fig:consort2}.
Since the variable \(y\) is an alias of \(x\), the value of \(y\) (i.e., \(*y\))
is updated to \(1\) on the third line, and thus the assertion on the
last line should fail. The type of \(y\) in the comment on the third line
has not been updated, so a naive type system may wrongly judge that
the program is safe. One way to correctly maintain refinement types
would be to keep track of alias information; in the case of this example,
if a type system keeps information that \(y\) is an alias of \(x\), then
it can update the types of \(x\) and \(y\) together on the third line.
It would, however, be difficult in general,
as complete alias information is not always available statically.

\begin{figure*}
  \begin{align*}
&    \letexp{x}{\mkref{0}} && \mbox{// }x\COL \Tint{\nu}{\nu=0}\ \mathbf{ref}^1\\
&    \letexp{y}{x} && \mbox{// }x\COL \Tint{\nu}{\nu=0}\ \mathbf{ref}^1, y\COL\Tint{\nu}{\nu=0}\ \mathbf{ref}^0\\
&    x := 1; && \mbox{// }x\COL \Tint{\nu}{\nu=1}\ \mathbf{ref}^1, y\COL\Tint{\nu}{\nu=0}\ \mathbf{ref}^0\\
&    assert(*y = 0); && \mbox{// wrong}
  \end{align*}
  \caption{A Buggy \consort{} Program}
  \label{fig:consort2}
\end{figure*}
\begin{figure*}[tbh]
  \begin{align*}
&    \letexp{x}{\mkref{0}} && \mbox{// }x\COL \Tint{\nu}{\nu=0}\ \mathbf{ref}^1\\
&    \letexp{y}{x} && \mbox{// }x\COL \Tint{\nu}{\nu=0}\ \mathbf{ref}^1, y\COL\Tint{\nu}{\nu=0}\ \mathbf{ref}^0\\
&    x := 1; && \mbox{// }x\COL \Tint{\nu}{\nu=1}\ \mathbf{ref}^1, y\COL\Tint{\nu}{\nu=0}\ \mathbf{ref}^0\\
    &    \aliasexp{x}{y} && \mbox{// }x\COL \Tint{\nu}{\nu=1}\ \mathbf{ref}^{0.5},
      y\COL\Tint{\nu}{\nu=1}\ \mathbf{ref}^{0.5}\\
&    \assertexp{*y = 1}; && \mbox{// ok}
  \end{align*}
  \caption{A Correct \consort{} Program}
  \label{fig:consort3}
\end{figure*}

To address the issue above, \consort{} combines fractional
ownerships with refinement types.
The type of a pointer to a value of type \(\tau\)
is of the form \(\tau\;\mathbf{ref}^o\), where \(o\), called an ownership,
ranges over \([0,1]\). The type system of \consort{} enforces that
(i) if there are multiple aliases to a reference cell,
the sum of their ownerships is no greater than \(1\),
(ii) a pointer can be used for update only when its ownership is \(1\)
(so that even if other aliases exist, their ownerships are \(0\)),
and (iii) when a value is read through a pointer with ownership \(0\),
information carried by the refinement type is not trusted.
In the case of the program in Figure~\ref{fig:consort2},
the types indicated in the comments follow the constraints described above.
On line 2, the sum of ownerships is kept to \(1+0=1\);
on line 3, \(x\) can be updated because the ownership of \(x\) is \(1\);
on line 4, as the type of \(y\) is \(\Tint{\nu}{\nu=0}\ \mathbf{ref}^0\),
the refinement type \(\Tint{\nu}{\nu=0}\) of \(*y\) is not trusted;
hence, the wrong conclusion \(*y=0\) is avoided.

Figure~\ref{fig:consort3} shows a variation of the program in
Figure~\ref{fig:consort2}, where the assertion on the last line has
been replaced with \(*y=1\).
The expression \(\aliasexp{x}{y}\) assumes that \(x\) and \(y\) are
aliases; it is provided as a hint to allow the type system
to merge the ownerships and type information of \(x\) and \(y\)
and redistribute them between the types of \(x\) and \(y\).
In this case, 0.5 ownership is redistributed to \(x\) and \(y\),
and the refinement type information \(\Tint{\nu}{\nu=1}\) is propagated to
\(y\) (\(\Tint{\nu}{\nu=0}\) is discarded, as the refinement type
with \(0\) ownership cannot be trusted, as described above).
Based on the type information, we can conclude that the assertion on
the last line is correct. The \(\ALIAS\) expressions can be
automatically inserted by static must-alias analysis, or
provided by a programmer as hints.

\paragraph{Our Proposal in This Paper.}
We extend the type system of \consort{} to support pointer arithmetic (i.e.,
pointer addition and subtraction; other operations such as XOR operations and casting from/to integers
are out of scope).
We also generalize refinement types of arrays (or a chunk of memory
consisting of consecutive memory addresses); \consort{} supported only
integer arrays (not arrays of pointers), and
did not allow an array to be split into multiple arrays.

\newcommand\INIT{\mathit{init}}
\begin{figure*}[tbh]
  \begin{align*}
      &\INIT(x, p) \{\ 
    \mbox{// } \INIT\COL
    \Tfunarg{x\COL\INT, p\COL \Tref{i}{\Tint{\nu}{\TRUE}}{\set{[0,x-1]\mapsto 1}}} 
      \to  \Tfunret{x\COL\INT, p\COL \Tref{i}{\Tint{\nu}{\nu=0}}{\set{[0,x-1]\mapsto 1}}}{\INT}\\
    & \quad
      \begin{aligned}
       &\ifexp{x}{1} &&\\
       & p := 0;
         &&\mbox{// } p\COL \Tref{i}{\Tint{\nu}{i=0\Imp\nu=0}}{\set{[0,x-1]\mapsto 1}}\\
       &\letexp{q}{p+1}
        &&\mbox{// } p\COL \Tref{i}{\Tint{\nu}{\nu=0}}{\set{[0,0]\mapsto 1}},
                          q\COL \Tref{i}{\Tint{\nu}{\TRUE}}{\set{[0,x-2]\mapsto 1}}\\
       & \INIT(x-1,q);
        &&\mbox{// } p\COL \Tref{i}{\Tint{\nu}{\nu=0}}{\set{[0,0]\mapsto 1}}, q\COL \Tref{i}{\Tint{\nu}{\nu=0}}{\set{[0,x-2]\mapsto 1}}\\
       & \aliasexp{q}{p+1}
        &&\mbox{// } p\COL \Tref{i}{\Tint{\nu}{\nu=0}}{\set{[0,x-1]\mapsto 1}}, q\COL \Tref{i}{\Tint{\nu}{\TRUE}}{\emptyset}\\
       & 0\quad \} &&
      \end{aligned}
  \end{align*}
  \caption{A Motivating Example}
  \label{fig:example1}
\end{figure*}

As a motivating example, consider the program in Figure~\ref{fig:example1},
which defines a recursive function \(\INIT\).
The function \(\INIT\) takes an integer \(n\) and a pointer \(p\)
as arguments, and initializes consecutive \(n\) memory cells
from \(p\) (i.e., \(p[0],\ldots,p[n-1]\)) to \(0\), by
first setting \(p[0]\) to \(0\), and then by calling \(\INIT\) recursively for \(p+1\).
We generalize the type of a pointer so that a pointer may have an ownership
for multiple memory cells. The extended pointer type is of the form
\(\Tref{i}{\tau}{r}\), where the ownership \(r\) is now a \emph{function}
from the set of relative addresses to the set of rational numbers in \([0,1]\).
For example, if a pointer \(p\) has
the ownership \(\set{[0,n-1]\mapsto 1}\) (which represents the function
\(r\) such that \(r(i)=1\) if \(0\le i\le n-1\) and \(r(i)=0\) otherwise),
then it has the full ownership to read/update
 \(p[0],\ldots,p[n-1]\).
The part \(\tau\) of \(\Tref{i}{\tau}{r}\), which may depend on \(i\),
describes the type of the value stored in \(p[i]\).
For example,
\(\Tref{i}{\Tint{\nu}{i=0\Imp\nu=0}}{\set{[0,n-1]\mapsto 1}}\)
describes the type of a pointer to an integer array of size \(n\),
whose first element is \(0\).

In Figure~\ref{fig:example1}, the type of the function \(\INIT\) on
the first two lines describes that the function takes
an integer \(x\) and a pointer \(p\) with the full ownerships for
\(p[0],\ldots,p[x-1]\), 
and returns an integer in a state where the values of \(
p[0],\ldots,p[x-1]\) have been updated to \(0\).\footnote{
Since there are no ownerships for \(i\not\in [0,x-1]\),
there is an implicit assumption
\(0\le i\le x-1\) for the refinement condition \(\nu=0\).}
Here the type on the right-hand side of the arrow not only describes the return type of the function, namely \( \INT \), but also the types of the arguments after the function call.
On the fourth line, \( \mathbf{ifnp} \) checks if \( x \) is ``not positive'', and if that is the case the function returns status 1.
On the fifth line, the refinement condition is updated to
\(i=0\Imp \nu=0\), since the value of (the cell pointed to by) \(p\) has been updated.
On the next line, a new pointer \(q\) is created and the ownerships for
\(p[1],\ldots,p[x-1]\) have been passed to \(q\).
On the seventh line, \(\INIT\) is recursively called
and the type of \(q\) is updated.
On the eighth line, the types of \(p\) and \(q\) are merged into the type of \(p\).
The alias command \(\ALIAS(q=p+1)\) can be automatically inserted in this case
by a simple must-alias analysis.

We have formalized a new type system based on the ideas described above.
Based on the type system, we have also implemented a prototype
tool for automated verification of programs with pointer arithmetic.
The tool takes simply typed programs as inputs and infers the ownership functions and the refinement predicates.
If the inference succeeds, it implies that no assertion failures will occur by the soundness of our type system.
Note that the comments in Figure~\ref{fig:example1} were shown merely for the sake of explanation, and these are not manual annotations.

The rest of this paper is organized as follows.
Section~\ref{sec:lang} introduces the target language.
Section~\ref{sec:type} introduces our type system and proves its soundness.
Section~\ref{sec:infer} describes a type inference procedure 
and our prototype tool.
Section~\ref{sec:experiment} reports experimental results.
Section~\ref{sec:related} discusses related work and
Section~\ref{sec:conc} concludes the paper.

\section{Target Language}
\label{sec:lang}

This section defines the syntax and operational semantics
of the target language of our verification. It is a simple imperative language with
pointer arithmetic.
\subsection{Syntax}

The sets of \emph{definitions}, \emph{expressions}, and \emph{programs},
ranged over by \(d\), \(e\), and \(P\) respectively, are defined by:
\begin{align*}
	d&::=f\mapsto(x_1,\dots,x_n)e\\
	e&::=n\mid x\mid\letexp{x}{e_1}e_2\\
	 &\mid\ \ifexp{x}{e_1}{e_2}\mid\letexp{x}{f(y_1,\dots ,y_n)}e\\
	 &\mid\ \mkarrexp{x}{n}{e}\mid\letexp{x}{*y}{e} \mid x:=y;e \\
   &\mid\letexp{x}{y\pplus z}{e} \mid \letexp{x}{y\ \OP\ z}e\\
	 &\mid\ \aliasexp{x}{*y}e\mid\aliasexp{x}{y\pplus z}e\\
	 &\mid\ \assertexp{\form};e \\
  P&::=\langle\set{d_1,\dots,d_n},e\rangle.
\end{align*}
\noindent
Here, the meta-variables \(x, y, z, ...\) denote variables, and \(f\) denotes a function name.
The meta-variable \(\form\) denotes a formula of the first-order 
logic over integer arithmetic.
The meta-variable \(\OP\) ranges over a set of operators on integers such as \(+\) and \(-\).

We informally explain the meaning of expressions; the formal semantics is given
in Section~\ref{sec:os}.
The expression \(\ifexp{x}{e_1}{e_2}\) reduces to \(e_1\) if \(x\le 0\), %
and reduces to \(e_2\) otherwise.
The expression \(\mkarrexp{x}{n}{e}\) allocates a memory region of
size \(n\) (where the size denotes the number of words, rather than bytes),
binds \(x\) to the pointer to it, and evaluates \(e\).
The term \(*y\) denotes a dereference of the pointer \(y\),
and the term \(y\pplus z\) denotes a pointer arithmetic; we use \(\pplus\) instead
of \(+\) to avoid the confusion with the integer operator \(+\).
The alias expressions 
\(\aliasexp{x}{*y}e\) and \(\aliasexp{x}{y\pplus z}e\)
check whether \(x=*y\) and \(x=y\pplus z\) hold, respectively, and proceed to evaluate \(e\) if that is the case;
otherwise they stop the execution of the whole program.
The assert expression \(\assertexp{\form};e\) checks whether \(\form\) is true,
and if so, proceeds to evaluate \(e\); otherwise, the program is aborted with an
(assertion) error.
The purpose of our type-based verification is to check that no assertion error occurs.

Compared with the language of \consort{}~\cite{DBLP:conf/esop/TomanSSI020},
we have added the pointer addition \(y\pplus z\), and replaced the primitive
\(\letexp{x}{\mkref{y}}{e}\) for creating a single reference cell
with the primitive \(\mkarrexp{x}{n}{e}\) for creating a contiguous memory region of arbitrary size.
While this may seem to be a small extension, it imposes a significant challenge
in the design of the type system and type inference discussed in the following sections.
In particular, as discussed in Section~\ref{sec:intro},
the ownership of a pointer needs to be extended from a single number (representing
the ownership of the single cell referred to by the pointer) to a function that maps each
relative address to a number (representing the ownership for the cell referred to by the relative address).

\begin{example}
  \label{ex:init}
  Recall the motivating example in Figure~\ref{fig:example1}.
  It is expressed as the function definition \(d = \INIT\mapsto (x,p)e_1\), where
  \(e_1\) is:
  \begin{align*}
    &\ifexp{x}{1}   \\&p:=0; %
    \letexp{j}{1} %
    \letexp{q}{p\pplus j} %
    \letexp{y}{x-j} \\    &
    \letexp{z}{\INIT(y,q)} \aliasexp{q}{p\pplus j}0
  \end{align*}
  Let \(P = (\set{d},e_2)\) where \(e_2\) is
  \begin{align*}
&    \mkarrexp{a}{3}\letexp{u}{\INIT(3,a)}\\
&    \letexp{k}{2}\letexp{q}{a\pplus k}\letexp{x}{*q}\assertexp{x=0};0.
  \end{align*}
  Then the assertion \(\assertexp{x=0}\) should succeed.
  If we replace \(\INIT(3,a)\) with \(\INIT(2,a)\), then the assertion may fail.
    \qed
\end{example}

\begin{remark}
  As explained in Section~\ref{sec:intro}, alias expressions are a kind
  of program annotations to allow the type system to redistribute ownership
  and refinement type information. Following the work of \consort{},
  to keep the type system simple,
  we assume that programs are annotated with proper alias expressions in Sections~\ref{sec:lang}--\ref{sec:infer}.
  It is not difficult to insert those annotations automatically by using  a  must-alias analysis.
  According to our inspection of the \consort{} benchmark\footnote{\url{https://github.com/SoftwareFoundationGroupAtKyotoU/consort.}}
  and our own experiments, a straightforward, extremely naive local must-alias analysis (such as the one
  that just inserts \(\ALIAS({p}={q\pplus n})\) in the body of \(\letexp{p}{q\pplus n}{\cdots}\)) usually suffices.
  In fact, as described in Section~\ref{sec:experiment}, alias expressions are automatically inserted in our prototype tool,
  so that programmers need not provide alias annotations. \qed
\end{remark}

\begin{remark}
  We consider only pointer addition (and subtraction through the addition of a negative integer) as pointer operations
  for the sake of simplicity. It would not be difficult to extend our method to support pointers to structures.
  It would, however, be difficult to deal with the XOR operation supported
  in some of the semi-automated verification methods~\cite{DBLP:conf/pldi/SammlerLKMD021}. \qed
\end{remark}
\subsection{Operational Semantics}
\label{sec:os}
This section  defines the operational semantics of our language.
A \emph{configuration} is a triple \(\sconfig{R}{H}{e}\).
Here,
\(R\), called a \emph{register file}, is a map from a finite set of variables
to the set \(V\) consisting of integers and pointer addresses,
where a pointer address is a pair \(\Addr{a}{i}\) of integers;
\(a\) denotes the absolute address of a base pointer
and \(i\) is the relative address.
A \emph{heap} \(H\) is a map from a finite set of
pointer addresses to \(V\).
The reduction relation \(\redD\) for configurations is given 
in Figure~\ref{fig:lang}.
In the figures, \(E\) ranges over the set of evaluation contexts, defined by:
\(E::=[\,]\mid\letexp{x}{E}e\).

As most of the rules should be self-explanatory, we only explain a few rules.
The rule \rn{R-MkArray} is for allocating a memory region.
It picks a fresh address \(a\) (which is represented as a label,
rather than an integer; thus, memory regions with different base pointers never overlap),
and allocates a region consisting of
the addresses from \(\Addr{a}{0}\) to \(\Addr{a}{n-1}\).
Initial values stored in those memory cells can be arbitrary.
In the rules \rn{R-AliasDeref} and \rn{R-AliasDerefFail},
it is checked whether \(x=*y\) holds. If that does not hold,
the configuration is reduced to \(\ALIASFAIL\). Note
that we distinguish it from \(\ASSERTFAIL\) in the rule
\rn{R-AssertFail}. The goal of our verification is to check that
\(\ASSERTFAIL\) does not occur, assuming that 
the alias annotations are correct (i.e., \(\ALIASFAIL\) never occurs).
In the rule \rn{R-Assert}, \([R]\form\) represents the formula obtained
by replacing each variable \(x\in\dom(R)\) in \(\form\) with \(R(x)\).

\begin{figure*}
\begin{multicols}{2}
\infrule[R-Context]
    {\sconfig{R}{H}{e} \redD \sconfig{R'}{H'}{e'}}
    {\sconfig{R}{H}{E[e]} \redD \sconfig{R'}{H'}{E[e']}}

\infrule[R-LetInt]
    {x'\notin\dom(R)}
    {\sconfig{R}{H}{\letexp{x}{n}e} \\\redD \sconfig{R\set{x'\mapsto n}}{H}{[x'/x]e}}

\infrule[R-LetVar]
    {x'\notin\dom(R)}
    {\sconfig{R}{H}{\letexp{x}{y}e} \\\redD \sconfig{R\set{x'\mapsto R(y)}}{H}{[x'/x]e}}

\infrule[R-IfTrue]
    {R(x)\le 0}
    {\sconfig{R}{H}{\ifexp{x}{e_1}{e_2}}\\ \redD \sconfig{R}{H}{e_1}}

\infrule[R-IfFalse]
    {R(x)>0}
    {\sconfig{R}{H}{\ifexp{x}{e_1}{e_2}} \\\redD \sconfig{R}{H}{e_2}}

\infrule[R-Deref]
        {%
          x'\notin\dom(R)}
    {\sconfig{R}{H}{\letexp{x}{*y}e} \\\redD \sconfig{R\set{x'\mapsto
          H(R(y))}}{H}{[x'/x]e}}

\infrule[R-Assign]
    {R(x)=\Addr{a}{i}\andalso \Addr{a}{i}\in\dom(H)}
    {\sconfig{R}{H}{x:=y;e} \\\redD \sconfig{R}{H\set{\Addr{a}{i}\update R(y)}}{e}}

\infrule[R-AddPtr]
    {R(y)=\Addr{a}{i}\andalso  x'\notin\dom(R)}
    {\sconfig{R}{H}{\letexp{x}{y\pplus z}e}\\ \redD \sconfig{R\set{x'\mapsto \Addr{a}{i+R(z)}}}{H}{[x'/x]e}}

    \infrule[R-MkArray]
        {\Addr{a}{0}\notin\dom(H)\andalso x'\notin\dom(R)\\
        H'=H\set{\Addr{a}{0}\mapsto m_0}\cdots\set{\Addr{a}{n-1}\mapsto m_{n-1}}}
    {\sconfig{R}{H}{\letexp{x}{\mkarray\;n}e} \redD\\
     \sconfig{R\set{x'\mapsto \Addr{a}{0}}}{H'}{[x'/x]e}}

\infrule[R-Call]
    {f\mapsto (x_1,\dots ,x_n)e\in D}
    {\sconfig{R}{H}{\letexp{x}{f(y_1,\dots ,y_n)}e'} \\\redD \sconfig{R}{H}{\letexp{x}{[y_1/x_1]\cdots[y_n/x_n]e}e'}}

\infrule[R-AliasDeref]
    {H(R(y))=R(x)}
    {\sconfig{R}{H}{\aliasexp{x}{*y}e} \redD \sconfig{R}{H}{e}}

\infrule[R-AliasAddPtr]
    {R(x)=\Addr{a}{i+R(z)}\andalso R(y)=\Addr{a}{i}}
    {\sconfig{R}{H}{\aliasexp{x}{y\pplus z}e} \redD \sconfig{R}{H}{e}}

\infrule[R-AliasDerefFail]
    {H(R(y))\ne R(x)}
    {\sconfig{R}{H}{\aliasexp{x}{*y}e} \redD \ALIASFAIL}

\infrule[R-AliasAddPtrFail]
    {R(x)=\Addr{a_1}{i_1}\andalso R(y)=\Addr{a_2}{i_2}\\ a_1\not=a_2\lor i_1\not=i_2+R(z)}
    {\sconfig{R}{H}{\aliasexp{x}{y\pplus z}e} \redD \ALIASFAIL}

\infrule[R-Assert]
    {\models [R]\form}
    {\sconfig{R}{H}{\assertexp{\form};e} \redD \sconfig{R}{H}{e}}

\infrule[R-AssertFail]
    {\not\models [R]\form}
    {\sconfig{R}{H}{\assertexp{\form};e} \redD \ASSERTFAIL}
\end{multicols}
    \caption{Reduction Rules}
    \label{fig:lang}

\end{figure*}

\begin{remark}
  A run-time configuration in \consort{}~\cite{DBLP:conf/esop/TomanSSI020}
  was expressed as a quadruple consisting of a register \(R\), a heap \(H\), a stack frame, and an expression \(e\).
  We have removed the stack frame for the sake of simplicity and revised the rules accordingly. \qed
\end{remark}

\section{Type System}
\label{sec:type}
This section introduces our ownership refinement type system.

\subsection{Types}

The syntax of types (ranged over by \( \ty \)) and function types (ranged over by \( \funty \) ) is given by:
\begin{align*}
    \ty&::=\Tint{\nu}{\form}\mid \Tref{i}{\ty}{r}\\
    \funty&::=\langle x_1:\ty_1, \dots ,x_n:\ty_n\rangle\rightarrow\langle x_1:\ty_1',\dots ,x_n:\ty_n'\mid \ty\rangle
\end{align*}

\noindent
Here, 
the meta-variable \(\form\) ranges over the set of formulas of first-order
logic with integer arithmetic and \(r\) ranges over
the set of functions (called \emph{ownership functions})
from integers to rational numbers in the range of
\([0,1]\).

The type \(\Tint{\nu}{\form}\) describes
an integer \(\nu\) that satisfies \(\form\);
we often call \(\form\) a \emph{refinement predicate}. For example,
\(\Tint{\nu}{\nu>0}\) is the type of positive integers.
We sometimes just write \(\INT\) for \(\Tint{\nu}{\top}\),
where \(\top\) denotes \(\TRUE\).
The type \(\Tref{i}{\ty}{r}\) describes a pointer \(p\) to a memory region, 
where the pointer has \(r(i)\) ownership for the address \(p+i\) and
\(\ty\) (which may depend on \(i\)) describes the type of the value stored at
\(p+i\). The ownership \(r(i)\) ranges over \([0,1]\).
If \(r(i)=1\), the cell at \(p+i\) can be read/written through
the pointer. If \(0<r(i)<1\), then the cell at \(p+i\) can only be read;
and if \(r(i)=0\), then the cell at \(p+i\) cannot be accessed at all through the
pointer \(p\).
For example, \(\Tref{i}{\Tint{\nu}{0\le i\le 9 \Imp \nu=i}}{r}\)
with \(r(i)=1\) for \(i\in\set{0,\ldots,9}\) and \(r(i)=0\) for
\(i\in \Z\setminus\set{0,\ldots,9}\) describes (a pointer to the first address of)
a mutable array of size \(10\) whose \(i\)-th value is \(i\).

We often write \(\set{[m,n]\mapsto u}\) for the ownership function \(r\)
such that \(r(i)=u\) if \(m\le i\le n\) and \(r(i)=0\) otherwise.
In this section, we allow an ownership function to be
any mathematical function in \(\Z\to [0,1]\), but we will restrict
the shape of ownership functions to those of the form
\(\set{[m,n]\mapsto u}\) in Section~\ref{sec:infer} for
  the purpose of type inference.

  A function type is of
  the form \(\langle x_1:\ty_1, \dots ,x_n:\ty_n\rangle\rightarrow\langle x_1:\ty_1',\dots ,x_n:\ty_n'\mid \ty\rangle\),
  where \(\ty_i\) and \(\ty'_i\) are the types of the \(i\)-th argument
  respectively \emph{before} and \emph{after} a function call, and
  \(\ty\) is the type of the return value.
  For example, as given in Section~\ref{sec:intro}, the type:
  \begin{align*}
&      \Tfunarg{x\COL\INT, p\COL \Tref{i}{\INT}{\set{[0,x-1]\mapsto 1}}}\\&
    \to 
    \Tfunret{x\COL\INT, p\COL \Tref{i}{\Tint{\nu}{\nu=0}}{\set{[0,x-1]\mapsto 1}}}
            {\INT}
  \end{align*}
  describes a function that takes an integer \(x\) and
  (a pointer to) an array of size \(x\) with the full ownership,
  sets all the elements to \(0\), and returns an integer.

\begin{remark}
  Here we summarize the difference from the types of \consort{}~\cite{DBLP:conf/esop/TomanSSI020}.
  In \consort{}, the type of a pointer is of the form \(\tau\; \mathbf{ref}^r\) where
  \(\tau\) is the type of a value stored in the target cell of the pointer, \(r\in[0,1]\) is
  the ownership for the cell. To allow pointer arithmetic, we needed to extend it so that both
  the \(\tau\)-part (now described as \(\lambda i.\tau\)) and \(r\)-part carry information about the memory region around
  the cell, not just the cell. \consort{} also supports an array type of the form \((\lambda i.\tau)\;\mathbf{array}^r\)
  (though not formalized in the paper~\cite{DBLP:conf/esop/TomanSSI020})
  but the ownership \(r\in[0,1]\) is a single value that expresses the ownership for the entire array, and the decomposition
  of the ownership for subarrays (which is crucial for the support of pointer arithmetic) is not allowed.
  \consort{} supports context polymorphism; we have omitted it for the sake of simplicity, as it is orthogonal to our extension
  for supporting pointer arithmetic. \qed
\end{remark}
  
\subsection{Typing Rules}
A type judgment for expressions is
of the form \(\FE\mid\TE\p e:\ty\To\TE'\), where \(\FE\), called
a function type environment, is a map from variables to function types,
and \(\TE\), called a type environment, is a sequence of type bindings of
the form \(x\COL\ty\). We sometimes view the type environment as a map from variables
to types, and implicitly allow re-ordering of type bindings (subject to
the well-formedness condition below). We write \(\dom(\TE)\) for the domain of the map \(\TE\).
The judgment
\(\FE\mid\TE\p e:\ty\To\TE'\) means that the expression \(e\) is well-typed under \(\FE\) and \(\TE\), 
and evaluates to a value of the type \(\ty\), modifying \(\TE\) to \(\TE'\).
We implicitly require the following well-formedness conditions on types, type environments,
and type judgments. 
First, the variables that occur in refinement predicates and ownership functions
must be bound as integer variables. For example,
the type \(\Tref{i}{\Tint{\nu}{\nu>x}}{\set{[0,x-1]\mapsto 1}}\)
is well formed under the type environment \(x\COL\INT\) but
not under the empty type environment \(\emptyTE\).
Second, for each pointer type \(\Tref{i}{\ty}{r}\),
we require that \(r(i)=0\) implies \(\Empty(\ty)\), where 
\(\Empty\) is inductively defined by:
(i) \(\Empty(\Tint{\nu}{\form})\) if \(\form\) is equivalent to \(\TRUE\),
and (ii) \(\Empty(\Tref{i}{\ty}{r})\) if
\(\Empty(\ty)\) and \(r(i)=0\) for every \(i\in\Z\).
We write \(\TE\pWF \ty\) if \(\ty\) is well-formed under \(\TE\);
the formal definition is given in \Appendix{sec:wf}.

The typing rules for expressions and programs and auxiliary rules for
subtyping are given in Figures~\ref{fig:type1}--\ref{fig:type3}.
In the rules,
\(\TE[x:\ty]\) represents a type environment \(\TE\)
such that \(\TE(x)=\ty\).
We write
\(\TE, x:\ty\) for the extension of \(\TE\) (where \(x\not\in\dom(\TE)\)) with the binding \(x:\ty\), 
and \(\TE[x\update \ty]\) for the type environment obtained from
\(\TE\) by updating the binding of \(x\) to \(\ty\).
The empty environment is represented as \(\emptyTE\).

\begin{figure*}

    \begin{multicols}{2}
    
    \infrule[T-Int]
        {}{\FE\mid\TE\p n:\Tint{\nu}{\nu=n}\To\TE}

    \infrule[T-Let]
        {\\
         \FE\mid\TE\p e_1:\ty_1\To\TE_1\ \, x\notin\dom(\TE')\\
         \FE\mid\TE_1,x:\ty_1\p e_2:\ty\To\TE'}
        {\FE\mid\TE\p\letexp{x}{e_1}e_2:\ty\To\TE'}
    
    \infrule[T-Var]
        {}{\FE\mid\TE[x:\ty_1+\ty_2]\p x:\ty_1\To\TE[x\update\ty_2]}

    \infrule[T-MkArray]
        {r(i)=\begin{cases}
                1&0\leqq i\leqq n-1\\
                0&otherwise
              \end{cases}\\
         x\notin\dom(\TE')\andalso
         Empty(\ty')\\
         \FE\mid\TE,x: \Tref{i}{\ty'}{r}\p e:\ty\To\TE'}
        {\FE\mid\TE\p\mkarrexp{x}{n}e:\ty\To\TE'}
    
    \end{multicols}
    
    \infrule[T-If]
        {\FE\mid\TE[x\update\Tint{\nu}{\form\land\nu\le 0}]\p e_1:\ty\To\TE'\\
         \FE\mid\TE[x\update\Tint{\nu}{\form\land\nu>0}]\p e_2:\ty\To\TE'}
        {\FE\mid\TE[x:\Tint{\nu}{\form}]\p\ifexp{x}{e_1}e_2:\ty\To\TE'}

    \infrule[T-Call]
        {\\
         \FE(f)=\langle x_1:\ty_1, \dots ,x_n:\ty_n\rangle\rightarrow\langle x_1:\ty_1',\dots ,x_n:\ty_n'\mid \ty\rangle \andalso
         \sigma=[y_1/x_1]\cdots[y_n/x_n]\\
         \FE\mid\TE[y_i\update\sigma\ty_i'],x:\sigma\ty\p e:\ty'\To\TE'\andalso x\notin\dom(\TE') }
        {\FE\mid\TE[y_i:\sigma\ty_i]\p\letexp{x}{f(y_1,\dots,y_n)}e:\ty'\To\TE'}

    \infrule[T-Deref]
        {i:\Tint{\nu}{\nu=0}\p \ty'+\ty_x\approx\ty_y, \ty_y'\approx\ty'\land_y(y=_{\ty'}x)\\
         i:\Tint{\nu}{\nu\ne 0}\p \ty_y'\approx\ty_y\\
         \FE\mid\TE[y\update\Tref{i}{\ty_y'}{r}],x:\ty_x\p e:\ty\To\TE'\andalso x\notin\dom(\TE')\andalso r(0)>0}
        {\FE\mid\TE[y:\Tref{i}{\ty_y}{r}]\p\letexp{x}{*y}e:\ty\To\TE'}
    
    \infrule[T-Assign]
        {i:\Tint{\nu}{\nu=0}\p \ty_x'\approx\ty_2\land_x(x=_{\ty_2}y)\andalso
         i:\Tint{\nu}{\nu\ne 0}\p \ty_x'\approx\ty_x\\
         \FE\mid\TE[x\update\Tref{i}{\ty_x'}{r}][y\update\ty_1]\p e:\ty\To\TE'\andalso r(0)=1}
        {\FE\mid\TE[x:\Tref{i}{\ty_x}{r}][y:\ty_1+\ty_2]\p x:=y;e:\ty\To\TE'}
        
    \infrule[T-AddPtr]
        {x\notin\dom(\TE')\andalso \forall i\in\Z.r_x(i-z)+r_y'(i)=r_y(i)\\
         \FE\mid\TE[y\update\Tref{i}{\ty_1}{r_y'}],x:\Tref{i}{[(i+z)/i]\ty_2}{r_x},z:\Tint{\nu}{\form}\p e:\ty\To\TE'}
        {\FE\mid\TE[y:\Tref{i}{(\ty_1+\ty_2)}{r_y}][z:\Tint{\nu}{\form}]\p\letexp{x}{y\pplus z}e:\ty\To\TE'}

    \infrule[T-AliasDeref]
        {i_2:\Tint{\nu}{\nu=0} \p (\Tref{i_1}{\ty_x}{r_x}+\Tref{i_1}{\ty_y}{r_y})\approx(\Tref{i_1}{\ty_x'}{r_x'}+\Tref{i_1}{\ty_y'}{r_y'})\\
         i_2:\Tint{\nu}{\nu\ne 0}\p \Tref{i_1}{\ty_y}{r_y}\approx\Tref{i_1}{\ty_y'}{r_y'}\\    
         \FE\mid\TE[x\update\Tref{i_1}{\ty_x'}{r_x'}][y\update\Tref{i_2}{\Tref{i_1}{\ty_y'}{r_y'}}{r}]\p e:\ty\To\TE'}
        {\FE\mid\TE[x:\Tref{i_1}{\ty_x}{r_x}][y:\Tref{i_2}{\Tref{i_1}{\ty_y}{r_y}}{r}]\p\aliasexp{x}{*y}e:\ty\To\TE'}
    
    \infrule[T-AliasAddPtr]
        {\forall i\in\Z.r_{x_z}(i)=r_x(i-z), r_{x_z}'(i)=r_x'(i-z)\\
         (\Tref{i}{[(i-z)/i]\ty_x}{r_{x_z}}+\Tref{i}{\ty_y}{r_y})\approx(\Tref{i}{[(i-z)/i]\ty_x'}{r_{x_z}'}+\Tref{i}{\ty_y'}{r_y'})\\
         \FE\mid\TE[x\update\Tref{i}{\ty_x'}{r_x'}][y\update\Tref{i}{\ty_y'}{r_y'}],z:\Tint{\nu}{\form}\p e:\ty\To\TE'}
        {\FE\mid\TE[x:\Tref{i}{\ty_x}{r_x}][y:\Tref{i}{\ty_y}{r_y}][z:\Tint{\nu}{\form}]\p\aliasexp{x}{y\pplus z}e:\ty\To\TE'}
        
    \begin{multicols}{2}
    
    \infrule[T-Assert]
        {\\
         \models\formTE{\TE}\Imp\form\andalso \TE\p_\WF\form\\
         \FE\mid\TE\p e:\ty\To\TE'}
        {\FE\mid\TE\p\assertexp{\form};e:\ty\To\TE'}
    
    \infrule[T-Sub]
        {\\
         \TE\leq\TE'\andalso \TE'',\ty\leq\TE''',\ty'\\
         \FE\mid\TE'\p e:\ty\To\TE''}
        {\FE\mid\TE\p e:\ty'\To\TE'''}
    
    \end{multicols}
    
    \caption{Expression Typing Rules}
        \label{fig:type1}
    
    \end{figure*}
    
    \begin{figure*}
    
    \begin{multicols}{2}
        
    \infrule[S-Int]
    {\models\formTE{\TE}\Imp(\form_1\Imp\form_2)}
    {\TE\p\Tint{\nu}{\form_1}\leq\Tint{\nu}{\form_2}}
    
    \infrule[S-Ref]
    {r_1\geq r_2\andalso \TE\p\ty_1\leq\ty_2}
    {\TE\p\Tref{i}{\ty_1}{r_1}\leq\Tref{i}{\ty_2}{r_2}}
    
    \infrule[S-TyEnv]
    {\forall x\in\dom(\TE').\TE\p\TE(x)\leq\TE'(x)}
    {\TE\leq\TE'}
    
    \infrule[S-Res]
    {\TE,x:\ty\leq\TE',x:\ty'\andalso x\notin\dom(\TE)}
    {\TE,\ty\leq\TE',\ty'}
    
    \end{multicols}   
    
    \centering{\(\TE\p\ty_1\approx\ty_2\) iff \(\TE\p\ty_1\leq\ty_2\) and \(\TE\p\ty_2\leq\ty_1\)}
    
    \caption{Subtyping Rules}
        \label{fig:type2}
        
    \end{figure*}
    
    \begin{figure*}
    
    \infrule[T-FunDef]
        {\FE(f)=\langle x_1:\ty_1, \dots ,x_n:\ty_n\rangle\rightarrow\langle x_1:\ty_1',\dots ,x_n:\ty_n'\mid \ty\rangle \\
         \FE\mid x_1:\ty_1,\dots ,x_n:\ty_n\p e:\ty\To x_1:\ty_1',\dots ,x_n:\ty_n'}
        {\FE\p f\mapsto(x_1,\dots ,x_n)e}
    
    \begin{multicols}{2}
        
    \infrule[T-Funs]
        {\\
         \dom(D)=\dom(\FE)\\
         \forall f\mapsto(x_1,\dots ,x_n)e\in D.\FE\p f\mapsto(x_1,\dots ,x_n)e}
        {\FE\p D}
    
    \infrule[T-Prog]
        {\\
         \FE\p D\andalso \p_\WF\FE\\
         \FE\mid\bullet\p e:\ty\To\TE}
        {\p\langle D,e\rangle}
    
    \end{multicols}
    
    \caption{Program Typing Rules}
        \label{fig:type3}
    
    \end{figure*}

We explain key rules and notations used in the rules below.
Except for the rules for new primitives, i.e.~\rn{T-MkArray}, \rn{T-AddPtr}, and \rn{T-AliasAddPtr},
the typing rules have been made to look superficially similar to those of \consort{}~\cite{DBLP:conf/esop/TomanSSI020}
by appropriately extending operations on types and ownerships.
The rule \textsc{T-Var} is for variables. 
The operation \(\ty_1+\ty_2\)
 is inductively defined by:
\begin{align*}
  \Tint{\nu}{\form_1} + \Tint{\nu}{\form_2} &= \Tint{\nu}{\form_1\land\form_2} \\
  \Tref{i}{\ty_1}{r_1} + \Tref{i}{\ty_2}{r_2} &= \Tref{i}{\ty_1+\ty_2}{r_1+r_2}
\end{align*}
Here, \(r_1+r_2\) denotes the pointwise addition of the ownerships.
Intuitively, \rn{T-Var} splits the current type \(\ty_1+\ty_2\) of \(x\)
into \(\ty_1\) and \(\ty_2\), respectively
for the expression \(x\) and the type environment
after the evaluation of the expression.
For example,
\begin{align*}
&  \FE\mid \TE
\p x:\Tref{i}{\Tint{\nu}{\form_1}}{\set{0\mapsto 1}}\\
&\quad\To
x:\Tref{i}{\Tint{\nu}{\form_2}}{\set{1\mapsto 1}}
\end{align*}
and
\begin{align*}
&  \FE\mid \TE
  \p x:\Tref{i}{\Tint{\nu}{\form}}{\set{[0,1]\mapsto 0.5}}\\
  &\quad \To
x:\Tref{i}{\Tint{\nu}{\form}}{\set{[0,1]\mapsto 0.5}}
\end{align*}
hold for \(\TE=x\COL 
\Tref{i}{\Tint{\nu}{\form}}{\set{[0,1]\mapsto 1}}\),
\(\form_1 \defeq i=0\Imp \nu=0\), \(\form_2 \defeq i=1\Imp \nu=1\), and \(\form \defeq \form_1\land\form_2\).

In the rule \textsc{T-MkArray} for memory allocation, we assign the full ownership \(1\) for
the relative addresses \(i=0,\ldots,n-1\), and the null ownership \(0\) to other addresses, to prevent
out-of-boundary memory access. In addition, we require that the types of values stored in
the allocated memory region satisfy the \(\Empty\) predicate, as those values are not initialized.
In the rule \textsc{T-Deref} for pointer deference,
we require that the ownership \(r(0)\) of the pointer \(x\) with index \(0\) is positive.
The type \(\ty_y\) of the value stored at \(y\) (with index \(0\)) is
split into \(\ty'\) and \(\ty_x\), and the type of \(x\) in \(e\) is set to \(\ty_x\).
The \emph{strengthening operation} \(\ty \land_x \form\) and
typed equality proposition \(x =_{\ty} y\) are defined by:
\iftwocol
\begin{align*}
  \Tint{\nu}{\form}\land_x\form'&=\Tint{\nu}{\form\land[\nu/x]\form'}\\
  \Tref{i}{\ty}{r}\land_x\form'&=\Tref{i}{\ty}{r}\\ 
  \quad (x=_{\Tint{\nu}{\form}}y)&=(x=y)\\
   (x=_{\Tref{i}{\ty}{r}}y)&=\top
\end{align*}
\else
\begin{alignat*}{2}
  \Tint{\nu}{\form}\land_x\form'&=\Tint{\nu}{\form\land[\nu/x]\form'} &\quad (x=_{\Tint{\nu}{\form}}y)&=(x=y)\\
  \Tref{i}{\ty}{r}\land_x\form'&=\Tref{i}{\ty}{r} &\quad (x=_{\Tref{i}{\ty}{r}}y)&=\top
\end{alignat*}
\fi
The operations \(\ty \land_x \form\) and \(x =_{\ty} y\) are always used together to propagate the information about equalities over integer variables.

In the rule \textsc{T-Assign}, the premise \(r(0)=1\)
ensures that the ownership of the pointer \(x\) (at index \(0\)) is \(1\).
The type of \(y\) is split into \(\ty_1\) for the type of \(y\) after the assignment,
and \(\ty_2\) for the type of the value stored at \(x\) with index \(0\).
In the rule \rn{T-AddPtr} for pointer arithmetic,
the type and ownership of \(y\) is split into those for \(y\) and \(x\), and the indices for
the type and ownership of \(x\) are shifted accordingly.

The rules \rn{T-AliasAddPtr} and \rn{T-AliasDeref} allow us to redistribute
the ownerships of pointers based on alias information.
These rules are analogous to the corresponding rules of \consort{}, but
more complex due to the presence of pointer arithmetic.

The rule \textsc{T-Assert} requires that the asserted formula \(\form\) must be valid
under the current context, where \(\formTE{\TE}\) is defined by:
\begin{alignat*}{2}
  \formTE{\bullet}&=\top &\quad \formTE{\Tint{\nu}{\form}}_y&=[y/\nu]\form\\ 
  \formTE{\TE, x:\ty}&=\formTE{\TE}\land\formTE{\ty}_x &\quad \formTE{\Tref{i}{\ty}{r}}_y&=\top
\end{alignat*}

In the rule \rn{T-Sub}, the premise \(\TE\le \TE'\) means that \(\TE\) is a stronger type assumption
than \(\TE'\) (for example, \(x\COL \Tint{\nu}{\nu=1} \le x\COL \Tint{\nu}{\nu>0}\)).
The rule allows us to strengthen the type environment before an execution of the expression,
and weakens the type and type environment after the execution.

Figure~\ref{fig:type3} gives typing rules for function definitions and programs.
The rule \rn{T-FunDef} checks that the function definition is consistent with
the assumption \(\FE\) on the types of functions.
The conclusion of the rule \rn{T-Funs} means that the function definitions \(D\)
provides a function environment as described by \(\FE\). The rule \rn{T-Prog} is
for the whole program, which checks that \(e\) is well-typed under the function
type environment \(\FE\) provided by \(D\).

\begin{example}
  Recall Example~\ref{ex:init}.
  The else-part of \(e_1\) is typed as shown in Figure~\ref{fig:ex:init}.
  \begin{figure*}
\begin{align*}
  & &&\mbox{// } p:\Tref{i}{\Tint{\nu}{\TRUE}}{\set{[0,x-1]\mapsto 1}}\\
  &p:=0; &&\mbox{// } p:\Tref{i}{\Tint{\nu}{i=0\Imp\nu=0}}{\set{[0,x-1]\mapsto 1}}\\
  &\letexp{j}{1} &&\mbox{// } j:\Tint{\nu}{\nu=1}\\
  &\letexp{q}{p\pplus j} &&\mbox{// } p:\Tref{i}{\Tint{\nu}{i=0\Imp\nu=0}}{\set{0\mapsto 1}}, %
  q:\Tref{i}{\Tint{\nu}{\top}}{\set{[0,x-2]\mapsto 1}}\\
  &\letexp{y}{x-j} && \mbox{// } y:\Tint{\nu}{\nu=x-1}\\
  &\letexp{z}{\mathit{init}(y,q)} &&\mbox{// }p:\Tref{i}{\Tint{\nu}{i=0\Imp\nu=0}}{\set{0\mapsto 1}}, %
  q:\Tref{i}{\Tint{\nu}{0\le i<x-2 \Imp \nu=0}}{\set{[0,x-2]\mapsto 1}}\\
  & \aliasexp{q}{p\pplus j}&&\mbox{// }p:\Tref{i}{\Tint{\nu}{0\le i\le x-1\Imp\nu=0}}
  {\set{[0,x-1]\mapsto 1}}
\end{align*}
\caption{Typing for Example~\ref{ex:init}}
\label{fig:ex:init}
\end{figure*}
\noindent
Here, we have omitted some of the type bindings (such as \(x\COL\INT\)).
The first line indicates that there is initially no information on the value stored at \(p\).
After the assignment \(p:=0\), the type of the value stored at \(p\) is updated accordingly.
After \(\mathbf{let}\ q=p\pplus j\),  the ownership held by \(p\) is split into
\(\set{0\mapsto 1}\) and \(\set{[1,x-1]\mapsto 1}\), and the latter is passed to \(q\).
By using the rule \rn{T-Call}, and assuming that \(\mathit{init}\) has the type given
in Figure~\ref{fig:example1}, the type of \(q\) is updated. On the last line,
the types of \(p\) and \(q\) are merged into the type of \(p\), by using \rn{T-AliasAddPtr}
(and the rule \rn{T-Sub} for subsumption).
\qed
\label{ex:typeinit}
\end{example}

\begin{example}
As mentioned, our type system supports arrays of pointers, which are not supported by \consort{}.
An example of a program manipulating such an array is given in Figure~\ref{fig:exampleMat}.
The function \( \mathit{initMatrix} \) takes integers \( x \), \( y \) and a \( x \times y \) matrix \( p \) (represented as a pointer of a pointer) and initializes it as a zero matrix by using the running example \( \mathit{init} \).
Some important type information is given as comments to help readers understand how \( \mathit{initMatrix} \) is typed.
\qed
\end{example}

\begin{figure*}
  \begin{align*}
    &\mbox{// } \mathit{initMatrix}\COL
      \Tfunarg{x\COL\Tint{\nu}{\top}, y\COL\Tint{\nu}{\top}, p\COL \Tref{i}{\Tref{j}{\Tint{\nu}{\top}}{\set{[0,y-1]\mapsto 1}}}{\set{[0,x-1]\mapsto 1}}}\\
    &\mbox{// }\qquad\qquad\to
      \Tfunret{x\COL\Tint{\nu}{\top}, y\COL\Tint{\nu}{\top}, p\COL \Tref{i}{\Tref{j}{\Tint{\nu}{\form_p}}{\set{[0,y-1]\mapsto 1}}}{\set{[0,x-1]\mapsto 1}}}{\Tint{\nu}{\top}}\\
    &\mbox{// }\qquad\qquad\qquad\qquad \form_p=0\leq i \leq x-1 \land 0\leq j \leq y-1 \Imp \nu=0 \\
    &\mathit{initMatrix}(x, y, p) \{\\
    &\quad
    \begin{aligned}
      &\ifexp{x}{1} && \\
      &\letexp{q}{*p} && \mbox{// } q: \Tref{j}{\Tint{\nu}{\TRUE}}{\set{[0,y-1]\mapsto 1}} \\
      &\letexp{z}{\INIT(y,q)} && \mbox{// } q: \Tref{j}{\Tint{\nu}{0\leq j \leq y-1\Imp \nu=0}}{\set{[0,y-1]\mapsto 1}} \\
      &\aliasexp{q}{*p} && \mbox{// } p : \Tref{i}{\Tref{j}{\Tint{\nu}{i=0 \land 0\leq j \leq y-1 \Imp \nu=0}}{\set{[0,y-1]\mapsto 1}}}{\set{[0,x-1]\mapsto 1}} \\
      &\letexp{i}{1} \letexp{p'}{p\pplus i} &&  \mbox{// } p' : \Tref{i}{\Tref{j}{\Tint{\nu}{\TRUE}}{\set{[0,y-1]\mapsto 1}}}{\set{[0,x'-2]\mapsto 1}} \\
      &\letexp{x'}{x-i} && \\
      &\letexp{z'}{\mathit{initMatrix}(x', y, p')} && \mbox{// } p' : \Tref{i}{\Tref{j}{\Tint{\nu}{i=0 \land 0\leq j \leq y-1 \Imp \nu=0}}{\set{[0,y-1]\mapsto 1}}}{\set{[0,x'-1]\mapsto 1}} \\
      &\aliasexp{p'}{p\pplus i} && \mbox{// } p: \Tref{i}{\Tref{j}{\Tint{\nu}{\form_p}}{\set{[0,y-1]\mapsto 1}}}{\set{[0,x-1]\mapsto 1}} \\
      &0 \} &&\\
    \end{aligned}
  \end{align*}
  \caption{An Example with Arrays of Pointers}
  \label{fig:exampleMat}
\end{figure*}

\subsection{Soundness}
Our type system ensures that any typed program will never experience an assertion failure.
It also guarantees that a well-typed program never gets stuck.
This means that our type system ensures that no out-of-bound access occurs even though memory safety is not the main focus of this paper.
\begin{restatable}[Soundness]{theorem}{soundness}
  \label{th:soundness}
  If \( \vdash \langle D, e \rangle \), then \( \sconfig \emptyset \emptyset e \not\redD^+ \ASSERTFAIL \).
  Moreover, the reduction does not get stuck.
  That is, every well-typed program either (i) halts in a configuration \( \sconfig R H v \), where \( v \) is either a variable or an integer, (ii) runs into an \( \ALIASFAIL \), or (iii) diverges.
\end{restatable}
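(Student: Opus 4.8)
The plan is to prove soundness by the standard syntactic route of \emph{preservation} (subject reduction) and \emph{progress}, applied to a suitable notion of well-typed configuration, and then to iterate these two lemmas along any reduction sequence from $\sconfig{\emptyset}{\emptyset}{e}$. I would first declare a configuration $\sconfig{R}{H}{e}$ to be well-typed under $\FE$ when there exist $\TE, \ty, \TE'$ with $\FE \mid \TE \p e : \ty \To \TE'$ together with a \emph{consistency relation} $R, H \models \TE$ tying the run-time state to $\TE$. This relation is the heart of the argument and must enforce three things. (i) \textbf{Refinement soundness}: for each binding $x \COL \Tint{\nu}{\form}$ in $\TE$ the value $R(x)$ satisfies $\form$, and for each $x \COL \Tref{i}{\ty}{r}$ with $R(x) = \Addr{a}{k}$ and each offset $j$ with $r(j) > 0$, the value $H(\Addr{a}{k+j})$ satisfies $\ty$ instantiated at $i = j$ (read recursively for pointers to pointers). (ii) \textbf{Ownership accounting}: for every absolute address $\Addr{a}{n}$, the sum of $r(n-k)$ over all pointer bindings $x \COL \Tref{i}{\ty}{r}$ with $R(x) = \Addr{a}{k}$ is at most $1$. (iii) \textbf{Allocatedness}: every address carrying positive ownership lies in $\dom(H)$. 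The definition proceeds by induction on types, and crucially the ownership sum in (ii) is taken over \emph{absolute} addresses $\Addr{a}{k+j}$, since ownership functions now range over relative addresses.

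For preservation I would assume $\FE \models \sconfig{R}{H}{e}$ and $\sconfig{R}{H}{e} \redD \sconfig{R'}{H'}{e'}$ and show $\FE \models \sconfig{R'}{H'}{e'}$ by case analysis on the reduction rule, handling \rn{R-Context} by an evaluation-context decomposition with \rn{T-Let} inversion. Each case first strips uses of \rn{T-Sub} through an inversion lemma, which needs the auxiliary monotonicity lemma stating that $R, H \models \TE$ and $\TE \le \TE'$ imply $R, H \models \TE'$, and then re-establishes consistency. The informative cases are: \rn{R-Assign}, where the premise $r(0)=1$ forces every other alias to hold zero ownership at the written cell, so the strong update cannot invalidate any other binding while \rn{T-Assign} refreshes the refinement at $i=0$; \rn{R-AddPtr}, where the premise $\forall i.\, r_x(i-z)+r_y'(i)=r_y(i)$ preserves each per-cell ownership sum and the substitution $[(i+z)/i]$ realigns the refinement after the base shift by $R(z)$; \rn{R-AliasAddPtr} and \rn{R-AliasDeref}, where the typing rules only redistribute ownership and refinement through $\approx$, sound precisely because the alias check fired, so the two run-time addresses genuinely coincide as the typing rule assumes; and \rn{R-Call}, handled by a substitution lemma using $\FE(f)$ and \rn{T-FunDef}. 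The remaining cases \rn{R-MkArray}, \rn{R-Deref}, and the integer primitives are routine.

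For progress I would show, by inversion on the typing derivation, that a well-typed $\sconfig{R}{H}{e}$ is either a value or reduces, possibly to $\ALIASFAIL$, but never to $\ASSERTFAIL$. This last point is the core: at $\assertexp{\form};e$, rule \rn{T-Assert} yields $\models \formTE{\TE} \Imp \form$, and refinement soundness yields $\models [R]\formTE{\TE}$, hence $\models [R]\form$, so only \rn{R-Assert} applies. That dereference and assignment do not get stuck, and incidentally stay in bounds, follows from the premises $r(0)>0$ and $r(0)=1$ together with allocatedness, which place the accessed cell in $\dom(H)$. Assembling the theorem: $\sconfig{\emptyset}{\emptyset}{e}$ is well-typed by \rn{T-Prog} with $\TE = \emptyTE$; preservation makes every reachable configuration well-typed, and progress forbids a step to $\ASSERTFAIL$, giving $\sconfig{\emptyset}{\emptyset}{e} \not\redD^+ \ASSERTFAIL$; the same trichotomy shows reduction never gets stuck, so each run halts at a value configuration, reaches $\ALIASFAIL$, or diverges.

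I expect the main obstacle to be the formulation of the ownership--refinement consistency relation and the proof of its invariance under pointer arithmetic. Unlike in \consort{}, ownership is a function of relative addresses, so the global per-cell accounting must be phrased over absolute addresses $\Addr{a}{k+j}$ and shown stable under the splitting of \rn{T-AddPtr} and the merging of \rn{T-AliasAddPtr}, where the symbolic shift $z$ in the types must be matched against the concrete offset $R(z)$ and the index substitutions $[(i+z)/i]$ and $[(i-z)/i]$ must be reconciled with consistency; the recursive treatment of arrays of pointers only compounds this bookkeeping.
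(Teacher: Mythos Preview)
Your proposal is correct and follows essentially the same route as the paper: define well-typed configurations via a typing judgment plus two invariants (the paper calls them $\SAT{H}{R}{\TE}$ and $\Own{H}{R}{\TE}(\AddrArg a i)\le 1$, matching your refinement soundness/allocatedness and ownership accounting respectively), then prove preservation by case analysis on the reduction rule and progress by inversion on typing, and assemble the theorem by iterating from the initial configuration. One small point: your clause (ii) as written only sums ownership over the top-level bindings of $\TE$, whereas the paper's $\mathbf{own}$ is defined \emph{hereditarily}, recursing through $H$ so that a pointer-to-pointer contributes the ownership held by the pointed-to pointer as well; you acknowledge this recursion is needed at the end, but it should be built into the accounting invariant itself, not just into the refinement-soundness predicate.
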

Note that the type system does not guarantee the absence of \( \ALIASFAIL \).
It is the responsibility of the programmer or the tool used for the alias analysis to ensure that the inserted \( \mathbf{alias} \) expressions are correct.

The proof of soundness is by standard subject reduction and progress lemmas.
We define the notion of \emph{well-typed configurations}, and prove that the reduction preserves the well-typedness.
Roughly speaking, \( \sconfig R H e \) is well-typed if (i) \( \FE \mid \TE \p e : \ty \To \TE' \), (ii) each predicate formula \( \form \) appearing in \( \TE \) satisfies \( \models [R][R(x) / \nu] \form \) (for an appropriate variable \( x \)), and (iii) the sum of the ownerships of pointers pointing to \( \Addr a i \) does not exceed \( 1 \) for every \( \Addr a i \in \dom(H) \).
Details of the proof are found in \Appendix{sec:soundness}.

\section{Type Inference}
\label{sec:infer}
\newcommand*{\unknown}{\mathbf{??}}
This section describes a type inference procedure, which takes a program with no ownership/refinement type annotations as an input,
and checks whether the program is well-typed in our type system. Thanks to Theorem~\ref{th:soundness} in the previous section,
the procedure serves as a sound (but incomplete) method for automatically checking the lack of assertion failures.
We assume below that the simple type (without ownerships and refinement predicates) of each expression is already known;
it can be automatically obtained by the standard unification-based simple type inference algorithm.

Following \consort{}, the type inference proceeds in two steps: the first phase for ownership inference, and the second phase for refinement
type inference. The first phase of ownership inference requires a significant extension of the ownership inference of \consort{},
as an ownership has been extended from a single constant to a function that may depend on contexts.

\subsection{Ownership Inference}

As mentioned earlier, we restrict an ownership to the form \([l,u]\mapsto o\), which represents
the map \(f\) such that \(f(i)=o\) if \(l\le i\le u\), and \(f(i)=0\) otherwise.
As \(l\) and \(u\) may depend on
variables (recall that in the example in Figure~\ref{fig:example1}, we needed an ownership of the form
\([0,x-1]\mapsto 1\)), for each expression of a reference type, we prepare an ownership template
of the form \([c_0+c_1x_1+\cdots+c_kx_k, d_0+d_1x_1+\cdots+d_kx_k]\mapsto o\)
where \(x_1,\ldots,x_k\) are the integer variables available in the context of the expression,
and \(c_i, d_i, o\) are unknown constants.
The goal of the ownership inference phase is to determine those unknown constants so that
all the ownership constraints in the typing rules in Section~\ref{sec:type} are satisfied.
To this end, we first generate constraints on the unknown constants based on the typing rules,
and then solve them with a help of the Z3 SMT solver.

We illustrate the constraint generation by using the running example.

\begin{example}
  \label{example:infer1}
  Let us consider the part \(p:=0; \letexp{q}{p\pplus 1}\cdots \)  of the example in Figure~\ref{fig:example1}.
We prepare the following template of type environments for each program point:
\begin{align*}
  & \textcolor{blue}{\mbox{//} p\COL \lambda i.\INT\;\REF^{[c_{0,0}+c_{0,1}x,d_{0,0}+d_{0,1}x]\mapsto o_0}}\\
  &p:=0;  \\
  &\textcolor{blue}{\mbox{// } p\COL \lambda i.\INT\;\REF^{[c_{1,0}+c_{1,1}x,d_{1,0}+d_{1,1}x]\mapsto o_1}}\\
  &\letexp{q}{p\pplus 1}  \\
  & \textcolor{blue}{\mbox{// }p\COL \lambda i.\INT\;\REF^{[c_{2,0}+c_{2,1}x,d_{2,0}+d_{2,1}x]\mapsto o_2},}\\& \textcolor{blue}{\mbox{//}
  q\COL \lambda i.\INT\;\REF^{[c_{3,0}+c_{3,1}x,d_{3,0}+d_{3,1}x]\mapsto o_3} }
\end{align*}
\noindent
From \(p:=0\), we generate the following constraints based on the rule \rn{T-Assign}:
\begin{align*}
  &  o_0 = 1 \land \forall x>0. c_{0,0}+c_{0,1}x\le 0 \le d_{0,0}+d_{0,1}x.\\
  &  \forall x>0.c_{0,0}+c_{0,1}x \le c_{1,0}+c_{1,1}x \\&\qquad
  \land d_{1,0}+d_{1,1}x \le d_{0,0}+d_{0,1}x \land o_0\ge o_1.
\end{align*}
Here, the constraint on the first line requires that \(p\) holds the full ownership at index \(0\).
The condition \(x>0\) comes from the fact that the expression occurs in the else clause of the conditional expression
``\(\mathbf{ifnp}\ x\)''.\footnote{Currently, we locally collect such conditions from conditional expressions.
Note that without the condition \(x>0\), the co-efficients \(c_{0,1}\) and \(d_{0,1}\) have to be \(0\). For more precise
analysis, we need to combine the ownership inference and refinement type inference phases. }
The constraint on the second line requires that the ownership after the assignment \(p:=0\) is no greater than that before the assignment.

From \(\letexp{q}{p\pplus 1}{\cdots}\), we generate the following constraints based on the rule \rn{T-AddPtr}:
\begin{align*}
  & \forall x>0. c_{1,0}+c_{1,1}x \le c_{2,0}+c_{2,1}x \\&\qquad
  \land d_{2,0}+d_{2,1}x \le d_{1,0}+d_{1,1}x  \land o_1\ge o_2.\\
  & \forall x>0. c_{1,0}+c_{1,1}x - 1 \le c_{3,0}+c_{3,1}x \\&\qquad
  \land d_{3,0}+d_{3,1}x \le d_{1,0}+d_{1,1}x - 1 \land o_1\ge o_3.\\
  & \forall x>0. o_1\ge o_2+o_3 \lor\\&
  \qquad\qquad d_{3,0}+d_{3,1}x < c_{2,0}+c_{2,1}x - 1 \lor\\&
  \qquad\qquad d_{2,0}+d_{2,1}x - 1 < c_{3,0}+c_{3,1}x. 
\end{align*}
The constraints ensure that the sum of the ownerships of \(p\) and \(q\) after the let-expression is no greater than
the ownership of \(p\) before the expression.
\qed
\end{example}

As in the above example, we prepare a template of a type environment (without refinement type information)
for each program point, and generate the constraints based on the typing rules. 

It remains to solve the constraints on \(c_{i,j},d_{i,j}, o_i\). Unfortunately, the constraints involve nested quantifiers
of the form \(\exists c_{i,j},d_{i,j}, o. \forall x.\varphi\), which cannot be solved efficiently by Z3.
We have thus employed the following heuristic method to reduce the constraints further to the satisfiability
problem for quantifier-free formulas.
Given a formula of the form \(\exists \seq{c}.\forall\seq{x}.\psi(\seq{c},\seq{x})\) (where \(\seq{c}\) and \(\seq{x}\) denote
sequences of variables),
we first under-approximate it to 
\(\exists \seq{c}.\psi(\seq{c},\seq{n}_1)\land \cdots \land \psi(\seq{c},\seq{n}_k)\),
where \(\seq{n}_1,\ldots,\seq{n}_k\) are randomly-chosen values for \(\seq{x}\). We then invoke an SMT solver
(Z3 in our implementation) to obtain a solution \(\seq{c}=\seq{m}\) for the under-approximation.
We then check whether the original formula \(\forall\seq{x}.\psi(\seq{m},\seq{x})\) is valid by calling an SMT solver.
If so, we are done. Otherwise, we increase \(k\) to refine the under-approximation, and repeat until a solution
is found or an under-approximation is found to be unsatisfiable.
It is left for future work to replace this heuristic method with a more systematic, efficient  method.

\subsection{Refinement Inference}

In the phase for refinement type inference, we prepare a predicate variable for each occurrence of integer type,
and generate constraints on the predicate variables. The generated constraints are in the form of
Constrained Horn Clauses (CHC)~\cite{Bjorner15}, which can be  solved by using an off-the-shelf CHC
solver~\cite{DBLP:journals/fmsd/KomuravelliGC16,Eldarica,DBLP:journals/jar/ChampionCKS20}
(though the CHC solving problem is undecidable in general).
This phase is similar to that of \consort{}, except that we have more complex constraints
due to the extension of reference types.

We explain more details by using the running example.

\begin{example}
  Recall the fragment \(p:=0; \letexp{q}{p\pplus 1}\cdots \) of our running example.
  Based on the result of the ownership inference in the previous phase, we can prepare the following template
  for the type environment (which now includes refinement type information) at each program point.
  \begin{align*}
  & \textcolor{blue}{\mbox{//} p\COL \Tref{i}{\Tint{\nu}{P_0(x,i,\nu)}}{[0, x-1]\mapsto 1}}\\
  &p:=0;  \\
    &\textcolor{blue}{\mbox{// } p\COL \Tref{i}{\Tint{\nu}{P_1(x,i,\nu)}}{[0, x-1]\mapsto 1}}\\
  &\letexp{q}{p\pplus 1}  \\
  & \textcolor{blue}{\mbox{// }p\COL \Tref{i}{\Tint{\nu}{P_2(x,i,\nu)}}{[0, 0]\mapsto 1},}\\
  & \textcolor{blue}{\mbox{// }
  q\COL \Tref{i}{\Tint{\nu}{P_3(x,i,\nu)}}{[0, x-2]\mapsto 1}}
\end{align*}
  From the assignment \(p:=0\), we generate the following constraints based on \rn{T-Assign}:
  \begin{align*}
  &  \forall x, i, \nu. x>0\land i=0\land \nu=0 \Imp P_1(x,i,\nu).\\
    &  \forall x, i, \nu. x>0\land 0<i\le x-1\land P_0(x,i,\nu) \Imp P_1(x,i,\nu).
  \end{align*}
  The first and second constraints respectively  reflect the facts that the value of \(*p\) is \(0\) after the assignment,
  and that the values at other addresses are unchanged.
  
  From the let-expression, we obtain the following constraints based on \rn{T-AddPtr}:
  \begin{align*}
    & \forall x,i,\nu.x>0\land P_1(x,0,\nu) \Imp P_2(x,0,\nu)\\
    & \forall x,i,\nu.x>0\land 0\le i\le x-2\land P_1(x,i+1,\nu) \Imp P_3(x,i,\nu).
  \end{align*}
  The second constraint captures the fact that the value at \(q+i\) corresponds to the one at \(p+(i+1)\).

  In addition, we generate the following well-formedness conditions (recall that at an index with no ownership,
  the refinement type \(\ty\) should satisfy \(\Empty(\ty)\)).
  \begin{align*}
    & \forall x,i,\nu. x>0\land (i<0\lor i>x-1)\Imp P_0(x,i,\nu)\\
    & \forall x,i,\nu. x>0\land (i<0\lor i>x-1)\Imp P_1(x,i,\nu)\\
    & \forall x,i,\nu. x>0\land i\ne 0\Imp P_2(x,i,\nu)\\
    & \forall x,i,\nu. x>0\land (i<0\lor i>x-2)\Imp P_3(x,i,\nu)
   \end{align*}
By solving the constraints on predicate variables by using a CHC solver, and substituting a solution for the templates
of types and type environments, we can obtain typing as given in Figure~\ref{fig:example1}.\footnote{We may obtain a slightly different typing, depending on the solution found by a CHC solver; anyway, if the constraints are
satisfiable, we know that the given program is well-typed, hence also that the program does not suffer from any
assertion failures.}
\qed
\end{example}

\subsection{Limitations}

In this subsection, we show programs that cannot be handled by our current method.
The first example given below comes from the restriction on the shape of ownership functions to \([l,u]\mapsto o\).

\begin{example}[Limitation caused by the restriction of ownership functions]
Consider the following program.
\begin{align*}
    &\mkarrexp{p}{3}  &&\mbox{// } p\COL \INT\;\REF^{\set{[0,2]\mapsto 1}}\\
    &\letexp{q_1}{p\pplus 1} &&\mbox{// } q_1\COL \INT\;\REF^{\set{0\mapsto 1}}, p\COL\INT\;\REF^{\set{0\mapsto 1, 2\mapsto 1}}\\
    &p:=1; \\
    &q_1:=2; \\
    &\letexp{q_2}{p\pplus 2} &&\mbox{// } q_2\COL\INT\;\REF^{\set{0\mapsto 1}}, p\COL \INT\;\REF^{\set{0\mapsto 1}}\\
    &q_2:=3; \ ... 
\end{align*}
\noindent
In order to type the program above, as indicated by the comment on the second line, we have to assign to \(p\) the ownership \(r\)
such that \(r(0)=1, r(1)=0, r(2)=1\), which is not supported by our type inference procedure.
\qed
\end{example}

The second example given below identifies the limitation caused by the lack of polymorphic function types.
\begin{example}
  Consider the following variation of the init function, where the first argument \(x\) has been replaced by
  a pointer \(x_p\) to an integer. In this case, the range of the ownership of \(p\) depend on the value pointed to by \(x_p\),
  which cannot be expressed by our syntax of types.
  \begin{align*}
    &\INIT(x_p, p) \{\\
    &\ifexp{{*x_p}}{1} p := 0; \letexp{q}{p\pplus 1} x_p:=*x_p-1; \\
    & \letexp{z}{\INIT(x_p,q)}\qquad\mbox{// }{*x_p}=x-1\\
    &    \aliasexp{q}{p\pplus 1} 1\}
  \end{align*}
  To type the program above, we need to allow polymorphic function types and assign the following type to \(\INIT\): 
\begin{align*}  
&  \forall x.
  \TfunargL{}x_p\COL\Tref{i}{\Tint{\nu}{\nu=x}}{\set{0\mapsto 1}}, \\&\qquad
  p\COL \Tref{i}{\Tint{\nu}{\top}}{\set{[0,x-1]\mapsto 1}}\TfunargR{}\\
  &\to
      \Tfunret{x_p\COL\Tref{i}{\Tint{\nu}{\top}}{\set{0\mapsto 1}}, \\
  &\qquad p\COL \Tref{i}{\Tint{\nu}{0\leq i \leq x-1\Imp\nu=0}}{\set{[0,x-1]\mapsto 1}}\\
  &\qquad}{\Tint{\nu}{\top}}.
\end{align*}
Although it is not difficult to extend the type system itself, the introduction of polymorphic types requires a significant
extension of the type inference procedure, which is left for future work. \qed
\end{example}

\section{Experiments}
\label{sec:experiment}
We have implemented a prototype verifier based on the methods outlined in Section~\ref{sec:infer} and conducted preliminary experiments to confirm the effectiveness of our approach.\footnote{The source code of our tool is available at \url{https://github.com/mamizu-git/Extended_ConSORT}}
Our tool takes programs written in the language described in Section~\ref{sec:lang} as inputs.
The tool first inserts some trivial alias annotations automatically, by replacing \(\letexp{x}{y\pplus z}{e}\) with
\(\letexp{x}{y\pplus z}{\letexp{w}{e}{\aliasexp{x}{y\pplus z}{w}}}\).%
\footnote{Note that the addition of alias annotations
only increases the precision of our type-based analysis. As reported below, thanks to the automatic insertion of
alias expressions, no user annotations of alias expressions were required in our experiments.}
The tool then runs the type inference algorithm described in Section~\ref{sec:infer} to statically checks lack of assertion failures.
The current tool does not support nested pointers.
We used Z3~\cite{DBLP:conf/tacas/MouraB08} version 4.11.2 as the backend SMT-solver, and HoIce~\cite{10.1007/978-3-030-02768-1_8} version 1.10.0 as the backend CHC-solver.

The benchmark programs we prepared for evaluation are as follows:
\begin{itemize}
    \item \textbf{Init} : The motivating program shown in Figure~\ref{fig:example1} in Section~\ref{sec:intro}.
    \item \textbf{Sum} : A program to verify that the sum of the elements in an non-negative integer array remains non-negative.
    \item \textbf{Copy-Array} : A program that copies array elements to another array.
    \item \textbf{Add-Array} : A program that takes two integer arrays of the same size and writes the element-wise sum to another array.
\end{itemize}

\noindent
\textbf{Copy-Array} and \textbf{Add-Array} were included as representative programs that deal with multiple arrays.
To make sure that our tool works equally well with different array access patterns, we have prepared a few variants of \textbf{Sum}.
The original \textbf{Sum} (as well as the other programs listed above) accesses each element of the array from the front.
The program \textbf{Sum-Back} reads each element of the array from the back;
\textbf{Sum-Both} accesses each element from both front and back at the same time;
and \textbf{Sum-Div} divides the array into halves and calculates the summation recursively.\footnote{Splitting an array to multiple arrays was not allowed in \consort{}.}
The lengths of arrays are fixed to 1000 in all the programs we described so far.
To see how the length affects the performance of our tool, we also prepared a program \textbf{Init-10} that is essentially the same as \textbf{Init} but uses an array of length 10.
Those benchmark programs are listed in \Appendix{sec:benchmark}.

We compared the performance of our tool with SeaHorn (version 10.0.0-rc0-3bf79a59)~\cite{10.1007/978-3-319-21690-4_20}, an existing CHC-based automatic verifier for C.\footnote{The input programs were manually translated into C for SeaHorn.}
The results of the experiments are shown in Table~\ref{table:experiments}.
All the instances were run on a machine with 16 GB RAM and Apple M2 CPUs, with a time out of 600 seconds.
The table shows the running time in seconds if the program was successfully verified.

From these results, we confirmed that our method can verify non-trivial programs that involve pointer arithmetic within a realistic amount of time.
The results also demonstrate that the ownership functions of the form \([l,u]\mapsto o\) is not too restrictive.

SeaHorn resulted in timeout for all benchmark programs, except for \textbf{Init-10}.
This did not change even if we switched SeaHorn's backend CHC-solver from Spacer~\cite{10.1007/978-3-319-08867-9_2} (which is the default) to HoIce, which our tool uses.
This is probably because SeaHorn models the heap memory as an integer array
and thus the resulting CHC problems involve predicates on integer arrays.
Thanks to our ownership types, the CHC problems generated by our method
are much easier to solve,  involving only integer predicates.

\begin{table}[t]
  \caption{The results of the experiments. All times are in seconds.}
  \label{table:experiments}
  \centering
  \begin{tabular}{lc|cc}\hline
                           & \textbf{Ours}  &  \multicolumn{2}{c}{\textbf{SeaHorn}}        \\
    \textbf{Name}          & \textit{w/HoIce} & \textit{w/HoIce}       & \textit{w/Spacer} \\  
    \hline\hline
    \textbf{Init-10}       & 1.67             & tool error             & 0.37  \\
    \textbf{Init}          & 3.69             & timeout                & timeout \\
    \textbf{Sum}           & 2.01             & tool error             & timeout \\
    \textbf{Sum-Back}      & 0.54             & timeout                & timeout \\
    \textbf{Sum-Both}      & 1.07             & timeout                & timeout \\
    \textbf{Sum-Div}       & 0.83             & tool error             & timeout \\
    \textbf{Copy-Array}    & 31.31            & timeout                & timeout \\
    \textbf{Add-Array}     & 338.30           & timeout                & timeout \\
    \hline
  \end{tabular}
\end{table}

The bottleneck of our approach is the heuristic used for solving the formula of the shape \( \exists \seq{c}.\forall\seq{x}.\psi(\seq{c},\seq{x}) \) in the ownership inference process, which we described in Section~\ref{sec:infer}.
Our tool struggles to solve formulas of this shape and it may time out even if the formula is valid.
This is more likely to occur when the number of pointers handled in a program increases.
Indeed, \textbf{Copy-Array}, which manipulates two pointers, spends most of the time for the iteration that generates the randomly-chosen values \(\seq{n}_1,\ldots,\seq{n}_k\) for \(\seq{x}\).
For the same reason, \textbf{Add-Array}, which manipulates three pointers, needs even longer time to be verified.

As mentioned earlier, it is left for future work to implement more efficient methods for solving formulas with quantifiers of the form \( \exists \forall \).

\section{Related Work}
\label{sec:related}
We are not aware of so many studies on automated methods for static verification of
low-level programs with pointer arithmetic.

As already mentioned in Section~\ref{sec:intro},
our type system is an extension of that of \consort{}~\cite{DBLP:conf/esop/TomanSSI020},
which did not support  pointer arithmetic. Also, \consort{} supports arrays,
but elements are restricted to integers (thus, arrays of pointers are not supported),
and a pointer to an array must hold the ownership for the entire array
(\cite{DBLP:conf/esop/TomanSSI020}, Section~4.2).

Rondon et al.~\cite{DBLP:conf/popl/RondonKJ10,DBLP:conf/cav/RondonBKJ12} also
developed a refinement type system for a low-level language with refinement types.
It is difficult to compare the approaches directly, as 
their mechanism for strong updates of refinement types is different from
that of \consort{} and ours: instead of using ownership types, their type system
distinguishes between concrete locations and abstract locations,
and allows strong updates for only concrete locations.
To our knowledge, their type system does not support refinement predicates that
depend on an index to a memory region of variable length
(such as ``the \(i\)-th element of a memory region pointed to by \(p\) is no less than \(i\)'').

In the context of separation logic, 
Calcagno et al.~\cite{DBLP:conf/sas/CalcagnoDOY06} proposed a shape analysis method
in the presence of pointer arithmetic. To our knowledge, their method is specialized for
inferring shape information, and cannot be used for verifying general properties of the contents stored in
a memory region.

There are a number of \emph{semi-}automated approaches to verification of
pointer-manipulating programs~\cite{DBLP:journals/pacmpl/PulteMSMSK23,DBLP:conf/pldi/SammlerLKMD021,DBLP:series/natosec/0001SS17,DBLP:journals/pacmpl/0001W023,Wolff}.
While those approaches cover a wider range of pointer operations (such as
XOR~\cite{DBLP:conf/pldi/SammlerLKMD021}) and computation models (including
concurrency),
they require much more human interventions, such as annotations of
loop invariants and pre/post conditions for each recursive function.
Furthermore, the main interests in many of those studies
are in memory safety and pointer races, while our interest is in
the verification of functional correctness
(expressed by using assertions) of  low-level programs.

\section{Conclusion}
\label{sec:conc}
We have proposed a type system for an imperative language with
pointer arithmetic, as an extension of the ownership refinement
type system of \consort{}~\cite{DBLP:conf/esop/TomanSSI020},
and proved its soundness.
We have implemented a prototype system for fully automated
verification of programs with pointer arithmetic, and confirmed
its effectiveness. Future work includes an extension of
the tool to support the full language (nested pointers, in particular), and an improvement of
the efficiency of the ownership inference procedure.

\begin{acks}
  We thank Ryosuke Sato for discussions on this work.
  We would also like to thank anonymous referees for useful comments.
This work was supported by
JSPS KAKENHI Grant Number JP20H05703.
\end{acks}

\bibliographystyle{ACM-Reference-Format}
\bibliography{abbrv,koba,myref}

\iffull
\clearpage
\onecolumn
\appendix
\appendixpage
\startcontents[sections]
\printcontents[sections]{l}{1}{\setcounter{tocdepth}{2}}
\section{Well-formedness Conditions}
\label{sec:wf}
The well-formedness conditions for types and type environments are given in
Figure~\ref{fig:wf}.

\begin{figure}[h]

    \begin{multicols}{2}

    \infrule[Emp-Int]
    {\\}
    {\Empty(\Tint{\nu}{\top})}

    \infrule[Emp-Ref]
    {\Empty(\ty) \andalso \forall i\in\Z.r(i)=0}
    {\Empty(\Tref{i}{\ty}{r})}

    \end{multicols}
    
    \begin{multicols}{3}    

    \infrule[WF-Empty]
    {\\ \\}
    {\pWF \emptyTE}

    \infrule[WF-Env]
    {x\notin dom(\TE)\\ \pWF\TE\andalso \TE\pWF\ty}
    {\pWF \TE, x:\ty}
    
    \infrule[WF-Int]
    {\\\TE\pWF \form}
    {\TE\pWF\Tint{\nu}{\form}}

    \end{multicols}

    \infrule[WF-Ref]
    {\\
     \forall j\in\set{i\mid r(i)=0}\exists \ty'. Empty(\ty'), \emptyTE\p\ty'\approx [j/i]\ty \andalso 
     \forall j\in\Z.\TE\pWF [j/i]\ty\\
     \forall x\in\FV(r).\TE(x) = \Tint{\nu}{\form}, \TE\pWF\form}
    {\TE\pWF \Tref{i}{\ty}{r}}
    
    \infrule[WF-Phi]
    {\\
     \forall x\in \FV(\form)\setminus\set{\nu}.\TE(x)=\Tint{\nu}{\form'}, \TE\pWF\form'}
    {\TE \pWF \form}

    \infrule[WF-Result]
    {\TE\pWF\ty \andalso \pWF\TE}
    {\pWF \ty\To\TE}

    \infrule[WF-FunType]
    {\pWF x_1:\ty_1, \dots ,x_n:\ty_n\\
     \pWF \ty\To x_1:\ty_1', \dots ,x_n:\ty_n'}
    {\pWF \langle x_1:\ty_1, \dots ,x_n:\ty_n\rangle\rightarrow\langle x_1:\ty_1',\dots ,x_n:\ty_n'\mid \ty\rangle}
    
    \infrule[WF-FunEnv]
    {\\\forall f\in dom(\FE).\pWF\FE(f)}
    {\pWF \FE}

    \caption{Well-formedness}
        \label{fig:wf}
    
\end{figure}

\section{Proof of Soundness}
\label{sec:soundness}
\newcommand*{\idx}{i}
\newcommand*{\idxTwo}{j}
\newcommand*{\val}{v}
\newcommand*{\addr}{a}
\newcommand*{\nDeref}[2]{\#\mathrm{Deref}(#1, #2)}
\newcommand*{\size}[1]{|#1|}
\newcommand*{\zerofun}{\mathbf{0}}
\newcommand*{\empEnv}{\bullet}
\newcommand*{\Hupd}[3]{#1 \{ #2 \update #3 \}}
\newcommand*{\Rupd}[3]{#1 \{ #2 \update #3 \}}
\newcommand*{\rname}[1]{\textsc{#1}}
\newcommand*{\assnExp}[3]{#1 := #2; #3}
\newcommand*{\Idx}{\mathbb Z}
\newcommand*{\idxAdd}[2]{#1 + #2}
\newcommand*{\fv}[1]{\mathbf{fv}(#1)}
\newcommand*{\hole}{[\,]}
\newcommand*{\subty}{\le}
\newcommand*{\valuation}{\xi}

This section proves the soundness of our type system.
Our proof strategy follows that for the soundness proof of the type system of \consort{}~\cite{DBLP:conf/esop/TomanSSI020}.
At a very high-level, what we prove are (1) a lemma that states (assert) failure cannot be typed, (2) subject reduction (for configurations) and (3) a standard progress lemma:
\begin{restatable}{lemma}{closedWellTypedProgIsWellTypedConf}
  \label{lem:closed-well-typed-prog-is-well-typed-conf}
  If \( \vdash \langle D, e \rangle \), then \( \vdash_D \sconfig \emptyset \emptyset e : \ty \To \TE' \) for some \( \ty \) and \( \TE' \).
\end{restatable}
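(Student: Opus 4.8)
The plan is to prove this by inverting the program typing judgment and then checking that the initial configuration meets the definition of a well-typed configuration, which on the empty state degenerates to a set of trivial obligations. Concretely, the only rule that can derive \( \vdash \langle D, e \rangle \) is \rn{T-Prog}, so by inversion there is a function type environment \( \FE \) with \( \FE \p D \), \( \pWF \FE \), and a derivation \( \FE \mid \emptyTE \p e : \ty \To \TE \) for some \( \ty \) and \( \TE \). I would take exactly this \( \ty \) as the result type and set \( \TE' \defeq \TE \) as the witnesses required by the conclusion, and use \( \FE \) as the function type environment witnessing well-typedness of the configuration.

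It then remains to verify that \( \sconfig \emptyset \emptyset e \) is well-typed against the type environment \( \emptyTE \) and the extracted \( \FE \); that is, to discharge the three clauses of the definition of \( \vdash_D \). Clause (i), the underlying expression typing \( \FE \mid \emptyTE \p e : \ty \To \TE \), is exactly the premise obtained from \rn{T-Prog}. Clause (ii), the agreement between the register file and the refinement predicates of the type environment, holds vacuously: the type environment is \( \emptyTE \), so \( \formTE{\emptyTE} = \top \) and there are no bindings whose predicates must be validated against \( R = \emptyset \). Clause (iii), the constraint that for every \( \Addr a i \in \dom(H) \) the ownerships of the pointers aimed at \( \Addr a i \) sum to at most \( 1 \), holds vacuously because \( \dom(\emptyset) = \emptyset \). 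Any auxiliary store-consistency side conditions bundled into \( \vdash_D \) likewise degenerate on the empty heap and empty register. Assembling these yields \( \vdash_D \sconfig \emptyset \emptyset e : \ty \To \TE \), the desired conclusion.

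I expect no genuine mathematical difficulty here: this is a bridging/initialization step whose entire content is aligning the formal definition of well-typedness of configurations with the premises supplied by \rn{T-Prog}. The one point that requires care is to ensure the formal definition of \( \vdash_D \sconfig R H e : \ty \To \TE' \) is set up so that it existentially quantifies over the choice of \( \FE \) and \( \TE \) and requires \( \FE \p D \) — which is precisely what inversion of \rn{T-Prog} provides — rather than fixing these in some way that the empty state could not satisfy. Once the definition is pinned down, every proof obligation reduces either to a premise of \rn{T-Prog} or to a property quantified over the empty set, so the argument is a short unfolding with no real obstacle.
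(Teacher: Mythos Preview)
Your proposal is correct and follows essentially the same approach as the paper: invert \rn{T-Prog} to obtain \(\FE\), \(\ty\), and the typing judgment under the empty type environment, then observe that both the \(\mathbf{SAT}\) and \(\mathbf{Own}\) invariants hold trivially on the empty type environment (the paper phrases this as \(\SAT H R \empEnv\) and \(\Own H R \empEnv = \zerofun\) for any \(H,R\)). The only cosmetic difference is that you argue vacuity of the ownership bound via \(\dom(\emptyset)=\emptyset\), whereas the paper notes the stronger fact that the ownership map from the empty environment is identically zero; either suffices.
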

\begin{restatable}{lemma}{assertFailNeverHappens}
  \label{lem:assert-fail-never-happens}
  If \( \vdash_D \sconfig R H e : \ty \To \TE' \), then \( \sconfig R H e \not \redD \ASSERTFAIL \).
\end{restatable}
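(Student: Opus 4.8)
The plan is to argue by contradiction, exploiting that $\ASSERTFAIL$ can be introduced by exactly one reduction rule. First I would observe that, among the rules of Figure~\ref{fig:lang}, the only one whose right-hand side is $\ASSERTFAIL$ is \rn{R-AssertFail}; in particular the congruence rule \rn{R-Context} fires only when its premise steps to a \emph{proper} configuration $\sconfig{R'}{H'}{e'}$, so it can never produce $\ASSERTFAIL$. Hence, if $\sconfig R H e \redD \ASSERTFAIL$, then the whole expression must be of the form $e = \assertexp{\form};e'$ with $\not\models [R]\form$.

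Next I would invert the typing of $e$. By the definition of a well-typed configuration, $\vdash_D \sconfig R H e : \ty \To \TE'$ supplies a function type environment $\FE$ (with $\FE \p D$) and a type environment $\TE$ such that $\FE \mid \TE \p \assertexp{\form};e' : \ty \To \TE'$, together with the semantic invariant (condition~(ii) of the soundness sketch) that the register file $R$ satisfies $\TE$, which I summarize as $\models [R]\formTE{\TE}$. The typing derivation must end with \rn{T-Assert}, possibly preceded by applications of \rn{T-Sub}. Folding those subsumption steps together, I obtain an environment $\TE_0$ with $\TE \le \TE_0$ to which \rn{T-Assert} applies directly, yielding $\models \formTE{\TE_0}\Imp\form$ and $\TE_0\pWF\form$. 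A small auxiliary lemma — monotonicity of $\formTE{\cdot}$ under subtyping, which follows from \rn{S-Int} and \rn{S-TyEnv} — gives $\models \formTE{\TE}\Imp\formTE{\TE_0}$, and hence $\models \formTE{\TE}\Imp\form$.

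Finally I would combine the two facts. Since $\formTE{\TE}\Imp\form$ is valid for every integer valuation and $\models [R]\formTE{\TE}$ holds, instantiating the valid implication at the valuation induced by $R$ yields $\models [R]\form$, contradicting $\not\models [R]\form$. Therefore no reduction of a well-typed configuration to $\ASSERTFAIL$ exists, which is the claim.

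I expect the main obstacle to be the interface between the \emph{logical} validity $\models \formTE{\TE}\Imp\form$ and the \emph{semantic} satisfaction $\models [R]\form$, where $[R]$ replaces program variables by their runtime values, which include pointer addresses as well as integers. The well-formedness side condition $\TE_0\pWF\form$ of \rn{T-Assert} is precisely what guarantees that $\form$, and the integer-typed part of $\formTE{\TE}$, mention only integer variables (pointer bindings being sent to $\top$ by $\formTE{\cdot}$), so that $[R]\form$ is a genuine arithmetic formula and the instantiation step is sound. Making condition~(ii) precise enough to license exactly this instantiation is the delicate part; the remaining work — the inversion bookkeeping for \rn{T-Sub} and the monotonicity of $\formTE{\cdot}$ — is routine.
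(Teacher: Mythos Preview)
Your proposal is correct and follows essentially the same line as the paper's proof: contradiction via inversion on \rn{R-AssertFail}, inversion on the typing through \rn{T-Sub} to reach an application of \rn{T-Assert} at some $\TE_0$ with $\TE\le\TE_0$, and then pulling $\models[R]\form$ from the semantic invariant. The only cosmetic difference is that the paper packages the step from $\TE$ to $\TE_0$ via preservation of $\SAT{H}{R}{\cdot}$ under subtyping (Lemma~\ref{lem:subty-preserves-Own-and-SAT}) followed by Lemma~\ref{lem:SAT-implies-TE-is-valid}, whereas you go through a direct monotonicity of $\formTE{\cdot}$; these amount to the same argument.
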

\begin{restatable}[Subject reduction]{lemma}{SubjectReduction}
\label{lem:subject-reduction}
  \newcommand*{\confAfterRed}{\mathbf{C}}
  If \( \p_D \sconfig R H e : \tau \To \TE' \) and \( \sconfig R H e \redD \confAfterRed \), then either
  \begin{itemize}
    \item \( \confAfterRed = \ALIASFAIL \) or
    \item \( \confAfterRed = \sconfig {R'} {H'} {e'} \) with \( \p_D \sconfig {R'} {H'} {e'} : \tau \To \TE' \).
  \end{itemize}
\end{restatable}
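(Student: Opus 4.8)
The plan is to prove Lemma~\ref{lem:subject-reduction} by induction on the derivation of the reduction \( \sconfig{R}{H}{e} \redD \mathbf{C} \), with an inner case analysis on the last reduction rule applied. Since a typing derivation of \( \FE \mid \TE \p e : \ty \To \TE' \) may interleave applications of \textsc{T-Sub} with the syntax-directed rules, I would first prove an \emph{inversion lemma}: for each syntactic shape of \( e \), any typing derivation can be rearranged so that the conclusion is obtained by the corresponding syntax-directed rule followed by a single use of \textsc{T-Sub}, using transitivity of the subtyping relations \( \TE \le \TE' \) and \( \TE \p \ty_1 \le \ty_2 \). This lets each case below assume \( e \) was typed by its defining rule, modulo an outer subsumption that I propagate to the result. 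I would also need the usual auxiliary lemmas: a variable-renaming/substitution lemma (for the \( [x'/x] \) renamings in \textsc{R-LetInt}, \textsc{R-LetVar}, \textsc{R-Deref}, \textsc{R-AddPtr}, and for the argument substitution \( \sigma \) in \textsc{R-Call}/\textsc{T-Call}) and a weakening lemma for extending \( \TE \) and \( R \) with fresh bindings.

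The congruence rule \textsc{R-Context} is handled by the induction hypothesis: inverting the typing of \( E[e_0] \) through the \textsc{T-Let} frames that make up \( E \) yields a typing \( \FE \mid \TE \p e_0 : \ty_0 \To \TE_0 \) of the redex together with a typing of the remaining context from \( \TE_0 \); applying the hypothesis to the inner step \( \sconfig{R}{H}{e_0} \redD \sconfig{R'}{H'}{e_0'} \) preserves \( \ty_0 \) and \( \TE_0 \), and recomposing re-derives the typing of \( E[e_0'] \). The purely register-manipulating cases (\textsc{R-LetInt}, \textsc{R-LetVar}, \textsc{R-IfTrue}, \textsc{R-IfFalse}) are routine: they neither touch the heap nor change ownerships, so invariant (iii) is immediate, and invariant (ii) is re-established by noting that the newly bound value satisfies the refinement recorded in \( \TE \) (for \textsc{T-If} this uses the case split \( \nu \le 0 \) versus \( \nu > 0 \) matching the guard). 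For \textsc{R-Assert} I discharge the side condition by combining the premise \( \models \formTE{\TE} \Imp \form \) of \textsc{T-Assert} with invariant (ii), which forces \( \models [R]\form \); this is exactly Lemma~\ref{lem:assert-fail-never-happens}, and it guarantees that \textsc{R-AssertFail} can never fire on a well-typed configuration, so \( \ASSERTFAIL \) never arises as \( \mathbf{C} \).

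The substantive cases are those that read, write, allocate, or redistribute ownership. For \textsc{R-Deref} I would use the heap-typing component of the precise invariant (from the appendix) — that the value stored in a cell owned with positive ownership inhabits the type \( \ty \) recorded in the pointer type \( \Tref{i}{\ty}{r} \) — to conclude that the freshly read \( H(R(y)) \) satisfies \( \ty_x \), re-establishing (ii); the premise \( r(0) > 0 \) is what makes this information trustworthy. For \textsc{R-Assign}, the premise \( r(0) = 1 \) is the crux: together with invariant (iii) it forces every other alias to have ownership \( 0 \) at the written cell, so the strong update is sound and consistency need only be re-checked for the writing pointer, where the strengthened type \( \ty_2 \land_x (x =_{\ty_2} y) \) records the written value. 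For \textsc{R-MkArray}, the fresh cells receive the \( \Empty \) type \( \ty' \) so heap consistency holds vacuously, and ownership \( 1 \) at exactly the fresh addresses keeps the sum invariant intact since those addresses were previously unowned. For \textsc{R-Call} I apply the substitution lemma to the body typed by \textsc{T-FunDef}, transporting the pre/post argument types through \( \sigma \). For the alias cases \textsc{R-AliasDeref} and \textsc{R-AliasAddPtr}, the run-time equalities checked in the premises justify that the merged-then-redistributed types in \textsc{T-AliasDeref}/\textsc{T-AliasAddPtr} describe the same heap cells, so the \( \approx \) premises transfer consistency; when the run-time check fails, the configuration steps to \( \ALIASFAIL \), which is permitted by the statement.

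I expect the main obstacle to be maintaining the two semantic invariants simultaneously across the rules that \emph{split, shift, and re-merge} ownership functions — \textsc{T-AddPtr}, \textsc{T-AliasAddPtr}, and \textsc{T-AliasDeref} — now that an ownership is a function \( r \colon \Z \to [0,1] \) over relative addresses rather than a single number. The pointwise ownership-sum bound (iii) must be checked against the \emph{absolute} address \( \Addr{a}{k} \) of each cell, whereas the typing rules manipulate ownerships indexed by each pointer's own relative offset; reconciling these requires tracking, for every base pointer \( a \), how the relative indices of the live pointers into \( a \) align, and showing that the shifts \( r_x(i - z) \) and the splitting \( r_x(i - z) + r_y'(i) = r_y(i) \) in \textsc{T-AddPtr} preserve the invariant after translation to absolute coordinates. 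Formulating the heap-typing invariant so that it is stable under these transformations — in particular so that the dependent refinement \( \ty \) of \( \Tref{i}{\ty}{r} \), which varies with \( i \), stays correctly aligned once an offset is applied — is the technically delicate core; once the invariant is stated in the right form, the individual cases reduce to the algebra of ownership functions together with the validity checks already present in the subtyping and equivalence premises.
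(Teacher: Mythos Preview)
Your proposal is correct and follows essentially the same approach as the paper: induction on the reduction derivation with a case analysis on the last rule, handling \textsc{T-Sub} by appeal to preservation of the semantic invariants under subtyping (the paper phrases this as ``ignoring'' \textsc{T-Sub} via Lemma~\ref{lem:subty-preserves-Own-and-SAT} rather than an explicit inversion lemma, but the content is the same), and relying on substitution and weakening lemmas together with a battery of auxiliary lemmas about how $\mathbf{own}$ and $\mathbf{SATv}$ behave under type splitting, type equivalence, strengthening, and heap/register extension or modification. Your identification of the technically delicate core---reconciling the relative-index ownership algebra in \textsc{T-AddPtr} and the alias rules with the absolute-address invariant on $\mathbf{Own}$---matches exactly where the paper's proof invests its detailed calculations.
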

\begin{restatable}[Progress]{lemma}{Progress}
\label{lem:progress}
Suppose \( \vdash_D \sconfig R H e : \ty \To \TE' \).
Then either
\begin{enumerate}
  \item \( \sconfig R H e \redD \sconfig {R'} {H'} {e'}  \) holds for some configuration \( \sconfig {R'} {H'} {e'}\),
  \item \( \sconfig R H e \redD \ALIASFAIL  \), or
  \item \( \sconfig R H e \) is of the form \( \sconfig R H v \), where \( v \) is some variable \( x \) or an integer \( n \) (and thus cannot be reduced anymore). \label{it:lem:progress:halt}
\end{enumerate}
\end{restatable}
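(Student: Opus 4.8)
The plan is to prove Progress by induction on the derivation of the configuration typing $\vdash_D \sconfig{R}{H}{e} : \ty \To \TE'$, case-splitting on the last rule used to type $e$. Recall that this configuration judgment bundles three invariants: (i) $\FE \mid \TE \p e : \ty \To \TE'$ for suitable $\FE, \TE$; (ii) every refinement predicate recorded in $\TE$ is satisfied by the current register file, i.e.\ $\models [R][R(x)/\nu]\form$ for each binding $x : \Tint{\nu}{\form}$; and (iii) an ownership--heap consistency invariant ensuring that the ownerships attached to pointers are compatible with $\dom(H)$ and sum to at most $1$ at each allocated address. These three ingredients supply exactly the side conditions needed to fire a reduction rule in each syntactic case. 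The subsumption rule \rname{T-Sub}, which leaves $e$ unchanged, is dispatched by the induction hypothesis: since $\TE \le \TE'$ makes $\TE$ a stronger assumption, both the predicate-satisfaction invariant (ii) and the ownership bound (iii) transfer to the weaker environment, so the same configuration is well-typed against $\TE'$ and reduces identically.

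The value cases $e = n$ and $e = x$ fall immediately under alternative~(3). For the constructs that reduce unconditionally I read off the matching rule and discharge its premises using (i) and (ii): in $\ifexp{x}{e_1}{e_2}$ the typing forces $x$ to have integer type, so (ii) gives that $R(x)$ is an integer and exactly one of $R(x) \le 0$, $R(x) > 0$ holds, firing \rname{R-IfTrue} or \rname{R-IfFalse}; $\mkarrexp{x}{n}{e'}$ always steps by \rname{R-MkArray} after choosing a fresh address; \rname{T-Call} yields $f \in \dom(\FE) = \dom(D)$ so \rname{R-Call} applies; and in $\letexp{x}{y \pplus z}{e'}$ the reference type of $y$ forces $R(y)$ to be a pointer address $\Addr{a}{i}$, so \rname{R-AddPtr} fires. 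The integer operation $\letexp{x}{y\ \OP\ z}{e'}$ is handled the same way. All of these land in alternative~(1). A general let $\letexp{x}{e_1}{e_2}$ (rule \rname{T-Let}) is handled by applying the induction hypothesis to the sub-configuration $\sconfig{R}{H}{e_1}$, which inherits the three invariants: if $e_1$ steps I close the case via \rname{R-Context} with $E = \letexp{x}{[\,]}{e_2}$, and if $e_1$ is already a value $n$ or $y$ I use \rname{R-LetInt} or \rname{R-LetVar}.

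The substantive cases are the memory and alias/assert operations. For $\letexp{x}{*y}{e'}$ (\rname{T-Deref}) and $x := y; e'$ (\rname{T-Assign}) the typing rules demand, respectively, positive and full ownership $r(0)$ at the accessed index; combining this with invariant (iii) I conclude that the absolute address $R(y)$ (resp.\ $R(x)$) lies in $\dom(H)$, so $H(R(y))$ is defined and \rname{R-Deref} (resp.\ \rname{R-Assign}) fires, giving alternative~(1). For $\aliasexp{x}{*y}{e'}$ and $\aliasexp{x}{y \pplus z}{e'}$ the same allocation argument makes the required dereference (resp.\ address computation) well defined, after which the run-time equality check either succeeds---reaching alternative~(1) via \rname{R-AliasDeref} / \rname{R-AliasAddPtr}---or fails---reaching alternative~(2) via \rname{R-AliasDerefFail} / \rname{R-AliasAddPtrFail}. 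Finally, for $\assertexp{\form};e'$ I invoke Lemma~\ref{lem:assert-fail-never-happens} (equivalently, combine \rname{T-Assert}'s premise $\models \formTE{\TE} \Imp \form$ with invariant (ii) to obtain $\models [R]\form$) to exclude \rname{R-AssertFail}, so that \rname{R-Assert} applies and we again reach alternative~(1).

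I expect the memory-safety step in the deref/assign/alias cases to be the main obstacle. The difficulty is to formulate invariant (iii) precisely enough that positive ownership at relative index $0$ of a pointer $y$ with $R(y) = \Addr{a}{k}$ really does entail $\Addr{a}{k} \in \dom(H)$; this is subtle because ownership functions are indexed by \emph{relative} addresses whereas the heap is indexed by \emph{absolute} addresses $\Addr{a}{i}$, so the invariant must correctly account for the shift between a base pointer and its offsets, and it must be strong enough to survive re-establishment under subject reduction (Lemma~\ref{lem:subject-reduction}) across allocation, assignment, pointer arithmetic, and ownership redistribution at \textbf{alias} points. The remaining care is routine: checking that each sub-configuration handed to the induction hypothesis---in the \rname{T-Let} case and after \rname{T-Sub}---indeed satisfies all three well-typedness conditions.
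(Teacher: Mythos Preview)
Your outline is correct and close in spirit to the paper's proof, but the decomposition differs slightly and your ``main obstacle'' turns out to be a non-issue in the paper's formulation. The paper does not induct on the typing derivation; instead it first proves a decomposition lemma (every expression is a value or has the form $E[e']$ with $e'$ a redex drawn from a fixed finite list) and then inducts on the structure of $E$, with the base case $E=[\,]$ doing the case analysis you describe. Your induction on typing, with \rname{T-Let} handled by the induction hypothesis on $e_1$ and then \rname{R-Context}, is an equivalent way to organize the same argument; each approach buys essentially the same thing.

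The one place where your plan diverges from what is actually needed is the memory-safety step. You attribute ``$R(y)\in\dom(H)$'' to the ownership invariant and worry that you must argue from $r(0)>0$. In the paper, however, this fact is already part of the $\mathbf{SAT}$ invariant: the reference clause of $\SATv{H}{R}{v}{\Tref{i}{\ty'}{r}}$ requires $v=\Addr{a}{i}$ \emph{and} $\Addr{a}{i}\in\dom(H)$ outright, independently of the ownership function $r$. Thus for \rname{R-Deref}, \rname{R-Assign}, and the alias rules, it suffices to observe that the relevant variable has a reference type and invoke $\SAT{H}{R}{\TE}$; no ownership reasoning is required for Progress. Your description of invariant~(ii) as only asserting integer predicate satisfaction undersells $\mathbf{SAT}$: it also carries exactly the address-allocation facts you were planning to extract from~(iii). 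With that correction, the anticipated ``subtle'' relative/absolute address bookkeeping disappears from this lemma (it matters for subject reduction, not here).
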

\
The definition of \( \p_D \sconfig R H e : \tau \To \TE' \)  will be given later in this section.

From these lemmas, we obtain the following soundness result by a straightforward inductive argument.
\soundness*
The rest of this section is organized as follows.
Section~\ref{sec:config-typing} defines the judgement \( \p_D \sconfig R H e : \tau \To \TE' \) and proves Lemma~\ref{lem:closed-well-typed-prog-is-well-typed-conf} and~\ref{lem:assert-fail-never-happens}.
Section~\ref{sec:subject-reduction} proves subject reduction after proving some additional lemmas.
Section~\ref{sec:progress} proves progress.

\subsection{Configuration Typing}
\label{sec:config-typing}
Here we define the type judgement for configurations \( \p_D \sconfig R H e : \tau \To \TE' \) and prove some basic properties about well-typed configurations.

\begin{figure}[t]
 \begin{align*}
    \Own{H}{R}{\TE}&\defeq\sum_{x\in dom(\TE)}\own{H}{R(x)}{\TE(x)}\\
    \own{H}{v}{\ty}&\defeq
    \begin{cases}
        \{\Addr \addr  {\idxAdd \idx \idxTwo} \mapsto [R]r(\idxTwo) \mid \idxTwo \in \Idx \} +\sum_{j \in \Idx}\own{H}{H(\AddrArg \addr{\idxAdd \idx \idxTwo})}{[j/i]\ty'} & \text{if \( (\spadesuit) \)} \\
      \zerofun & \text{otherwise}
    \end{cases} \\
    &\qquad  (\spadesuit)\  v = \Addr \addr \idx, \Addr \addr \idx \in dom(H) \text{ and } \ty=\Tref{i}{\ty'}{r}
 \end{align*}
   \begin{align*}
    &\SAT{H}{R}{\TE} \Def \forall x\in dom(\TE). x \in \dom(R) \text{ and } \SATv{H}{R}{R(x)}{\TE(x)}\\
    &\SATv{H}{R}{v}{\ty} \\
    &\quad \Def
    \begin{cases}
      1.\quad v\in\mathbb{Z}, \ty=\Tint{\nu}{\form} \text{ and } \models [R][v/\nu]\form \text{ or}\\
      2. \quad
      \begin{aligned}[t]
        & v = \Addr \addr \idx, \ty=\Tref{i}{\ty'}{r}, \Addr \addr \idx \in dom(H) \text{ and} \\
        &\quad \forall \idxTwo \in\set{\idx \mid [R]r(\idx)>0}. \Addr \addr {\idxAdd \idx \idxTwo} \in \dom(H) \text{ and } \SATv{H}{R}{H(\AddrArg \addr {\idxAdd \idx \idxTwo})} {[\idxTwo / \idx]\ty'}
      \end{aligned}
      &\end{cases} \\
  \end{align*}
  \caption{Definition of the invariants.}
  \label{fig:invariants}
\end{figure}

\begin{definition}[Well-typed configuration]
  Given a configuration \( \sconfig R H e \), we write \( \vdash_D \sconfig R H e : \ty \To \TE' \)  if the following conditions hold for some type environment \( \TE \) and function type environment \( \FE \):
  \begin{enumerate}
    \item \( \FE \mid \TE \vdash e : \ty \to \TE' \) and \( \FE \vdash D \)
    \item for each \( \Addr \addr \idx \in \dom(H) \), \( \Own H R \TE \le 1 \)
    \item \( \SAT H R \TE \)
  \end{enumerate}
  where \( \Own H R \TE \) and \( \SAT H R \TE \) are defined in Figure~\ref{fig:invariants}.
  We say that a configuration \( \sconfig R H e \) is \emph{well-typed} if there exist \( \ty \) and \( \TE' \) such that \( \vdash_D \sconfig R H e : \ty \To \TE' \).
\end{definition}
The expression \( \zerofun \) in the definition of \( \own H \val \ty \) represents the \emph{constant function that always returns \( 0 \)}, and the function \(  \{\Addr \addr  {\idxAdd \idx \idxTwo} \mapsto r(\idxTwo) \mid \idxTwo \in \Idx \} \) returns \( 0 \) if the argument is not of the form \( \Addr \addr \idxTwo \).
The addition of functions is defined pointwise.
As we already mentioned, \( [R] \form \) represents the formula obtained by replacing each variable \( x \) (of type int) in \( \varphi \) with \( R(x) \).
Formally, it is inductively defined as follows:
\begin{align*}
  [\emptyset]\form\defeq \form  \qquad [R\set{y\mapsto n}]\form\defeq [R][n/y]\form \qquad [R\set{y\mapsto \Addr \addr \idx}] \form \defeq [R]\form.
\end{align*}
Similarly, \( [R]r \) is the ownership function obtained by instantiating the free variables of \( r \) using \( R \).
Hence, \( [R] r \) is a function from \( \mathbb Z \) to \( [0, 1]\).
To simplify the, notation we often omit \( [R] \) and write \( r \) even if we mean \( [R] r \).

Now we briefly explain \( \mathbf{Own}\) and  \( \mathbf{SAT} \).
The function \( \own H \val \ty  \) maps an address \( \Addr \addr \idx \) to the  \emph{hereditary ownership} of \( \Addr \addr \idx \) that \( \val : \ty \) owns.
By hereditary, we mean that if \( \val \) is a pointer of a pointer, then \( \val \) also owns the ownership of \( *\val \), and so on.\footnotemark
\footnotetext{Dereferencing \( *\val \) corresponds to \( H(\AddrArg \addr \idx ) \) provided that \( v = \Addr \addr \idx \).}
It should be emphasized that \( \own H {\Addr \addr \idx} {\Tref \idx \ty r} \) is not focusing on a single memory cell whose address is \( \Addr \addr \idx \), but is describing the hereditary ownership that the chunk of memory starting from \( \Addr \addr \idx \) has.
On the other hand, intuitively, \( \SATv H R \val \ty \) checks whether ``\( \val \) has the type \( \ty \)'' under \( H \) and \( R \).
For example, \( \SATv H R \val {\Tint \nu \form} \) holds if and only if \( \val \) is an integer that satisfies the predicate \( [R]\form \).
In case \( \ty = \Tref \idx {\ty'} r \), \( \mathbf{SATv} \) not only checks that \( \val \) is an address, but also checks that out-of-bounds access does not occur.
The conditions \( \forall \Addr \addr \idx \in \dom(H) . \Own H R \TE (\AddrArg \addr \idx) \le 1 \) and \( \SAT H R \TE \) are the two invariants that will not change during the reduction.
(Note, however, that the type environment \( \TE \) appearing in these expressions may change during reductions.)

We now prove Lemma~\ref{lem:closed-well-typed-prog-is-well-typed-conf} and~\ref{lem:assert-fail-never-happens},~i.e all the lemmas we used to prove the soundness, except for subject reduction.
\closedWellTypedProgIsWellTypedConf*
\begin{proof}
  First, observe that \( \Own H R \empEnv = \zerofun \) and \( \SAT H R \empEnv \) holds for any heap \( H \) and register file \( R \); in particular, we have \( \Own \emptyset  \emptyset \empEnv = \zerofun \) and \( \SAT \emptyset \emptyset \empEnv \) holds.
  Since \( \vdash \langle D, e \rangle\), we have \( \FE \mid  \empEnv \p e : \ty \To \TE' \) for some \( \ty \) and \( \TE' \), and thus \( \vdash_D  \sconfig \emptyset \emptyset e : \ty \To \TE' \).
\end{proof}
The claim that \( \ASSERTFAIL \) never happens (Lemma~\ref{lem:assert-fail-never-happens}) should be intuitively evident as \( \SAT H R \TE \) ensures that all the refinement predicates are valid.
To formally prove this claim, we also need to prove that \( \SAT H R \TE \) is preserved by subtyping.\footnote{We also prove that \( \Own H R \TE(\AddrArg \addr \idx) \le 1 \) is preserved by subtyping even if this is not needed to prove Lemma~\ref{lem:assert-fail-never-happens}. This is a property that we use later in this section.}
\begin{lemma}
  \label{lem:SAT-implies-TE-is-valid}
  If \( \SAT H R \TE \), then \( \models [R] \formTE \TE \). %
\end{lemma}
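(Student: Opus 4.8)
The plan is to proceed by induction on the structure of the type environment $\TE$, viewed as a sequence of bindings. The definition of $\formTE{\TE}$ is compositional over this sequence ($\formTE{\emptyTE} = \top$ and $\formTE{\TE_0, x\COL\ty} = \formTE{\TE_0} \land \formTE{\ty}_x$), so an induction on $\TE$ matches the shape of the statement exactly. The hypothesis $\SAT{H}{R}{\TE}$ likewise decomposes: it guarantees $x \in \dom(R)$ and $\SATv{H}{R}{R(x)}{\TE(x)}$ for every $x \in \dom(\TE)$, so restricting to a prefix $\TE_0$ of $\TE$ immediately yields $\SAT{H}{R}{\TE_0}$.

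For the base case $\TE = \emptyTE$ we have $\formTE{\emptyTE} = \top$, and $[R]\top = \top$ is valid. For the inductive step, write $\TE = \TE_0, x\COL\ty$. Since $[R]$ distributes over conjunction, it suffices to show $\models [R]\formTE{\TE_0}$, which holds by the induction hypothesis applied to $\SAT{H}{R}{\TE_0}$, together with $\models [R]\formTE{\ty}_x$, which I would establish by case analysis on $\ty$. If $\ty = \Tref{i}{\ty'}{r}$ is a pointer type, then $\formTE{\ty}_x = \top$ by definition, so the conjunct is trivially valid. If $\ty = \Tint{\nu}{\form}$, then $\formTE{\ty}_x = [x/\nu]\form$, and from $\SATv{H}{R}{R(x)}{\Tint{\nu}{\form}}$ only the first clause of the definition of $\mathbf{SATv}$ can apply, forcing $R(x) \in \Z$ and $\models [R][R(x)/\nu]\form$.

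The one point that needs care, and the only real content of the argument, is the identity $[R]([x/\nu]\form) = [R]([R(x)/\nu]\form)$, which lets me convert the validity obtained from $\mathbf{SATv}$ into validity of the conjunct $[R]\formTE{\ty}_x$. This is a routine substitution-commutation fact: since $x \in \dom(R)$ and $R(x) \in \Z$, applying $[R]$ after replacing $\nu$ by $x$ sends $\nu \mapsto x \mapsto R(x)$ and every other free variable $y$ to $R(y)$, which is exactly the effect of first substituting the integer $R(x)$ for $\nu$ and then applying $[R]$; no capture arises because $R(x)$ is a closed integer term. With this identity in hand, $\models [R][R(x)/\nu]\form$ gives $\models [R]\formTE{\ty}_x$, and combining the two conjuncts yields $\models [R]\formTE{\TE}$, completing the induction.
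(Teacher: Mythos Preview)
Your proposal is correct and follows essentially the same approach as the paper: the paper's proof simply observes that $\formTE{\TE}$ is a conjunction over bindings, so it suffices to show $\models [R][x/\nu]\form$ for each $x$ with $\TE(x)=\Tint{\nu}{\form}$, and then uses the same substitution identity $[R][x/\nu]\form = [R][R(x)/\nu]\form$ that you isolate. Your version just makes the implicit induction on $\TE$ explicit.
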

\begin{proof}
  By the definition of \( \formTE \TE \), it suffices to show that \( \TE(x) = \Tint \nu \form \) implies \( \models [R] [x /\nu]\form \).
  This clearly is the case because \( \SATv H R {R(x)} {\Tint \nu \form} \) implies \( \models [R][R(x) / \nu ] \form (=[R][x/\nu] \form) \).
\end{proof}

\begin{lemma}
  \label{lem:subty-preserves-SATv}
  If \( \TE \vdash \ty \subty \ty' \), \( \models [R]\formTE \TE \) and \( \SATv H R \val \ty \) then \( \SATv H R \val {\ty'} \).
\end{lemma}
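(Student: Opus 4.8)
The plan is to prove the statement by induction on the structure of the type $\ty$. Since the subtyping judgement for types is syntax-directed — \textsc{S-Int} applies exactly when $\ty$ is an integer type and \textsc{S-Ref} exactly when it is a reference type, and in both cases $\ty'$ necessarily has the same shape — the induction splits into two cases according to the last rule used to derive $\TE \p \ty \le \ty'$.

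For the base case, suppose $\ty = \Tint{\nu}{\form_1}$ and $\ty' = \Tint{\nu}{\form_2}$, so that the premise of \textsc{S-Int} gives $\models \formTE{\TE} \Imp (\form_1 \Imp \form_2)$. From $\SATv{H}{R}{\val}{\ty}$ we obtain $\val \in \Z$ and $\models [R][\val/\nu]\form_1$. Validity is preserved by substituting the integer constants supplied by $R$ and the value $\val$ for $\nu$, and since $\nu$ is not free in $\formTE{\TE}$ we have $[\val/\nu]\formTE{\TE} = \formTE{\TE}$; hence the premise instantiates to $\models [R]\formTE{\TE} \Imp ([R][\val/\nu]\form_1 \Imp [R][\val/\nu]\form_2)$. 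Combining this with the hypotheses $\models [R]\formTE{\TE}$ and $\models [R][\val/\nu]\form_1$ by modus ponens yields $\models [R][\val/\nu]\form_2$, which is exactly $\SATv{H}{R}{\val}{\ty'}$.

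For the inductive case, suppose $\ty = \Tref{i}{\ty_1}{r_1}$ and $\ty' = \Tref{i}{\ty_2}{r_2}$, so \textsc{S-Ref} gives $r_1 \ge r_2$ and $\TE \p \ty_1 \le \ty_2$. From $\SATv{H}{R}{\val}{\ty}$ we have $\val = \Addr{a}{m}$, $\Addr{a}{m} \in \dom(H)$, and for every $j$ with $[R]r_1(j) > 0$ that $\Addr{a}{m+j} \in \dom(H)$ and $\SATv{H}{R}{H(\Addr{a}{m+j})}{[j/i]\ty_1}$. To establish $\SATv{H}{R}{\val}{\ty'}$ it suffices to fix an arbitrary $j$ with $[R]r_2(j) > 0$ and discharge the two conditions for $[j/i]\ty_2$. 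Instantiating the free variables of $r_1$ and $r_2$ through $R$ preserves the pointwise inequality, so $[R]r_1 \ge [R]r_2$ and thus $\{j \mid [R]r_2(j) > 0\} \subseteq \{j \mid [R]r_1(j) > 0\}$; consequently $\Addr{a}{m+j} \in \dom(H)$ and $\SATv{H}{R}{H(\Addr{a}{m+j})}{[j/i]\ty_1}$ already hold. It then remains to upgrade this last fact from $[j/i]\ty_1$ to $[j/i]\ty_2$, which I would do by applying the induction hypothesis to the value $H(\Addr{a}{m+j})$, the types $[j/i]\ty_1 \le [j/i]\ty_2$, and the unchanged $\TE$ and $R$; since $[j/i]$ replaces the bound index by a constant, $[j/i]\ty_1$ has strictly smaller structure than $\ty$, so the induction hypothesis is applicable.

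The main obstacle is the proviso that $\TE \p [j/i]\ty_1 \le [j/i]\ty_2$ actually holds, which requires an auxiliary substitution lemma: $\TE \p \ty_1 \le \ty_2$ implies $\TE \p [n/i]\ty_1 \le [n/i]\ty_2$ for every integer $n$, where $i$ (the bound index) is assumed fresh for $\TE$ so that $i \notin \FV(\formTE{\TE})$. This is proved by a routine induction on the subtyping derivation — in the \textsc{S-Int} case because substituting a constant into a valid implication keeps it valid while leaving $\formTE{\TE}$ untouched, and in the \textsc{S-Ref} case because both the pointwise ownership ordering and the recursive subtyping premise are stable under $[n/i]$. With this lemma the inductive step closes, and as the two cases are exhaustive the proof is complete. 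The only points requiring care are the bookkeeping of exactly which variables the substitutions $[R]$, $[\val/\nu]$, and $[j/i]$ affect, and the index-set inclusion noted above.
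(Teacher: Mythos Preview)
Your proof is correct and follows essentially the same approach as the paper: induction with a base case for \textsc{S-Int} using modus ponens against $\models [R]\formTE{\TE}$, and an inductive case for \textsc{S-Ref} using $r_1 \ge r_2$ to transfer the index set and the induction hypothesis to pass from $[j/i]\ty_1$ to $[j/i]\ty_2$. You are in fact slightly more careful than the paper, which phrases the induction as being on the subtyping derivation yet applies the hypothesis to $\TE \p [j/i]\ty_1 \le [j/i]\ty_2$ (not a subderivation); the paper likewise defers the needed substitution lemma for subtyping to a footnote, whereas you state and sketch it explicitly.
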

\begin{proof}
  By induction on the derivation of \( \TE \vdash \ty \subty \ty' \).
  The base case is when \( \ty = \Tint \nu \form \) and \( \ty' = \Tint \nu {\form'} \) with \( \models \formTE \TE \Imp \form \Imp \form' \).
  This means that we have \( \models [R] \formTE \TE \Imp  [R][\val/\nu] \form \Imp [R][\val/\nu] \form'\), by taking \( R\{ \nu \mapsto v \} \) as the valuation.
  By \( \SATv H R \val \ty \), we have \(  [R][\val/\nu] \form \).
  Together with \( \models [R] \formTE \TE \), we have \( [R][\val / \nu] \form'\), i.e.~\( \SATv H R \val {\ty'} \).

  The inductive case is the case where \( \ty = \Tref \idx {\ty_1} {r_1} \) and \( \ty' = \Tref \idx {\ty_2} {r_2} \) with \( r_1 \ge r_2 \) and \( \TE \p \ty_1 \subty \ty_2 \).
  Assume that \( \val \) is an address \( \Addr \addr \idx \); otherwise it is trivial.
  Our goal is to show that \( \SATv H R {H(\addr, \idxAdd \idx \idxTwo)} {[\idxTwo/\idx]\ty_2} \) for \( \idxTwo \) such that \( r_2(\idxTwo) > 0 \).
  Note that if \( r_2(\idxTwo) > 0 \) then so is \( r_1(\idxTwo) \) because \( r_1 \ge r_2 \).
  From \( \SATv H R \val \ty\), we have \( \SATv H R {H(\addr, \idxAdd \idx \idxTwo)} {[\idxTwo/\idx]\ty_1} \).
  By applying the induction hypothesis we obtain \( \SATv H R {H(\addr, \idxAdd \idx \idxTwo)} {[\idxTwo/\idx]\ty_2} \).\footnote{Strictly speaking, we also need a substitution lemma for the subtyping relation to show \( \TE \vdash [\idxTwo / \idx] \ty_1 \subty [\idxTwo / \idx]\ty_2\). This can be proved by induction on the subtyping relation.}
\end{proof}

\begin{lemma}
\label{lem:subty-preserves-own}
   If \( \TE \vdash \ty \subty \ty' \) and \( \models [R]\formTE \TE \), then \( \own H v \ty \ge \own H v {\ty'} \).
\end{lemma}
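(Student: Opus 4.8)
The plan is to prove the inequality by induction on the derivation of $\TE \vdash \ty \le \ty'$, mirroring the structure of the proof of Lemma~\ref{lem:subty-preserves-SATv} but tracking the ownership function $\mathbf{Own}$ instead of the $\mathbf{SATv}$ predicate. Since the subtyping relation has only the two rules \rn{S-Int} and \rn{S-Ref}, there are two cases. When the derivation ends with \rn{S-Int}, we have $\ty = \Tint{\nu}{\form_1}$ and $\ty' = \Tint{\nu}{\form_2}$; neither type is a reference type, so the side condition $(\spadesuit)$ in the definition of $\own{H}{v}{\ty}$ fails for both $\ty$ and $\ty'$ regardless of $v$, giving $\own{H}{v}{\ty} = \zerofun = \own{H}{v}{\ty'}$, and the inequality holds as an equality (note that $\models[R]\formTE{\TE}$ is not even needed here, since no refinement implication must be discharged).

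The main case is \rn{S-Ref}, where $\ty = \Tref{i}{\ty_1}{r_1}$ and $\ty' = \Tref{i}{\ty_2}{r_2}$, with subderivations yielding $r_1 \ge r_2$ and $\TE \p \ty_1 \le \ty_2$. If $v$ is not an address $\Addr{a}{i}$ lying in $\dom(H)$, then $(\spadesuit)$ fails for both types and both sides are $\zerofun$. Otherwise $v = \Addr{a}{i} \in \dom(H)$, and each ownership function unfolds into a pointwise sum of two components,
\[ \own{H}{v}{\ty} = \{\Addr{a}{i+j} \mapsto [R]r_1(j) \mid j\in\Z\} + \sum_{j\in\Z}\own{H}{H(a,i+j)}{[j/i]\ty_1}, \]
and similarly for $\ty'$ with $r_2$ and $\ty_2$. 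I would bound the two components separately: the first follows directly from $[R]r_1 \ge [R]r_2$ (immediate from the premise $r_1 \ge r_2$), and the second follows by applying the induction hypothesis to each summand. Because $\mathbf{Own}$ is defined via pointwise addition and $\ge$ denotes the pointwise order, monotonicity of addition then combines the two bounds to give $\own{H}{v}{\ty} \ge \own{H}{v}{\ty'}$.

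The one subtlety---and the step I expect to be the crux---is that the induction hypothesis has to be applied to the substituted types $[j/i]\ty_1$ and $[j/i]\ty_2$, not to $\ty_1$ and $\ty_2$ themselves. As already observed in the proof of Lemma~\ref{lem:subty-preserves-SATv}, a substitution lemma for the subtyping relation (provable by a separate induction on subtyping derivations) gives $\TE \p [j/i]\ty_1 \le [j/i]\ty_2$; since substitution preserves the height of the subtyping derivation, the induction hypothesis legitimately applies to each instance, yielding $\own{H}{H(a,i+j)}{[j/i]\ty_1} \ge \own{H}{H(a,i+j)}{[j/i]\ty_2}$ for every $j$. The hypothesis $\models[R]\formTE{\TE}$ is threaded unchanged into these recursive invocations, as both $\TE$ and $R$ are unchanged by \rn{S-Ref}. (The pointwise sums are compared term by term, so no separate argument about their convergence is needed for the inequality itself.)
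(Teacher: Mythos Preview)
Your proposal is correct and follows essentially the same approach as the paper's proof: induction on the subtyping derivation, with the \rn{S-Int} case trivially giving $\zerofun$ on both sides and the \rn{S-Ref} case handled by unfolding $\mathbf{own}$, bounding the first summand via $r_1 \ge r_2$ and the second via the induction hypothesis. You are in fact more explicit than the paper about the substitution subtlety for $[j/i]\ty_1 \le [j/i]\ty_2$; the paper relegates this point to a footnote in the analogous proof of Lemma~\ref{lem:subty-preserves-SATv} and silently reuses it here.
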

\begin{proof}
  By induction on the derivation of \( \TE \vdash \ty \subty \ty' \).
  The base case is trivial as the two ownership functions are the constant function \( \zerofun \).
  The inductive case is the case where \( \ty = \Tref \idx {\ty_1} {r_1} \) and \( \ty' = \Tref \idx {\ty_2} {r_2} \) with \( r_1 \ge r_2 \) and \( \TE \p \ty_1 \subty \ty_2 \).
  We suppose that \( \val \) is an address \( \Addr \addr \idx \) because, if not, it is trivial.
  We have
  \begin{align*}
    \own H {\Addr \addr \idx} \ty
    &= \{\Addr \addr  {\idxAdd \idx \idxTwo} \mapsto r_1(\idxTwo) \mid \idxTwo \in \Idx \} +\sum_{j \in \Idx}\own{H}{H(\AddrArg \addr{\idxAdd \idx \idxTwo})}{[j/i]\ty_1} \tag{by def.} \\
    &\ge \{\Addr \addr  {\idxAdd \idx \idxTwo} \mapsto r_1(\idxTwo) \mid \idxTwo \in \Idx \} +\sum_{j \in \Idx}\own{H}{H(\AddrArg \addr{\idxAdd \idx \idxTwo})}{[j/i]\ty_2} \tag{by I.H.} \\
    &\ge \{\Addr \addr  {\idxAdd \idx \idxTwo} \mapsto r_2(\idxTwo) \mid \idxTwo \in \Idx \} +\sum_{j \in \Idx}\own{H}{H(\AddrArg \addr{\idxAdd \idx \idxTwo})}{[j/i]\ty_2} \tag{since \(r_1 \ge r_2\)} \\
    &= \own H {\Addr \addr \idx} {\ty'}
  \end{align*}
as desired.
\end{proof}

\begin{lemma}
  \label{lem:subty-preserves-Own-and-SAT}
  Suppose that \( \TE \subty \TE' \) and \( \SAT H R \TE \).
  \begin{enumerate}
    \item We have \( \SAT H R {\TE'} \) and
    \item if \( \Own H R \TE(\AddrArg \addr \idx) \le 1  \),  then \( \Own H R {\TE'}(\AddrArg \addr \idx) \le 1 \) for all \( \Addr \addr \idx \in \dom(H) \).
  \end{enumerate}
\end{lemma}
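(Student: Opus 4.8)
The plan is to reduce both parts of Lemma~\ref{lem:subty-preserves-Own-and-SAT} to the three componentwise facts already established, namely Lemmas~\ref{lem:SAT-implies-TE-is-valid},~\ref{lem:subty-preserves-SATv}, and~\ref{lem:subty-preserves-own}, and then lift them over the variables in the type environments. Before touching either claim I would record two preliminary observations. First, by inspecting the rule \textsc{S-TyEnv}, the hypothesis \( \TE \subty \TE' \) mentions \( \TE(x) \) for every \( x \in \dom(\TE') \); hence \( \dom(\TE') \subseteq \dom(\TE) \), and for each such \( x \) we have the pointwise subtyping \( \TE \vdash \TE(x) \subty \TE'(x) \). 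Second, from \( \SAT H R \TE \) together with Lemma~\ref{lem:SAT-implies-TE-is-valid} I would extract \( \models [R]\formTE{\TE} \), which is exactly the side condition required to invoke the two subtyping-preservation lemmas.

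For part (1), I would fix an arbitrary \( x \in \dom(\TE') \). Since \( x \in \dom(\TE) \), the assumption \( \SAT H R \TE \) gives both \( x \in \dom(R) \) and \( \SATv H R {R(x)} {\TE(x)} \). Combining \( \TE \vdash \TE(x) \subty \TE'(x) \), the fact \( \models [R]\formTE{\TE} \), and \( \SATv H R {R(x)} {\TE(x)} \) through Lemma~\ref{lem:subty-preserves-SATv} then yields \( \SATv H R {R(x)} {\TE'(x)} \). As \( x \) was arbitrary in \( \dom(\TE') \), this establishes \( \SAT H R {\TE'} \).

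For part (2), the strategy is to prove the stronger pointwise inequality \( \Own H R {\TE'} \le \Own H R \TE \) on ownership functions, from which \( \Own H R {\TE'}(\AddrArg \addr \idx) \le \Own H R \TE(\AddrArg \addr \idx) \le 1 \) follows at once for every \( \Addr \addr \idx \in \dom(H) \). Expanding \( \Own H R \TE = \sum_{x \in \dom(\TE)} \own H {R(x)} {\TE(x)} \) and splitting the index set along \( \dom(\TE') \subseteq \dom(\TE) \), the variables in \( \dom(\TE) \setminus \dom(\TE') \) contribute a sum of ownership functions, each nonnegative because \( \own \) is built from values in \( [0,1] \); dropping them only weakens the left-hand side. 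On the part indexed by \( \dom(\TE') \), Lemma~\ref{lem:subty-preserves-own} gives \( \own H {R(x)} {\TE(x)} \ge \own H {R(x)} {\TE'(x)} \) for each \( x \), again using \( \TE \vdash \TE(x) \subty \TE'(x) \) and \( \models [R]\formTE{\TE} \). Summing these pointwise inequalities delivers \( \Own H R \TE \ge \Own H R {\TE'} \).

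I do not anticipate a genuine obstacle, since the lemma is essentially a bookkeeping argument that layers the componentwise lemmas over the environment. The only point needing care is the domain accounting in part (2): one must make explicit that \( \dom(\TE') \subseteq \dom(\TE) \), so that every summand of \( \Own H R {\TE'} \) is matched by a summand of \( \Own H R \TE \), and that the leftover variables in \( \dom(\TE) \setminus \dom(\TE') \) may be discarded rather than ignored silently. The soundness of discarding them rests precisely on the nonnegativity of \( \own \), which is why I would state that nonnegativity explicitly before performing the split.
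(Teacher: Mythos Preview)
Your proposal is correct and follows essentially the same approach as the paper: both parts are reduced to Lemmas~\ref{lem:SAT-implies-TE-is-valid}, \ref{lem:subty-preserves-SATv}, and~\ref{lem:subty-preserves-own}. The paper's proof merely cites these lemmas without further detail, while you spell out the domain inclusion \(\dom(\TE') \subseteq \dom(\TE)\), the extraction of \(\models [R]\formTE{\TE}\), and the pointwise sum comparison---all of which are exactly the steps one would need to expand the paper's terse argument.
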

\begin{proof}
  \noindent
  \begin{enumerate}
    \item Follows from Lemma~\ref{lem:SAT-implies-TE-is-valid} and~\ref{lem:subty-preserves-SATv}.
    \item Follows from Lemma~\ref{lem:SAT-implies-TE-is-valid} and~\ref{lem:subty-preserves-own}.
  \end{enumerate}
\end{proof}
\assertFailNeverHappens*
\begin{proof}
  Suppose that \( \sconfig R H  {\assertexp{\form};e_0} \redD \ASSERTFAIL \).
  Then we must have \(\not\models [R]\form \).
  Our goal is to show that \( \models [R]\form \), which leads to a contradiction.
  Since \( \p_D \sconfig R H {\assertexp \form; e_0}\), we must have \( \FE \mid \TE \p {\assertexp \form; e_0} : \ty \To \TE' \) and \( \SAT H R \TE \) for some \( \TE \).
  By inversion on the typing of \( \assertexp{\form};e_0\), we have \( \models \formTE {\TE_0} \To \form \) for some \( \TE_0 \) such that \( \TE \subty \TE_0 \).
  Using Lemma~\ref{lem:subty-preserves-Own-and-SAT} with \( \SAT H R \TE \), we obtain \( \SAT H R {\TE_0}\).
  Therefore, we have \( \models [R] \formTE {\TE_0} \) by Lemma~\ref{lem:SAT-implies-TE-is-valid}, and thus, \( \models [R] \form \).
\end{proof}

\subsubsection*{Auxiliary lemmas about the invariants}
The rest of this subsection is devoted to the proofs of auxiliary lemmas about the invariants that are used to prove subject reduction.
All the lemmas proved here are quite simple.
We do not prove any lemmas that have to do with the dynamics of a program; such lemmas will be proved later in Section~\ref{sec:subject-reduction}.
\begin{lemma}
  \label{lem:own-zero-impl-no-overlap}
  Suppose that \( \own H {\Addr {\addr'} {\idx'}} {\Tref \idx \ty r}(\AddrArg \addr \idx) = 0 \) and \( r(\idxTwo) > 0 \).
  Then \( \Addr \addr \idx \neq \Addr {\addr'} {\idxAdd {\idx'} \idxTwo} \).
\end{lemma}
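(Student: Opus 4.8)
The plan is to argue by contraposition: supposing $\Addr{a}{i} = \Addr{a'}{\idxAdd{i'}{j}}$, I will show that $\own{H}{\Addr{a'}{i'}}{\Tref{i}{\ty}{r}}(\AddrArg{a}{i})$ is strictly positive, which contradicts the first hypothesis. Thus the desired inequality follows.

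First I would unfold the definition of $\mathbf{own}$ from Figure~\ref{fig:invariants}. Since $\Addr{a'}{i'}$ is an address and $\Tref{i}{\ty}{r}$ is a reference type, the guard $(\spadesuit)$ reduces to the single requirement $\Addr{a'}{i'}\in\dom(H)$, which holds in every context where the lemma is applied (the base pointer in question is always allocated, as witnessed by $\mathbf{SATv}$); I therefore work under it. The definition then gives
\[
\own{H}{\Addr{a'}{i'}}{\Tref{i}{\ty}{r}} = g + \sum_{k\in\Z}\own{H}{H(\AddrArg{a'}{\idxAdd{i'}{k}})}{[k/i]\ty},
\]
where $g$ denotes the direct-ownership function $\{\Addr{a'}{\idxAdd{i'}{k}}\mapsto r(k)\mid k\in\Z\}$, which sends every address not of the form $\Addr{a'}{\cdot}$ to $0$.

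The crux is that each summand on the right is a non-negative function: $g$ takes values in $[0,1]$ because $r$ does, and every hereditary term $\own{H}{H(\AddrArg{a'}{\idxAdd{i'}{k}})}{[k/i]\ty}$ is non-negative by a straightforward structural induction on types, since the definition of $\mathbf{own}$ only ever returns $\zerofun$ or a pointwise sum of a $[0,1]$-valued function with non-negative recursive calls (the sums are in fact finite, as ownership functions have finite support). Evaluating the whole expression at $\Addr{a}{i}$ and invoking the hypothesis that it equals $0$, pointwise non-negativity forces every summand to vanish at $\Addr{a}{i}$; in particular $g(\AddrArg{a}{i}) = 0$. But if $\Addr{a}{i} = \Addr{a'}{\idxAdd{i'}{j}}$, then $g(\AddrArg{a}{i}) = r(j) > 0$ by the second hypothesis, a contradiction. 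Hence $\Addr{a}{i}\neq\Addr{a'}{\idxAdd{i'}{j}}$.

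I expect the only real content to be the non-negativity of the hereditary terms, which is precisely what guarantees that no cancellation in the sum can mask the positive direct contribution of $g$ at the overlapping address; this is a routine induction, so the lemma is essentially bookkeeping once the definition of $\mathbf{own}$ is unfolded. The degenerate case $\Addr{a'}{i'}\notin\dom(H)$, in which $\mathbf{own}$ collapses to $\zerofun$, does not arise in the intended applications, where the base address is always known to be allocated.
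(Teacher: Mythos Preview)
Your proof is correct and follows essentially the same approach as the paper, which dismisses the lemma in one line as ``obvious from the definition of $\mathbf{own}$.'' You have simply made explicit what that one line hides: unfold the definition, observe that the direct-ownership component $g$ already contributes $r(j)$ at $\Addr{a'}{i'+j}$, and use non-negativity of all summands to conclude. You have also correctly flagged the degenerate case $\Addr{a'}{i'}\notin\dom(H)$, in which the lemma as stated is actually false (the hypothesis $\own{H}{\Addr{a'}{i'}}{\Tref{i}{\ty}{r}}(\AddrArg{a}{i})=0$ becomes vacuous while the conclusion need not hold); the paper's proof silently assumes this case away, and you are right that every use of the lemma in the paper is in a context where $\mathbf{SATv}$ guarantees the base pointer is allocated.
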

\begin{proof}
  Obvious from the definition of \( \mathbf{own}\).
\end{proof}

By abuse of notation, we write \( \top \) for types \( \ty \) that satisfy \( \Empty(\ty) \).
Moreover, we define \(\top_0 \defeq  \Tint \nu \top \), \(\top_m \defeq \Tref{i}{\top_{m-1}}{\zerofun} \).

\begin{lemma}
  \label{lem:own-top-is-zero}
  For all \( H \), \( \val \) and \( n \) we have \( \own H \val {\top_n} = \zerofun \).
\end{lemma}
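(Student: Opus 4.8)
The plan is to prove the claim by induction on the natural number $n$, following the recursive structure of the definition of $\top_n$. For the base case $n = 0$, the type $\top_0 = \Tint{\nu}{\top}$ is an integer type, which is not of the form $\Tref{i}{\ty'}{r}$. Hence the side condition $(\spadesuit)$ in the definition of $\mathbf{own}$ fails for every $H$ and $\val$, and so $\own{H}{\val}{\top_0} = \zerofun$ directly by the second clause of the definition.

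For the inductive step $n = m+1$, we have $\top_{m+1} = \Tref{i}{\top_m}{\zerofun}$. If $(\spadesuit)$ fails (because $\val$ is not an address, or $\val = \Addr{a}{i}$ but $\Addr{a}{i} \notin \dom(H)$), then again $\own{H}{\val}{\top_{m+1}} = \zerofun$ immediately. Otherwise $\val = \Addr{a}{i} \in \dom(H)$, and unfolding the first clause of the definition gives
\[
  \own{H}{\Addr{a}{i}}{\top_{m+1}} = \{\Addr{a}{i+j} \mapsto \zerofun(j) \mid j \in \Idx\} + \sum_{j\in\Idx} \own{H}{H(\AddrArg{a}{i+j})}{[j/i]\top_m}.
\]
The first summand is the constant-zero function, since the ownership function carried by $\top_{m+1}$ is $\zerofun$ and hence $\zerofun(j) = 0$ for all $j$.

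For the infinite sum, I would observe that $\top_m$ contains no free occurrence of the index variable $i$ (it is built solely from $\Tint{\nu}{\top}$ and the constant ownership $\zerofun$), so the substitution is vacuous: $[j/i]\top_m = \top_m$. Applying the induction hypothesis to each summand then yields $\own{H}{H(\AddrArg{a}{i+j})}{\top_m} = \zerofun$, so the sum collapses to $\zerofun$; adding the two zero functions gives the desired result.

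This is a routine structural induction, so I do not anticipate a genuine obstacle. The only point requiring a moment of care is the vacuity of the substitution $[j/i]\top_m = \top_m$, which is precisely what lets the induction hypothesis (stated for $\top_m$ itself) be applied uniformly to each of the shifted subterms $H(\AddrArg{a}{i+j})$.
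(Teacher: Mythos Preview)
Your proof is correct and follows exactly the approach the paper indicates (the paper's proof is simply ``By induction on $n$''); your version just spells out the details of the base and step cases. The observation that $i$ does not occur free in $\top_m$, so that $[j/i]\top_m = \top_m$ and the induction hypothesis applies to every summand, is precisely the point that makes the inductive step go through.
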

\begin{proof}
  By induction on \( n \).
\end{proof}

\begin{lemma}
  \label{lem:SATv-top}
  For any \( H \), \( R \) and \( \val \), we have that
  \begin{enumerate}
    \item if \( \val \) is an integer, then \( \SATv H R \val {\top_0} \), and
    \item if \( \val \) is an address with \( \val \in \dom(H)\), then \( \SATv H R \val {\top_n} \) for any \( n \ge 1 \).
  \end{enumerate}
\end{lemma}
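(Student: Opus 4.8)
The plan is to prove both parts by directly unfolding the definition of $\SATv$ given in Figure~\ref{fig:invariants}; no genuine induction on $n$ is required, because the identically-zero ownership $\zerofun$ occurring in $\top_n$ short-circuits the recursive clause of $\SATv$ before it can descend into $\top_{n-1}$. First I would separate the two cases according to whether $\val$ is an integer or an address, since these select the two different clauses in the definition of $\SATv$.

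For part~(1), where $\val$ is an integer, I would observe that $\top_0 = \Tint{\nu}{\top}$ is an integer type, so the first clause of $\SATv$ is the relevant one. That clause demands $\val \in \mathbb{Z}$, which holds by assumption, together with $\models [R][\val/\nu]\top$. Since the refinement predicate is $\top$ and has no free variables, the instantiated formula is again $\top$, which is trivially valid; hence $\SATv{H}{R}{\val}{\top_0}$ holds.

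For part~(2), where $\val = \Addr{a}{i} \in \dom(H)$ and $n \ge 1$, I would use that $\top_n = \Tref{i}{\top_{n-1}}{\zerofun}$ is a reference type, so the second clause of $\SATv$ applies. Its structural preconditions, namely that $\val$ is an address, that the type has the shape $\Tref{i}{\top_{n-1}}{\zerofun}$, and that $\Addr{a}{i} \in \dom(H)$, all hold by assumption. The remaining obligation quantifies over the index set $\{\,j \mid [R]\zerofun(j) > 0\,\}$; since $\zerofun$ is the constant-zero function, $[R]\zerofun(j) = 0$ for every $j$, so this set is empty and the universally quantified condition is vacuously satisfied. Thus $\SATv{H}{R}{\val}{\top_n}$ holds.

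There is no real obstacle here, but two points are worth flagging rather than treating as difficulties. The hypothesis $n \ge 1$ in part~(2) is what guarantees that $\top_n$ is a reference type, so that clause~2 (the only clause applicable to an address, since clause~1 requires $\val \in \mathbb{Z}$) can be used; and the assumption $\val \in \dom(H)$ is exactly the membership precondition that clause~2 needs. Because the ownership is everywhere zero, the recursion never reaches $\top_{n-1}$, which is precisely why a single unfolding suffices and no induction hypothesis on $n$ is invoked.
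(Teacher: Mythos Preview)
Your proof is correct. The paper's proof is the one-liner ``By induction on \( n \)'', whereas you argue directly: because the ownership function in \( \top_n \) is \( \zerofun \), the set \( \{\, j \mid [R]\zerofun(j) > 0 \,\} \) is empty and the recursive clause of \( \mathbf{SATv} \) is vacuously satisfied after a single unfolding. Your route is slightly more informative, since it makes explicit that the induction hypothesis is never actually used; the paper's phrasing just packages the same two case distinctions (base type vs.\ reference type) under the heading of an induction on \( n \). Either way the content is the same.
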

\begin{proof}
  By induction on \( n \).
\end{proof}

Lemma~\ref{lem:ty-equiv-preserves-own-and-SATv} shows that type equivalence preserves \( \mathbf{own} \) and \( \mathbf{SATv} \) and Lemma~\ref{lem:strengthning-preserves-own} states that strengthening operation does not affect \( \mathbf{own} \).

\begin{lemma}
  \label{lem:ty-equiv-preserves-own-and-SATv}
  Suppose that \( \TE \vdash \ty_1 \approx \ty_2 \) and \( \models [R] \formTE \TE \). Then
  \begin{enumerate}
    \item \( \own H \val {\ty_1} = \own H \val {\ty_2} \) and
    \item \( \SATv H R \val {\ty_1} \) if and only if \( \SATv H R \val {\ty_2} \).
  \end{enumerate}
\end{lemma}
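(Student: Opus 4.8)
The plan is to derive both statements directly from the two subtyping lemmas already established, namely Lemma~\ref{lem:subty-preserves-own} and Lemma~\ref{lem:subty-preserves-SATv}, by exploiting the antisymmetric nature of \( \approx \). First I would unfold the definition of type equivalence: \( \TE \vdash \ty_1 \approx \ty_2 \) holds exactly when both \( \TE \vdash \ty_1 \subty \ty_2 \) and \( \TE \vdash \ty_2 \subty \ty_1 \) (Figure~\ref{fig:type2}). Since the hypothesis \( \models [R]\formTE\TE \) is precisely the side condition required by both of those lemmas, each subtyping direction can be fed into the corresponding lemma without any further preparation.

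For part~(1), applying Lemma~\ref{lem:subty-preserves-own} to \( \TE \vdash \ty_1 \subty \ty_2 \) gives \( \own H \val {\ty_1} \ge \own H \val {\ty_2} \), and applying it to the reverse direction \( \TE \vdash \ty_2 \subty \ty_1 \) gives \( \own H \val {\ty_2} \ge \own H \val {\ty_1} \). As these are pointwise inequalities between ownership functions, combining them yields the pointwise equality \( \own H \val {\ty_1} = \own H \val {\ty_2} \), which is statement~(1).

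For part~(2), I would apply Lemma~\ref{lem:subty-preserves-SATv} twice: using \( \TE \vdash \ty_1 \subty \ty_2 \) shows that \( \SATv H R \val {\ty_1} \) implies \( \SATv H R \val {\ty_2} \), while using \( \TE \vdash \ty_2 \subty \ty_1 \) shows the converse implication. Together these give that \( \SATv H R \val {\ty_1} \) holds if and only if \( \SATv H R \val {\ty_2} \) does, which is statement~(2).

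I expect essentially no obstacle here: all the genuine work—the inductions on the subtyping derivation, and in particular the substitution lemma needed in the pointer (inductive) case—has already been carried out inside Lemmas~\ref{lem:subty-preserves-own} and~\ref{lem:subty-preserves-SATv}. The only point to verify is that the shared side condition \( \models [R]\formTE\TE \) is available in both invocations, which it is directly by assumption, so no re-derivation or strengthening of the context validity is required and the present lemma reduces to a two-line symmetric argument.
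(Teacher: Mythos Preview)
Your proposal is correct and is exactly the approach taken in the paper: the paper's own proof is the single line ``By Lemma~\ref{lem:subty-preserves-SATv} and~\ref{lem:subty-preserves-own},'' which is precisely the symmetric application of the two subtyping lemmas you describe.
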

\begin{proof}
  By Lemma~\ref{lem:subty-preserves-SATv} and~\ref{lem:subty-preserves-own}.
\end{proof}

\begin{lemma}
  \label{lem:strengthning-preserves-own}
  For any \( \form \) and \( x \), we have \( \own H \val \ty \) = \( \own H \val {\ty \land_x \form }\).
\end{lemma}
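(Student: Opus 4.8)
The plan is to prove the identity by a direct case analysis on the syntactic shape of \( \ty \), and in fact no induction is needed. The key observation is that the strengthening operation \( \ty \land_x \form \) only rewrites the outermost refinement predicate when \( \ty \) is an integer type, and acts as the identity on reference types, whereas \( \own H \val \ty \) depends solely on the ownership functions occurring in \( \ty \) (together with the heap \( H \)) and completely ignores refinement predicates. Thus strengthening never touches the ownership-bearing structure of a type, which is exactly what \( \mathbf{own} \) inspects.

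First I would treat the case \( \ty = \Tint{\nu}{\form'} \). Here \( \ty \land_x \form = \Tint{\nu}{\form' \land [\nu/x]\form} \) is again an integer type. Since neither \( \ty \) nor \( \ty \land_x \form \) is of the form \( \Tref{i}{\ty'}{r} \), the side condition \( (\spadesuit) \) in the definition of \( \mathbf{own} \) fails in both cases, so \( \own H \val \ty = \zerofun = \own H \val {\ty \land_x \form} \). Second, I would treat the case \( \ty = \Tref{i}{\ty'}{r} \). By the defining clause of the strengthening operation, \( \ty \land_x \form = \ty \), so the equality \( \own H \val \ty = \own H \val {\ty \land_x \form} \) holds trivially.

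Since these two cases are exhaustive, the lemma follows. There is no genuine obstacle here: the only thing that needs to be checked is that \( \land_x \) leaves the ownership functions (and the reference-type skeleton on which the recursive clause of \( \mathbf{own} \) operates) intact, so the recursive clause for reference types never has to be unfolded, and the potentially dependent structure inside \( \ty' \) is untouched.
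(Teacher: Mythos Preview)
Your proposal is correct and essentially matches the paper's proof, which simply states ``By straightforward induction on the structure of \( \ty \).'' Your observation that no induction hypothesis is actually needed---because \( \ty \land_x \form = \ty \) whenever \( \ty \) is a reference type---is a valid sharpening; the paper's nominal induction degenerates to exactly the two-case analysis you give.
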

\begin{proof}
  By straightforward induction on the structure of \( \tau \).
\end{proof}

The following lemmas say that \( \mathbf{own} \) and \( \mathbf{SATv} \) can be ``split'' in accordance with type splitting.
\begin{lemma}
  \label{lem:own-ty-add}
  If \(\ty=\ty_1+\ty_2\), we have \(\own H \val \ty = \own H \val {\ty_1} + \own H \val {\ty_2}\).
\end{lemma}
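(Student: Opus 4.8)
The plan is to prove the identity by structural induction on the type $\ty$, exploiting that the definition of $\ty_1 + \ty_2$ is shape-preserving: whenever $\ty = \ty_1 + \ty_2$, the summands $\ty_1$ and $\ty_2$ are integer types (respectively reference types) exactly when $\ty$ is. Since $\own H \val \ty$ depends on $\ty$ only through its top-level shape and, in the reference case, through its immediate subtype, this alignment lets the induction go through cleanly, with $\ty_1$ and $\ty_2$ always inheriting the same case split as $\ty$.

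For the base case, suppose $\ty = \Tint{\nu}{\form_1 \land \form_2}$ with $\ty_1 = \Tint{\nu}{\form_1}$ and $\ty_2 = \Tint{\nu}{\form_2}$. None of these is a reference type, so the side condition $(\spadesuit)$ in the definition of $\mathbf{own}$ fails for all three; hence $\own H \val \ty = \own H \val {\ty_1} = \own H \val {\ty_2} = \zerofun$, and the claim holds because $\zerofun = \zerofun + \zerofun$.

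For the inductive case, suppose $\ty = \Tref{i}{\ty'}{r}$ with $\ty_1 = \Tref{i}{\ty_1'}{r_1}$, $\ty_2 = \Tref{i}{\ty_2'}{r_2}$, $\ty' = \ty_1' + \ty_2'$, and $r = r_1 + r_2$. If $\val$ is not an address lying in $\dom(H)$, then $(\spadesuit)$ fails uniformly for all three types (the reference-shape part of $(\spadesuit)$ holds in each case, and the shared value condition on $\val$ fails), so again all three ownerships are $\zerofun$. Otherwise write $\val = \Addr{\addr}{\idx}$ with $\Addr{\addr}{\idx} \in \dom(H)$ and expand the definition of $\own H \val \ty$. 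The first summand $\{\Addr{\addr}{\idxAdd \idx \idxTwo} \mapsto [R]r(\idxTwo) \mid \idxTwo \in \Idx\}$ splits because instantiation distributes over the pointwise sum, so $[R]r = [R]r_1 + [R]r_2$. For the second summand, I would first record the routine fact that substitution distributes over type addition, $[\idxTwo/\idx]\ty' = [\idxTwo/\idx]\ty_1' + [\idxTwo/\idx]\ty_2'$ (immediate from the structural definition of $+$), and then apply the induction hypothesis to each term $\own H {H(\AddrArg \addr {\idxAdd \idx \idxTwo})}{[\idxTwo/\idx]\ty'}$, splitting it into the contributions of $[\idxTwo/\idx]\ty_1'$ and $[\idxTwo/\idx]\ty_2'$. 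Regrouping the resulting terms and folding the definition of $\mathbf{own}$ back in for $\ty_1$ and $\ty_2$ then yields $\own H \val \ty = \own H \val {\ty_1} + \own H \val {\ty_2}$.

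The one point requiring care, rather than a genuine obstacle, is the regrouping of the sum $\sum_{\idxTwo \in \Idx}(\cdots)$: splitting it into two sums and reassociating is justified because every term is a non-negative ownership function, so for each target address the sum is a series of non-negative reals, well-defined in $[0,\infty]$ and unconditionally rearrangeable (in well-formed configurations the supports are in fact finite, which also suffices). I would state the auxiliary distributivity of substitution over $+$ as a preliminary observation so that the appeal to the induction hypothesis in the reference case is clean.
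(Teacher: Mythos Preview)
Your proposal is correct and follows essentially the same approach as the paper: induction on the structure of the type-addition (the paper phrases it as induction on the derivation of $\ty=\ty_1+\ty_2$, which amounts to the same thing), with the base case trivial and the reference case handled by expanding $\mathbf{own}$, applying the induction hypothesis termwise, and regrouping. Your extra remarks on substitution distributing over $+$ and on rearranging the non-negative sum make explicit what the paper leaves implicit, but the argument is otherwise the same.
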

\begin{proof}
  By induction on the rules used to derive  \( \ty= \ty_1 + \ty_2 \).
  The base case where \( \ty_1 = \Tint \nu {\form_1} \) and \( \ty_2 = \Tint \nu {\form_2} \) is trivial because \( \own H \val {\ty_1 + \ty_2} = \own H \val {\ty_1} = \own H \val {\ty_2} = \zerofun \).

  The inductive case is the case where \( \ty_1 = \Tref \idx {\ty'_1} {r_1} \), \( \ty_2 = \Tref \idx {\ty'_2} {r_2} \) and \( \ty = \Tref \idx {\ty'_1 + \ty'_2} {r_1 + r_2} \).
  We only consider the case where \( \val \) is an address \( \Addr \addr \idx \) and is in \( \dom(H)\); remaining cases are trivial as the \( \mathbf{own} \) functions become \( \zerofun \).
  We have
  \begin{align*}
    &\own H \val {\ty_1} + \own H \val {\ty_2}  \\
    &= \{\Addr \addr {\idxAdd \idx \idxTwo} \mapsto r_1(\idxTwo) \mid \idxTwo \in \Idx \}  + \sum_j\own{H}{H(\AddrArg \addr {\idxAdd \idx \idxTwo})}{[\idxTwo/\idx]\ty'_1} \\
    &\phantom{=} + \{\Addr \addr {\idxAdd \idx \idxTwo} \mapsto r_2(\idxTwo) \mid \idxTwo \in \Idx \}  + \sum_j\own{H}{H(\AddrArg \addr {\idxAdd \idx \idxTwo})}{[\idxTwo/\idx]\ty'_2} \\
    &= \{\Addr \addr {\idxAdd \idx \idxTwo} \mapsto (r_1 + r_2)(\idxTwo) \mid \idxTwo \in \Idx \} \\
    &\phantom{=} + \sum_j \left( \own{H}{H(\AddrArg \addr {\idxAdd \idx \idxTwo})}{[\idxTwo / \idx]\ty'_1} + \own{H}{H(\AddrArg \addr {\idxAdd \idx \idxTwo})}{[\idxTwo / \idx]\ty'_2} \right) \\
    &= \{\Addr \addr {\idxAdd \idx \idxTwo} \mapsto (r_1 + r_2)(\idxTwo) \mid \idxTwo \in \Idx \} \\
    &\phantom{=} + \sum_\idxTwo \own{H}{H(\AddrArg \addr {\idxAdd \idx \idxTwo})}{[\idxTwo / \idx](\ty'_1 + \ty'_2)} \tag{by I.H.} \\
    &= \own H \val {\Tref \idx {\ty'_1 + \ty'_2} {r_1 + r_2}} =\own H \val \ty.
  \end{align*}
 \end{proof}

\begin{lemma}
  \label{lem:SATv-ty-add}
  If \(\ty=\ty_1+\ty_2\), we have \(\SATv{H}{R}{v}{\ty}\) iff \(\SATv{H}{R}{v}{\ty_1}\) and \(\SATv{H}{R}{v}{\ty_2}\).
\end{lemma}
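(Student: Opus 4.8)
The plan is to argue by induction on the derivation of \( \ty = \ty_1 + \ty_2 \), paralleling the structure of the proof of Lemma~\ref{lem:own-ty-add}. The base case is when \( \ty_1 = \Tint{\nu}{\form_1} \), \( \ty_2 = \Tint{\nu}{\form_2} \), and \( \ty = \Tint{\nu}{\form_1 \land \form_2} \). Here \( \SATv{H}{R}{v}{\ty} \) amounts to \( v \in \Z \) together with \( \models [R][v/\nu](\form_1 \land \form_2) \), and since \( [R][v/\nu](\form_1 \land \form_2) \) is logically equivalent to \( [R][v/\nu]\form_1 \land [R][v/\nu]\form_2 \), this splits immediately into \( \SATv{H}{R}{v}{\ty_1} \) and \( \SATv{H}{R}{v}{\ty_2} \). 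No subtlety arises in this case.

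For the inductive (reference) case we have \( \ty_1 = \Tref{i}{\ty_1'}{r_1} \), \( \ty_2 = \Tref{i}{\ty_2'}{r_2} \), and \( \ty = \Tref{i}{\ty_1'+\ty_2'}{r_1+r_2} \). If \( v \) is not an address lying in \( \dom(H) \), then all three instances of \( \SATv \) fail and the equivalence is trivial, so assume \( v = \Addr{a}{i} \in \dom(H) \). I would first record two routine facts: substitution commutes with type addition, i.e.~\( [j/i](\ty_1'+\ty_2') = [j/i]\ty_1' + [j/i]\ty_2' \), so that the induction hypothesis applies to the components at each index \( j \); and, since ownership functions are nonnegative, \( [R](r_1+r_2)(j) > 0 \) holds iff \( [R]r_1(j) > 0 \) or \( [R]r_2(j) > 0 \). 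Unfolding \( \SATv \) for reference types then reduces the goal to comparing the index sets over which the three predicates quantify. The forward direction is then straightforward: if \( [R]r_1(j) > 0 \) then \( [R](r_1+r_2)(j) > 0 \), so \( \SATv{H}{R}{v}{\ty} \) supplies \( \Addr{a}{i+j} \in \dom(H) \) and \( \SATv{H}{R}{H(\AddrArg{a}{i+j})}{[j/i](\ty_1'+\ty_2')} \), from which the induction hypothesis extracts \( \SATv{H}{R}{H(\AddrArg{a}{i+j})}{[j/i]\ty_1'} \); hence \( \SATv{H}{R}{v}{\ty_1} \), and symmetrically \( \SATv{H}{R}{v}{\ty_2} \).

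I expect the backward direction to be the main obstacle. Given \( j \) with \( [R](r_1+r_2)(j) > 0 \), at least one of \( [R]r_1(j), [R]r_2(j) \) is positive. When both are positive the two hypotheses jointly yield \( \SATv \) for \( [j/i]\ty_1' \) and \( [j/i]\ty_2' \), and the induction hypothesis recombines them into \( \SATv{H}{R}{H(\AddrArg{a}{i+j})}{[j/i](\ty_1'+\ty_2')} \). The delicate case is when, say, \( [R]r_1(j) > 0 \) but \( [R]r_2(j) = 0 \): here \( \SATv{H}{R}{v}{\ty_2} \) carries no information at index \( j \), yet the induction hypothesis still demands \( \SATv{H}{R}{H(\AddrArg{a}{i+j})}{[j/i]\ty_2'} \). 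To discharge it I would invoke the well-formedness invariant (rule \rn{WF-Ref}): since \( [R]r_2(j) = 0 \), the type \( [j/i]\ty_2' \) is equivalent to an empty type \( \top \). The value \( H(\AddrArg{a}{i+j}) \) already has the matching shape, because it satisfies \( [j/i]\ty_1' \), whose underlying simple type coincides with that of \( [j/i]\ty_2' \) (type addition is defined only for types of the same shape). Lemma~\ref{lem:SATv-top} then gives \( \SATv \) for the empty type, and Lemma~\ref{lem:ty-equiv-preserves-own-and-SATv} transports it along the equivalence to yield \( \SATv{H}{R}{H(\AddrArg{a}{i+j})}{[j/i]\ty_2'} \), completing the case.

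It is worth flagging that this last step is exactly where the argument relies on well-formedness, which is implicit in the statement and maintained as an invariant throughout the development. This is the crucial asymmetry with the additive Lemma~\ref{lem:own-ty-add}: for \( \mathbf{own} \), an index with zero ownership simply contributes nothing, whereas for \( \SATv \) an index with zero ownership still needs its (necessarily empty) component type to be satisfied, and it is precisely the \( \Empty \) condition forced by \rn{WF-Ref} that guarantees this for free.
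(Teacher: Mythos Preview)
Your proposal is correct and follows the same inductive strategy as the paper. In fact, you are more careful than the paper on one point: the paper only spells out the forward (``only if'') direction and dismisses the converse as ``somewhat symmetric,'' whereas you correctly observe that the backward direction is \emph{not} fully symmetric. The asymmetric index case (\([R]r_1(j)>0\) but \([R]r_2(j)=0\)) genuinely requires the well-formedness invariant from \rn{WF-Ref} together with Lemma~\ref{lem:SATv-top} and Lemma~\ref{lem:ty-equiv-preserves-own-and-SATv}, exactly as you outline; the paper leaves this implicit. Your remark that this is where \(\mathbf{SATv}\) differs from \(\mathbf{own}\) (where a zero-ownership index contributes nothing and the issue simply does not arise) is apt.
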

\begin{proof}
  By induction on the rules used to derive \( \ty= \ty_1 + \ty_2 \).
  The base case is the case where \( \ty_1 = \Tint \nu {\form_1} \), \( \ty_2 = \Tint \nu {\form_2} \) and \( \ty = \Tint \nu {\form_1 \land \form_2} \).
  From the definition of \(\mathbf{SATv} \), it suffices to show that \( \models [R][v / \nu] (\form_1 \land \form_2)  \) iff \( \models [R][v / \nu] \form_1  \) and \( \models [R][v / \nu] \form_2  \).
  This is obvious because \( [R][v / \nu] (\form_1 \land \form_2)  = [R][v / \nu] \form_1 \land [R][v / \nu] \form_2 \).

  We now consider the inductive case where \( \ty_1 = \Tref \idx {\ty'_1} {r_1} \), \( \ty_2 = \Tref \idx {\ty'_2} {r_2} \) and \( \ty = \Tref \idx {\ty'_1 + \ty'_2} {r_1 + r_2} \).
  We only prove the only if direction; the other direction is somewhat symmetric.
  Since \( \SATv H R v \ty\),  \( \val \) must be an address \( \Addr \addr \idx \) and we have
  \begin{gather}
    \Addr \addr \idx \in \dom(H) \\
    \Addr \addr {\idxAdd \idx \idxTwo} \in \dom(H) \text{ and  } \SATv H  R  {H(\AddrArg \addr {\idxAdd \idx \idxTwo})} {[\idxTwo/\idx] (\ty'_1 + \ty'_2)} \qquad (\text{for \( \idxTwo \) such that \( (r_1 + r_2)(\idxTwo) > 0 \)})
    \label{eq:SATv-add-ref}
  \end{gather}
  From~\eqref{eq:SATv-add-ref} and the induction hypothesis, we have
  \begin{align*}
    \SATv H  R  {H(\AddrArg \addr {\idxAdd \idx \idxTwo})} {[\idxTwo/\idx] \ty'_1 } \quad \text{and}  \quad \SATv H  R  {\Addr \addr {\idxAdd \idx \idxTwo}} {[\idxTwo/\idx] \ty'_2 }
  \end{align*}
    for each \( j \) that satisfies \( (r_1 + r_2)(j) > 0 \).
  In particular, for \( l \in \{1, 2\}\), we have  \( \SATv H  R  {\Addr \addr {\idxAdd\idx \idxTwo}} {[\idxTwo/\idx] \ty'_l } \) and \( \Addr \addr {\idxAdd \idx \idxTwo} \in \dom(H) \) for each \( \idxTwo \) such that \( r_l(\idxTwo) > 0 \).
  Hence, we have \( \SATv H  R  v {\Tref \idx {\ty'_1 } {r_1}} \) and \( \SATv H  R  v {\Tref \idx {\ty'_2 } {r_2}} \) as desired.
\end{proof}

\subsection{Proof of Subject Reduction}
\label{sec:subject-reduction}
This subsection is divided into two parts: the first part proves additional lemmas about \( \mathbf{own} \) and \( \mathbf{SATv}\), and the second part gives the main proof of subject reduction.
As usual, the main part of the proof of subject reduction is an inductive proof on the transition relation.
The lemmas we prepare in the first part are lemmas that are used in the cases for the transition rules that update the heap \( H \) or the register file \( R \) (e.g.~\rname{R-Assign}).
\subsubsection{Preservation properties under modifications of heaps and register files}
We define a relation that expresses that a heap is obtained by modifying a single cell of another heap.
\begin{definition}
  Two heaps \(H\) and \(H'\) are \emph{equivalent modulo \( \Addr \addr \idx \)}, written \(H\approx_{\Addr \addr \idx} H'\) if:
  \begin{enumerate}
    \item \(dom(H)=dom(H')\)
    \item \(\forall \Addr {\addr'} {\idx'} \in dom(H). \Addr {\addr'} {\idx'} \neq \Addr \addr \idx \Imp H(\AddrArg {\addr'} {\idx'})=H'(\AddrArg {\addr'} {\idx'})\)
  \end{enumerate}
\end{definition}

Of course, \( \own H  \val \ty \) nor \( \SATv H R {\val} {\ty} \) is not preserved under modifications of the heap \( H \) if \( \val \) (directly or indirectly) refers to the modified address.
However, if \( \val \) does not own the cell that has been modified, then \( \mathbf{own} \) and \( \mathbf{SATv} \) are preserved.
\begin{lemma}
  \label{lem:heap-upd-own-preservation}
  If \( H \approx_{\Addr \addr \idx} H' \) and \( \own H \val \ty (\AddrArg \addr \idx) = 0 \), then \( \own H \val \tau = \own {H'} \val \tau \).
\end{lemma}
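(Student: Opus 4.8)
The plan is to prove the identity by structural induction on the type $\ty$. This is well-founded because the only recursive occurrences of $\mathbf{own}$ in the definition of $\own H \val \ty$ are of the form $\own{H}{H(\AddrArg\addr\idx)}{[\idxTwo/\idx]\ty'}$, and the substituted type $[\idxTwo/\idx]\ty'$ is structurally smaller than $\Tref\idx{\ty'}{r}$. The one structural fact I would use repeatedly is that every summand in the definition of $\own H \val \ty$ is a non-negative function, so the hypothesis $\own H \val \ty(\AddrArg\addr\idx)=0$ forces each summand to vanish at $\Addr\addr\idx$.

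First I would dispatch the trivial cases. If $\ty=\Tint\nu\form$, or if $\val$ is not an address belonging to $\dom(H)$, then condition $(\spadesuit)$ fails under both heaps — note $\dom(H)=\dom(H')$ by $H\approx_{\Addr\addr\idx}H'$ — so $\own H \val \ty=\zerofun=\own{H'}\val\ty$ and there is nothing to show.

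The remaining case is $\ty=\Tref\idx{\ty'}{r}$ with $\val=\Addr{\addr'}{\idx'}\in\dom(H)=\dom(H')$. Here $\own H \val \ty$ is the point term $\{\Addr{\addr'}{\idxAdd{\idx'}\idxTwo}\mapsto r(\idxTwo)\mid\idxTwo\in\Idx\}$, which is literally the same under $H$ and $H'$, plus the hereditary terms $\own{H}{H(\AddrArg{\addr'}{\idxAdd{\idx'}\idxTwo})}{[\idxTwo/\idx]\ty'}$ ranging over $\idxTwo\in\Idx$, and likewise for $H'$. It therefore suffices to match the hereditary terms index by index. For an index $\idxTwo$ with $\Addr{\addr'}{\idxAdd{\idx'}\idxTwo}\ne\Addr\addr\idx$, the two heaps agree on that cell, on a common value $w$; decomposing the hypothesis gives $\own{H}{w}{[\idxTwo/\idx]\ty'}(\AddrArg\addr\idx)=0$, so the induction hypothesis (for the smaller type $[\idxTwo/\idx]\ty'$, the same pair of heaps, and value $w$) yields $\own{H}{w}{[\idxTwo/\idx]\ty'}=\own{H'}{w}{[\idxTwo/\idx]\ty'}$.

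I expect the main obstacle to be the one remaining index, where $\Addr{\addr'}{\idxAdd{\idx'}\idxTwo}=\Addr\addr\idx$; this can occur for at most one $\idxTwo$ (namely $\idxTwo=\idx-\idx'$, and only when $\addr=\addr'$). At that cell the two heaps may store genuinely different values, so the induction hypothesis cannot relate $\own{H}{H(\AddrArg\addr\idx)}{[\idxTwo/\idx]\ty'}$ and $\own{H'}{H'(\AddrArg\addr\idx)}{[\idxTwo/\idx]\ty'}$ directly. The way out is to observe that this cell is unowned: from $\own H \val \ty(\AddrArg\addr\idx)=0$ and Lemma~\ref{lem:own-zero-impl-no-overlap} we obtain $r(\idxTwo)=0$. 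The well-formedness condition \textsc{WF-Ref} then forces $[\idxTwo/\idx]\ty'$ to be equivalent, under the empty environment, to a type satisfying the $\Empty$ predicate, and hence to some $\top_n$. Since the $\mathbf{own}$ of any such type is $\zerofun$ (Lemma~\ref{lem:own-top-is-zero}) and type equivalence preserves $\mathbf{own}$ (Lemma~\ref{lem:ty-equiv-preserves-own-and-SATv}, whose side condition holds trivially over the empty environment), both hereditary terms collapse to $\zerofun$ and thus agree. This is the only place where well-formedness is used, and it is available because the types occurring in a well-typed configuration remain well-formed throughout the reduction. Summing the matched point and hereditary terms over all $\idxTwo$ then gives $\own H \val \ty=\own{H'}\val\ty$, completing the induction.
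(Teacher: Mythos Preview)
Your proof is correct and uses essentially the same ingredients as the paper's proof: structural induction on $\ty$, the well-formedness condition together with Lemma~\ref{lem:own-top-is-zero} (and Lemma~\ref{lem:ty-equiv-preserves-own-and-SATv}) to kill the terms at indices with $r(\idxTwo)=0$, Lemma~\ref{lem:own-zero-impl-no-overlap} to identify those indices, and the induction hypothesis elsewhere. The only difference is organizational: the paper first restricts the sum to $\idxTwo$ with $r(\idxTwo)>0$ (zeroing out all null-ownership indices at once, which in particular covers the single index where the heaps may differ), whereas you keep the full sum and case-split on whether $\Addr{\addr'}{\idxAdd{\idx'}\idxTwo}=\Addr\addr\idx$, invoking well-formedness only at that one index; both routes arrive at the same place.
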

\begin{proof}
  \newcommand{\idxpj}{\idxAdd {\idx'} \idxTwo}
  By induction on the structure of \( \ty \).
  The base case \( \ty = \Tint \nu \form \) trivially holds because, in this case, \( \own H \val \tau = \zerofun = \own {H'} \val \tau \).

  The inductive case is the case where \( \ty = \Tref \idx {\ty'} r \).
  We only consider the case where \( \val \) is an address \( \Addr {\addr'} {\idx'} \) and \( \Addr {\addr'} {\idx'} \) is in \( \dom(H) \) (which is equal to \( \dom(H')\)); the other cases are trivial because the functions become zero functions.
  We have
  \begin{align*}
    &\own H {\Addr {\addr'} {\idx'}} \ty  \\
    &= \{ \Addr {\addr'} \idxpj \mapsto r(\idxTwo) \mid \idxTwo \in \Idx \} +\sum_j\own{H}{H(\AddrArg {\addr'} \idxpj )}{[\idxTwo / \idx]\ty'} \tag{by def.} \\
    &= \{ \Addr {\addr'} \idxpj \mapsto r(\idxTwo) \mid \idxTwo \in \Idx \} +\sum_{j: r(j)> 0}\own{H}{H(\AddrArg {\addr'} \idxpj)}{[\idxTwo / \idx]\ty'} \tag{by well-formedness and Lemma~\ref{lem:own-top-is-zero}} \\
    &= \{\Addr {\addr'} \idxpj \mapsto r(\idxTwo) \mid \idxTwo \in \Idx \} +\sum_{j: r(j)> 0}\own{H'}{H(\AddrArg {\addr'} \idxpj)}{[\idxTwo / \idx]\ty'} \tag{by I.H.} \\
    &= \{\Addr {\addr'} \idxpj \mapsto r(\idxTwo) \mid \idxTwo \in \Idx \} +\sum_{j: r(j)> 0}\own{H}{H'(\AddrArg {\addr'} \idxpj)}{[\idxTwo / \idx]\ty'} \tag{by \( H \approx_{\Addr \addr \idx} H' \) and Lemma~\ref{lem:own-zero-impl-no-overlap}} \\
    &= \own H {\Addr {\addr'} {\idx'}} \ty  \tag{by a reasoning symmetric to the first two equations}
  \end{align*}
  To justify the above equational reasoning, we are left to check that, for any \( \idxTwo \) such that \( r(\idxTwo) > 0\), \( \own{H}{H(\AddrArg {\addr'} \idxpj)}{[\idxTwo / \idx]\ty'} (\AddrArg \addr \idx) = 0 \) so that we can indeed apply the induction hypothesis.
  If, to the contrary, \( 0 < \own{H}{H(\AddrArg {\addr'} \idxpj)}{[\idxTwo / \idx]\ty'} (\AddrArg \addr \idx) \), then we have \( \own H {\Addr {\addr'} {\idx'}} \ty (\AddrArg \addr \idx) > 0 \), which contradicts to \( \own H {\Addr {\addr'} {\idx'}} \ty (\AddrArg \addr \idx) = 0 \).
\end{proof}

\begin{lemma}
  \label{lem:hp-upd-SATv-preservation}
  If \(H \approx_{\Addr \addr \idx} H'\), \(\own{H}{v}{\ty}(\AddrArg \addr \idx) =0\) and \(\SATv{H}{R}{v}{\ty}\), then \(\SATv{H'}{R}{v}{\ty}\).
\end{lemma}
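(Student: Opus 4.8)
The plan is to prove this by induction on the structure of \( \ty \), in close parallel with the proof of Lemma~\ref{lem:heap-upd-own-preservation}. The base case \( \ty = \Tint{\nu}{\form} \) is immediate: \( \SATv{H}{R}{\val}{\Tint{\nu}{\form}} \) asserts only that \( \val \in \Z \) and \( \models [R][\val/\nu]\form \), a condition that does not mention the heap at all, so \( \SATv{H'}{R}{\val}{\Tint{\nu}{\form}} \) holds as well.

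For the inductive case \( \ty = \Tref{i}{\ty'}{r} \), I would first observe that the hypothesis \( \SATv{H}{R}{\val}{\ty} \) forces \( \val \) to be an address, say \( \val = \Addr{\addr'}{\idx'} \), with \( \Addr{\addr'}{\idx'} \in \dom(H) \); otherwise the hypothesis fails and the implication holds trivially. Since \( H \approx_{\Addr{\addr}{\idx}} H' \) gives \( \dom(H) = \dom(H') \), we immediately obtain \( \Addr{\addr'}{\idx'} \in \dom(H') \), which is the first conjunct required by \( \SATv{H'}{R}{\val}{\ty} \). It then remains to show, for each \( \idxTwo \) with \( [R]r(\idxTwo) > 0 \), that \( \Addr{\addr'}{\idxAdd{\idx'}{\idxTwo}} \in \dom(H') \) and \( \SATv{H'}{R}{H'(\AddrArg{\addr'}{\idxAdd{\idx'}{\idxTwo}})}{[\idxTwo/i]\ty'} \).

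Fix such an \( \idxTwo \). The membership \( \Addr{\addr'}{\idxAdd{\idx'}{\idxTwo}} \in \dom(H) = \dom(H') \) follows from \( \SATv{H}{R}{\val}{\ty} \). To replace \( H' \) by \( H \) on the dereferenced cell I would invoke Lemma~\ref{lem:own-zero-impl-no-overlap}: from \( \own{H}{\val}{\ty}(\AddrArg{\addr}{\idx}) = 0 \) and \( [R]r(\idxTwo) > 0 \) it yields \( \Addr{\addr'}{\idxAdd{\idx'}{\idxTwo}} \neq \Addr{\addr}{\idx} \), whence \( H'(\AddrArg{\addr'}{\idxAdd{\idx'}{\idxTwo}}) = H(\AddrArg{\addr'}{\idxAdd{\idx'}{\idxTwo}}) \) by \( H \approx_{\Addr{\addr}{\idx}} H' \). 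Thus the goal reduces to \( \SATv{H'}{R}{H(\AddrArg{\addr'}{\idxAdd{\idx'}{\idxTwo}})}{[\idxTwo/i]\ty'} \), and since \( \SATv{H}{R}{H(\AddrArg{\addr'}{\idxAdd{\idx'}{\idxTwo}})}{[\idxTwo/i]\ty'} \) already holds by \( \SATv{H}{R}{\val}{\ty} \), this is exactly the induction hypothesis applied to the strictly smaller type \( [\idxTwo/i]\ty' \) (the index substitution preserves structural size, so the induction is well founded), provided its own-condition is met.

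The key remaining step, and the point I expect to be the \emph{main obstacle}, is discharging that own-condition, namely \( \own{H}{H(\AddrArg{\addr'}{\idxAdd{\idx'}{\idxTwo}})}{[\idxTwo/i]\ty'}(\AddrArg{\addr}{\idx}) = 0 \). I would extract it from \( \own{H}{\val}{\ty}(\AddrArg{\addr}{\idx}) = 0 \) by unfolding the definition of \( \mathbf{own} \) on \( \Tref{i}{\ty'}{r} \): its value at \( \Addr{\addr}{\idx} \) is a pointwise sum of the leading term \( \{\Addr{\addr'}{\idxAdd{\idx'}{\idxTwo}} \mapsto [R]r(\idxTwo)\} \) and all the hereditary summands \( \own{H}{H(\AddrArg{\addr'}{\idxAdd{\idx'}{\idxTwo}})}{[\idxTwo/i]\ty'} \), each of which is a non-negative function since ownerships lie in \( [0,1] \) and addition is pointwise. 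A non-negative sum that equals \( 0 \) at \( \Addr{\addr}{\idx} \) forces every summand to vanish there, which yields precisely the needed own-condition for each \( \idxTwo \). With this the induction hypothesis applies and closes the inductive case; the only delicate bookkeeping is the non-negativity argument licensing the termwise vanishing and the consistent handling of the substitution \( [\idxTwo/i] \).
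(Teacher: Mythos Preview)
Your proposal is correct and follows essentially the same route as the paper: induction on the structure of \(\ty\), with the base case immediate and the reference case handled by using Lemma~\ref{lem:own-zero-impl-no-overlap} to identify \(H'(\AddrArg{\addr'}{\idxAdd{\idx'}{\idxTwo}})\) with \(H(\AddrArg{\addr'}{\idxAdd{\idx'}{\idxTwo}})\), extracting the hereditary own-condition from the vanishing of the outer \(\mathbf{own}\), and then invoking the induction hypothesis. Your explicit non-negativity argument for why each summand of \(\own{H}{\val}{\ty}(\AddrArg{\addr}{\idx})\) must vanish is exactly what the paper leaves implicit.
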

\begin{proof}
  \newcommand*{\typj}{[\idxTwo / {\idx'}] \ty'}
  \newcommand*{\idxpj}{\idxAdd {\idx'} \idxTwo}
  By induction on the structure of \( \ty \).

  The base case where \( \ty = \Tint \nu \form \) is trivial because the definition of \( \SATv H R v \ty \) does not depend on the heap \( H \).

  The step case is the case where \( \ty = \Tref \idx {\ty'} r \) and \( \val \) is an address \( \Addr {\addr'} {\idx'} \).
  Take \( \idxTwo \) such that \( r(\idxTwo) > 0 \); note that \( \Addr {\addr'} \idxpj \neq \Addr \addr \idx \) by Lemma~\ref{lem:own-zero-impl-no-overlap}.
  Our goal is to show that \( \SATv {H'} R {H'(\AddrArg {\addr'} \idxpj)} \typj \) because this would imply \( \SATv {H'} R {\Addr {\addr'} {\idx'}} \ty \).
  Since \( \own H  {\Addr {\addr'} {\idx'}} \ty  (\AddrArg \addr \idx) = 0 \), it must be the case that \( \own H  {H(\AddrArg {\addr'} \idxpj)} \typj(\AddrArg \addr \idx) = 0 \).
  We also have \( \SATv H R {H(\AddrArg {\addr'} \idxpj)} \typj \) from \( \SATv H R {\Addr {\addr'} {\idx'}} \ty \).
  Therefore, we have \( \SATv {H'} R {H(\AddrArg {\addr'} \idxpj)} \typj  \)  by the induction hypothesis.
  We also have \( H(\AddrArg {\addr'} \idxpj) = H'(\AddrArg {\addr'}  \idxpj)\) because \( H \approx_{\Addr \addr \idx} H' \) and \( \Addr {\addr'} \idxpj \neq \Addr \addr  \idx\).
  Hence, \( \SATv {H'} R {H'(\AddrArg {\addr'} \idxpj)} \typj \) holds.
\end{proof}

Now we show that \( \mathbf{own} \) and \( \mathbf{SATv} \) are preserved under extension of heaps.
\begin{lemma}
  \label{lem:heap-extension-preserves-own}
  If \( \SATv H R \val \ty \), then for any \( \val' \) and address \( \Addr {\addr'} {\idx'} \) such that \( \Addr {\addr'} {\idx'} \notin \dom(H) \) we have \( \own H \val \ty = \own {\Hupd H {\Addr {\addr'} {\idx'}} {\val'}} \val \ty \).
\end{lemma}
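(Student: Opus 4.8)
The plan is to prove this by induction on the structure of \( \ty \), mirroring the argument of Lemma~\ref{lem:heap-upd-own-preservation} but exploiting the hypothesis \( \SATv H R \val \ty \) to guarantee that the freshly added cell \( \Addr {\addr'} {\idx'} \) is never among the cells that \( \val\COL\ty \) hereditarily owns. Throughout, write \( H' \defeq \Hupd H {\Addr {\addr'} {\idx'}} {\val'} \), so that \( H' \) extends \( H \) by a single cell at an address outside \( \dom(H) \).

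The base case \( \ty = \Tint \nu \form \) is immediate, since \( \own H \val \ty = \zerofun = \own {H'} \val \ty \) directly from the definition. For the inductive case \( \ty = \Tref \idx {\ty'} r \), the first step is to note that because \( \SATv H R \val \ty \) holds at a reference type, clause~1 of \( \SATv H R \val \ty \) is inapplicable, so \( \val \) must be an address \( \Addr \addr \idx \) with \( \Addr \addr \idx \in \dom(H) \). Since \( \Addr {\addr'} {\idx'} \notin \dom(H) \), we also have \( \Addr \addr \idx \in \dom(H') \); hence the side condition \( (\spadesuit) \) holds for \( \val \) under both \( H \) and \( H' \), and the two evaluations of \( \own{}{}{} \) share the identical first summand \( \{\Addr \addr {\idxAdd \idx \idxTwo} \mapsto r(\idxTwo)\mid \idxTwo\in\Idx\} \). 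It then remains to show \( \sum_{\idxTwo}\own H {H(\AddrArg \addr {\idxAdd \idx \idxTwo})}{[\idxTwo/\idx]\ty'} = \sum_{\idxTwo}\own {H'}{H'(\AddrArg \addr {\idxAdd \idx \idxTwo})}{[\idxTwo/\idx]\ty'} \), which I would establish termwise.

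For indices \( \idxTwo \) with \( r(\idxTwo) = 0 \), well-formedness of \( \Tref \idx {\ty'} r \) forces \( [\idxTwo/\idx]\ty' \) to be equivalent to a type satisfying \( \Empty \), so both corresponding terms equal \( \zerofun \) by Lemma~\ref{lem:own-top-is-zero} together with Lemma~\ref{lem:ty-equiv-preserves-own-and-SATv}, independently of the heap. For indices \( \idxTwo \) with \( r(\idxTwo) > 0 \), clause~2 of \( \SATv H R \val \ty \) supplies \( \Addr \addr {\idxAdd \idx \idxTwo}\in\dom(H) \) and \( \SATv H R {H(\AddrArg \addr {\idxAdd \idx \idxTwo})}{[\idxTwo/\idx]\ty'} \). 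From \( \Addr \addr {\idxAdd \idx \idxTwo}\in\dom(H) \) and \( \Addr {\addr'}{\idx'}\notin\dom(H) \) we get \( H(\AddrArg \addr {\idxAdd \idx \idxTwo}) = H'(\AddrArg \addr {\idxAdd \idx \idxTwo}) \), and the induction hypothesis applies to the structurally smaller type \( [\idxTwo/\idx]\ty' \) at value \( H(\AddrArg \addr {\idxAdd \idx \idxTwo}) \) (with the same, still-fresh address \( \Addr {\addr'}{\idx'} \)), yielding equality of the two terms. Summing over all \( \idxTwo \) closes the induction.

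The routine portion is the bookkeeping of the two summations; the only point requiring care, and the analogue of the single delicate step in Lemma~\ref{lem:heap-upd-own-preservation}, is that the defining summation of \( \mathbf{own} \) ranges over all of \( \Idx \), including indices with zero ownership at which the heap need not be defined, so those terms must be discharged via well-formedness rather than via \( \SATv H R \val \ty \). Once this is handled, the freshness of \( \Addr {\addr'}{\idx'} \) relative to every owned cell follows directly from the domain-membership facts furnished by \( \SATv H R \val \ty \), which makes this lemma strictly simpler than the heap-modification case of Lemma~\ref{lem:heap-upd-own-preservation}.
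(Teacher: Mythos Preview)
Your proposal is correct and follows essentially the same approach as the paper's proof: induction on the structure of \( \ty \), using \( \SATv H R \val \ty \) to ensure \( \val \) is an address in \( \dom(H) \) and to obtain the recursive \( \mathbf{SATv} \) facts needed for the induction hypothesis at indices with positive ownership, while discharging the zero-ownership indices via well-formedness and Lemma~\ref{lem:own-top-is-zero}. Your treatment is slightly more explicit than the paper's in invoking Lemma~\ref{lem:ty-equiv-preserves-own-and-SATv} for the zero-ownership case, but the argument is the same.
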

\begin{proof}
  \newcommand*{\tyref}{\Tref \idx {\ty'} r }
  \newcommand*{\HTwo}{\Hupd H {\Addr {\addr'} {\idx'}}  {\val'}}
  \newcommand*{\raddr}{\Addr \addr \idx}
  \newcommand*{\idxij}{\idxAdd \idx \idxTwo}
  By induction on the structure of \( \ty \).
  The case where \( \ty = \Tint \nu \form \) is trivial because \( \own H \val \ty = \own \HTwo \val \ty  = \zerofun \) by the definition of \( \mathbf{own} \).

  Now we consider the case \( \ty = \tyref \).
  From the assumption \( \SATv H R \val \ty \), it must be the case that \( \val \) is an address \( \raddr \) such \( \raddr \in \dom(H) \) (which implies that \( \raddr \in \dom(\HTwo) \).
  We also have
  \begin{equation}
    \label{eq:lem:heap-extension-preserves-own:assumption-for-ind}
    \SATv H R {H(\AddrArg \addr \idxij)} {[\idxTwo /\idx] \ty'} \qquad (\text{for \( \idxTwo \) such that \( r(\idxTwo) > 0 \)})
  \end{equation}
  by the definition of \( \mathbf{SATv}\).
  We, therefore,  have
  \begin{align*}
    &\own H \raddr \tyref \\
    &= \{\Addr \addr \idxij \mapsto r(\idxTwo) \mid \idxTwo \in \Idx \} + \sum_\idxTwo \own{H}{H(\AddrArg \addr \idxij)}{[\idxTwo / \idx]\ty'} \tag{by def.} \\
    &= \{\Addr \addr \idxij \mapsto r(\idxTwo) \mid \idxTwo \in \Idx \} + \sum_{\idxTwo: r(\idxTwo) > 0} \own{H}{H(\AddrArg \addr \idxij)}{[\idxTwo / \idx]\ty'} \tag{by well-formedness and Lemma~\ref{lem:own-top-is-zero}} \\
    &= \{\Addr \addr \idxij \mapsto r(\idxTwo) \mid \idxTwo \in \Idx \} \\
    &\phantom{=} + \sum_{\idxTwo: r(\idxTwo) > 0} \own \HTwo {\HTwo(\AddrArg \addr \idxij)}{[\idxTwo / \idx]\ty'} \tag{by I.H. with~\eqref{eq:lem:heap-extension-preserves-own:assumption-for-ind}} \\
    &= \own \HTwo \raddr \tyref.
  \end{align*}
\end{proof}

\begin{lemma}
\label{lem:heap-extension-preserves-SATv}
  If \( \SATv H R \val \ty \), then for any \( \val' \) and \( \Addr {\addr'} {\idx'} \) such that \( \Addr {\addr'} {\idx'} \notin \dom(H) \) we have \( \SATv {\Hupd H {\Addr {\addr'} {\idx'}} {\val'}} R \val \ty \).
\end{lemma}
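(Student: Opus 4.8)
The plan is to prove this by induction on the structure of \( \ty \), mirroring the proof of Lemma~\ref{lem:heap-extension-preserves-own} but for \( \mathbf{SATv} \) in place of \( \mathbf{own} \). Throughout, write \( H' = \Hupd{H}{\Addr{\addr'}{\idx'}}{\val'} \) for the extended heap, and note that \( \dom(H') = \dom(H) \cup \set{\Addr{\addr'}{\idx'}} \) with \( \Addr{\addr'}{\idx'} \notin \dom(H) \). Since the quantified \( \val' \) and fresh \( \Addr{\addr'}{\idx'} \) appear inside the statement, the induction hypothesis is taken to be the full claim for the structurally smaller type.

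The base case \( \ty = \Tint{\nu}{\form} \) is immediate: \( \SATv{H}{R}{\val}{\Tint{\nu}{\form}} \) only asserts that \( \val \) is an integer with \( \models [R][\val/\nu]\form \), a condition that does not mention the heap at all, so it transfers verbatim to \( H' \).

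For the inductive case \( \ty = \Tref{i}{\ty'}{r} \), the assumption \( \SATv{H}{R}{\val}{\ty} \) forces \( \val \) to be an address \( \Addr{\addr}{\idx} \in \dom(H) \) such that, for every \( \idxTwo \) with \( [R]r(\idxTwo) > 0 \), we have \( \Addr{\addr}{\idxAdd{\idx}{\idxTwo}} \in \dom(H) \) and \( \SATv{H}{R}{H(\AddrArg{\addr}{\idxAdd{\idx}{\idxTwo}})}{[\idxTwo/\idx]\ty'} \). I would then verify the requirements of \( \SATv{H'}{R}{\val}{\ty} \) in turn. First, \( \Addr{\addr}{\idx} \in \dom(H) \subseteq \dom(H') \), and likewise \( \Addr{\addr}{\idxAdd{\idx}{\idxTwo}} \in \dom(H') \) for each relevant \( \idxTwo \). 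Second, since every such \( \Addr{\addr}{\idxAdd{\idx}{\idxTwo}} \) lies in \( \dom(H) \) while the inserted cell \( \Addr{\addr'}{\idx'} \) does not, these addresses differ from \( \Addr{\addr'}{\idx'} \), so \( H'(\AddrArg{\addr}{\idxAdd{\idx}{\idxTwo}}) = H(\AddrArg{\addr}{\idxAdd{\idx}{\idxTwo}}) \). Third, applying the induction hypothesis to \( [\idxTwo/\idx]\ty' \), which has the same structure as \( \ty' \) and is therefore a structurally smaller type than \( \ty \), turns \( \SATv{H}{R}{H(\AddrArg{\addr}{\idxAdd{\idx}{\idxTwo}})}{[\idxTwo/\idx]\ty'} \) into \( \SATv{H'}{R}{H(\AddrArg{\addr}{\idxAdd{\idx}{\idxTwo}})}{[\idxTwo/\idx]\ty'} \), which by the preceding equality is exactly \( \SATv{H'}{R}{H'(\AddrArg{\addr}{\idxAdd{\idx}{\idxTwo}})}{[\idxTwo/\idx]\ty'} \). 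Collecting these yields \( \SATv{H'}{R}{\val}{\ty} \).

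There is essentially no hard obstacle here: the whole argument rests on the single observation that every address hereditarily reachable from \( \val \) already belongs to \( \dom(H) \), so it cannot coincide with the fresh address \( \Addr{\addr'}{\idx'} \), and hence heap extension leaves each dereference along the way unchanged. The only point needing a word of care is the legality of invoking the induction hypothesis on \( [\idxTwo/\idx]\ty' \) rather than on \( \ty' \); this is justified because substituting an integer for the index variable does not alter the type's structure, exactly as in Lemma~\ref{lem:heap-extension-preserves-own}. Notably, unlike the heap-modification lemma (Lemma~\ref{lem:hp-upd-SATv-preservation}), no auxiliary hypothesis about \( \mathbf{own} \) is required, since freshness alone guarantees non-interference.
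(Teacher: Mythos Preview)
Your proof is correct and follows essentially the same approach as the paper: both proceed by induction on the structure of \( \ty \), handle the integer base case by noting that \( \mathbf{SATv} \) is heap-independent there, and in the reference case use the observation that each hereditarily reachable address \( \Addr{\addr}{\idxAdd{\idx}{\idxTwo}} \) lies in \( \dom(H) \) and hence differs from the fresh cell, so the dereferenced values coincide and the induction hypothesis applies.
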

\begin{proof}
  \newcommand*{\HTwo}{\Hupd H {\Addr {\addr'} {\idx'}} {\val'}}
  \newcommand*{\raddr}{\Addr \addr \idx}
  \newcommand*{\idxij}{\idxAdd \idx \idxTwo}
  By induction on the structure of \( \tau \).
  The base case where \( \ty = \Tint \nu \form \) is trivial because the definition of \( \SATv H R \val {\Tint \nu \form} \) does not depend on the heap.

  The inductive case is the case where \( \ty = \Tref \idx {\ty'} r \).
  From the definition of \( \mathbf{SATv} \), \( \val \) must be an address \( \raddr \) and must be in \( \dom(H) \).
  We also have
  \begin{gather}
    \Addr \addr \idxij \in \dom(H) \label{eq:lem:heap-extension-preserves-SATv:valid-address} \\
    \SATv H R {H(\AddrArg \addr \idxij)} {[\idxTwo /\idx] \ty'} \label{eq:lem:heap-extension-preserves-SATv:assumption-for-ind}
  \end{gather}
  for each \( \idxTwo \) such that \( r(\idxTwo) > 0 \).
  By the induction hypothesis and~\eqref{eq:lem:heap-extension-preserves-SATv:assumption-for-ind}, we have
  \begin{gather}
    \SATv \HTwo R {H(\AddrArg \addr \idxij)} {[\idxTwo /\idx] \ty'}. \label{eq:lem:heap-extension-preserves-SATv:SATv-HTwo}
  \end{gather}
  Observe that \( \Addr \addr \idxij \neq \Addr {\addr'} {\idx'} \) because of~\eqref{eq:lem:heap-extension-preserves-SATv:valid-address}, and thus \( H(\AddrArg \addr \idxij) = \HTwo(\AddrArg \addr \idxij )\).
  From this,~\eqref{eq:lem:heap-extension-preserves-SATv:valid-address} and~\eqref{eq:lem:heap-extension-preserves-SATv:SATv-HTwo}, we conclude \( \SATv \HTwo R \val \ty \).
\end{proof}

We also show that \( \mathbf{SATv} \) is preserved under the extension of register files.
\begin{definition}
  We write \(R \sqsubseteq R'\) to denote two register files such that:
  \begin{enumerate}
    \item \(dom(R) \subseteq dom(R')\), and
    \item \(\forall x\in dom(R).R(x)=R'(x)\)
  \end{enumerate}
\end{definition}
We say that a register file \( R \) is a \emph{valid substitution} of \( \ty \) if for each \( x \in \FV(\ty) \), \( R(x) \) is an integer.
\begin{lemma}
  \label{lem:register-extension-SATv}
  Suppose that \(\SATv{H}{R}{\val}{\ty}\) and \(R \) is a valid substitution of \( \ty \).
  Then for any \(R'\) such that \(R \sqsubseteq R'\), \(\SATv{H}{R'}{\val}{\ty}\).
\end{lemma}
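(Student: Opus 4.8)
The plan is to prove the statement by induction on the structure of \( \ty \), following the recursive definition of \( \mathbf{SATv} \) in Figure~\ref{fig:invariants}. The one observation that drives the whole argument is that, since \( R \sqsubseteq R' \), the two register files agree on \( \dom(R) \), and since \( R \) is a valid substitution of \( \ty \), every variable of \( \FV(\ty) \) lies in \( \dom(R) \) and is mapped to an integer. Consequently \( R \) and \( R' \) agree on every free variable of \( \ty \), so applying either of them to a formula or ownership function drawn from \( \ty \) produces the same result; the extra bindings present in \( R' \) but not in \( R \) simply do not touch the variables that actually occur.

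In the base case \( \ty = \Tint \nu \form \), the only part of \( \SATv H R \val \ty \) that mentions the register file is the validity condition \( \models [R][\val/\nu]\form \). Here I would note that the free variables of \( [\val/\nu]\form \) are contained in \( \FV(\ty) \subseteq \dom(R) \), so \( [R][\val/\nu]\form \) is already a closed formula and the additional substitutions in \( [R'] \) act only on variables that do not occur in it. Hence \( [R][\val/\nu]\form \) and \( [R'][\val/\nu]\form \) are the same closed formula, and \( \SATv H R \val \ty \) immediately gives \( \SATv H {R'} \val \ty \).

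In the inductive case \( \ty = \Tref \idx {\ty'} r \), the requirements that \( \val \) be an address \( \Addr \addr \idx \) and that \( \Addr \addr \idx \in \dom(H) \) do not depend on the register file at all. For the quantified condition, I would first use \( \FV(r) \subseteq \FV(\ty) \subseteq \dom(R) \) to conclude \( [R]r = [R']r \), so that the index set \( \set{\idxTwo \mid r(\idxTwo) > 0} \) is unchanged. For each such \( \idxTwo \in \Idx \), the hypothesis supplies \( \Addr \addr {\idxAdd \idx \idxTwo} \in \dom(H) \) together with \( \SATv H R {H(\AddrArg \addr {\idxAdd \idx \idxTwo})}{[\idxTwo/\idx]\ty'} \), and I would invoke the induction hypothesis on \( [\idxTwo/\idx]\ty' \) to push \( R \) up to \( R' \).

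The one point that needs genuine care, and the step I expect to be the main obstacle, is checking that the induction hypothesis actually applies, i.e.~that \( R \) is still a valid substitution of \( [\idxTwo/\idx]\ty' \). This is where the structure of reference types is used: since \( \idxTwo \) is a concrete integer, \( \FV([\idxTwo/\idx]\ty') = \FV(\ty') \setminus \set{\idx} \subseteq \FV(\ty) \subseteq \dom(R) \), and all of these variables are already mapped to integers by \( R \). Once this is confirmed, the induction hypothesis yields \( \SATv H {R'} {H(\AddrArg \addr {\idxAdd \idx \idxTwo})}{[\idxTwo/\idx]\ty'} \) for every relevant \( \idxTwo \), and reassembling these facts gives \( \SATv H {R'} \val \ty \), completing the induction.
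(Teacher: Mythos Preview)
Your proposal is correct and follows the same inductive approach as the paper. The paper's own proof is considerably terser—it handles the base case in essentially the same way you do, and then dismisses the reference case with ``The inductive case follows from the induction hypothesis''; you have simply filled in the details the paper omits, including the check that \(R\) remains a valid substitution of \([\idxTwo/\idx]\ty'\) and that \([R]r = [R']r\).
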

\begin{proof}
  By induction on \( \ty \).
  We first consider the base case where \( \ty = \Tint \nu \form \).
  Clearly, \( [R][ \val / \nu ] \form = [R'][\val /\nu] \form \) by the fact that \( R \) is a valid substitution of \( \ty \).
  By \( \SATv H R \val \ty \), we have \( \val \in \mathbb Z \) and \( \models [R] [\val / \nu ]\), and thus we conclude \( \SATv H {R'} \val \ty \).
  The inductive case follows from the induction hypothesis.
\end{proof}

\subsubsection*{Main Proof of Subject Reduction}
Finally, we are ready to prove subject reduction.
Prior to starting the standard inductive argument on the transition rules, we state one last lemma, which is a standard substitution lemma.
\begin{lemma}[Substitution]
  \label{lem:substitution}
  If \( \FE \mid \TE \p e : \ty \to \TE' \) and \( x' \notin \dom(\TE) \), then \( \FE \mid [x' / x]\TE \p [x' / x]e : [x' / x]\ty \to [x'/x]\TE' \).
\end{lemma}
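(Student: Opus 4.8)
The plan is to prove this by induction on the derivation of $\FE \mid \TE \p e : \ty \To \TE'$, with a case analysis on the last rule applied. The key observation is that, since $x' \notin \dom(\TE)$ and well-formedness forces every program variable occurring free in the predicates and ownership functions of $\TE$ (and, transitively, of $\ty$ and $\TE'$) to be in scope, the operation $[x'/x]$ is a capture-avoiding \emph{renaming} of one program variable by a fresh one. The strategy is therefore to push $[x'/x]$ through the conclusion of each rule and re-derive it by applying the induction hypothesis to the (renamed) premises.

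Before the main induction I would record two groups of auxiliary facts, all of which rely only on the freshness of $x'$. First, $[x'/x]$ commutes with the operations used in the rules: $[x'/x](\ty_1+\ty_2)=[x'/x]\ty_1+[x'/x]\ty_2$, $[x'/x](\ty \land_y \form)=([x'/x]\ty)\land_{[x'/x]y}([x'/x]\form)$, $[x'/x](x=_{\ty}y)=([x'/x]x =_{[x'/x]\ty}[x'/x]y)$, $[x'/x]\formTE{\TE}=\formTE{[x'/x]\TE}$, and $\Empty(\ty)\Leftrightarrow\Empty([x'/x]\ty)$; moreover, because a bijective renaming of a free integer variable preserves truth under all valuations, we have $\models\form$ iff $\models[x'/x]\form$, and likewise the quantified ownership equalities such as $\forall i.\,r_x(i-z)+r_y'(i)=r_y(i)$ in \textsc{T-AddPtr} are preserved (the bound index $i$ is untouched, while $z$ is renamed uniformly on both sides). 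Second, I need companion substitution lemmas for the subtyping and well-formedness judgments, namely that $\TE \p \ty_1 \le \ty_2$ implies $[x'/x]\TE \p [x'/x]\ty_1 \le [x'/x]\ty_2$, and that $\TE \pWF \ty$ implies $[x'/x]\TE \pWF [x'/x]\ty$. Each is a routine induction on its own derivation, using the validity-preservation fact in the \textsc{S-Int} and \textsc{WF} cases; the subtyping version is precisely the lemma alluded to in the footnote to Lemma~\ref{lem:subty-preserves-SATv}.

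Granting these facts, the main induction is largely mechanical. The leaf rules \textsc{T-Int} and \textsc{T-Var} follow immediately (for \textsc{T-Var}, observe that $[x'/x]$ applied to the variable expression is again a variable, so the conclusion is another \textsc{T-Var} instance, using commutation with $+$). The rules that merely thread the environment through — \textsc{T-If}, \textsc{T-Assert}, \textsc{T-Sub} — are handled by applying the induction hypothesis to each premise and reassembling, invoking the subtyping substitution lemma for \textsc{T-Sub} and validity preservation for \textsc{T-Assert}. The rules that bind a fresh program variable — \textsc{T-Let}, \textsc{T-MkArray}, \textsc{T-Deref}, \textsc{T-Assign}, \textsc{T-AddPtr}, the two alias rules, and \textsc{T-Call} — are treated by first $\alpha$-renaming the freshly bound variable so that it differs from both $x$ and $x'$ (legitimate since it does not occur in $\dom(\TE')$), after which $[x'/x]$ commutes with the environment extension and the induction hypothesis applies to the body.

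The one case I expect to require genuine care, and thus the main obstacle, is \textsc{T-Call}, because of the interplay between $[x'/x]$ and the argument substitution $\sigma=[y_1/x_1]\cdots[y_n/x_n]$. Since the free program variables of the types $\ty_i,\ty_i',\ty$ drawn from $\FE(f)$ lie among the formals $x_1,\dots,x_n$, which are bound and hence may be chosen distinct from $x$ and $x'$, I would verify the commutation identity $[x'/x](\sigma\,\ty)=\sigma'\,\ty$, where $\sigma'=[[x'/x]y_1/x_1]\cdots[[x'/x]y_n/x_n]$ is the renamed argument substitution, and similarly for each $\sigma\,\ty_i'$. With this identity the renamed instance of \textsc{T-Call} is reconstructed from the induction hypothesis on the continuation, and crucially $\FE$ is left unchanged throughout, since function types are closed up to their formals. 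Every remaining case is a direct transcription of the rule under the renaming, so once the \textsc{T-Call} bookkeeping and the $\alpha$-renaming discipline are in place the proof closes.
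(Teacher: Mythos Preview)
Your proposal is correct and follows the same approach as the paper: induction on the typing derivation. The paper's own proof is the single line ``By straightforward induction on the type derivation,'' so your detailed case analysis (commutation of $[x'/x]$ with the type operators, the companion substitution lemmas for subtyping and well-formedness, the $\alpha$-renaming discipline for binders, and the bookkeeping for \textsc{T-Call}) simply spells out what the paper leaves implicit.
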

\begin{proof}
  By straightforward induction on the type derivation.
\end{proof}
\SubjectReduction*
\begin{proof}
  \newcommand*{\TETwo}{\TE''}
  \newcommand*{\raddr}{\Addr \addr \idx}
  \newcommand*{\idxij}{\idxAdd \idx \idxTwo}

  By induction on the derivation of the transition relation with a case analysis on the last rule applied.

  Before proceeding to the case analysis, let us explain the overall structure that are common across each case.
  By the definition of well-typed configuration, there exists a type environment \( \TE \) that satisfies
  \begin{gather}
    \FE \mid \TE \p  e : \ty \To \TE' \nonumber \\
    \forall \raddr \in \dom(H).\; \Own H R \TE \le 1  \qquad \SAT H R \TE \label{eq:lem:subject-reduction:Own-assumption}
  \end{gather}
  for some \( \FE \).
  Our goal is to find a type environment \( \TETwo \) such that
  \begin{gather*}
    \FE \mid \TETwo \p  e': \ty \To \TE' \nonumber \\
  \forall \raddr \in \dom(H').\; \Own {H'} {R'} \TETwo(a) \le 1 \qquad \SAT {H'} {R'} \TETwo
  \end{gather*}

  For simplicity, we will ignore the subsumption rule \( \rname{T-Sub} \) and argue as if the shape of \( e \) determines the last rule applied to derive \( \FE \mid \TE \p  e : \ty \To \TE' \).
  This does not cause any problem as the conditions about \( \mathbf{Own} \) and \( \mathbf{SAT} \) are preserved by the subtyping relation (Lemma~\ref{lem:subty-preserves-Own-and-SAT}).

  Now we proceed to the case analysis.
  The order of the cases does not follow how the rules are ordered in Figure~\ref{fig:lang}.
  We rather start from interesting cases, namely those that involves heap manipulations or ownership distributions.
  \begin{description}
    \item[Case \rname{R-Assign}:]
      \bgroup %
      \newcommand*{\raddrp}{\Addr {\addr'} {\idx'}}
      \newcommand*{\raddrpArg}{\AddrArg {\addr'} {\idx'}}
      In this case, we must have
      \begin{gather*}
        e = \assnExp x y {e'} \qquad H' = \Hupd H {R(y)} {\Addr \addr \idx}  \qquad R' = R \nonumber \\
        R(x) = \Addr \addr \idx \quad \Addr \addr \idx \in\dom(H)
      \end{gather*}
      for some \( x \), \( y \), \( e' \), \( \addr \) and \( \idx \).
      By inversion of the typing of \( e \), we must also have
      \begin{gather}
        \TE  [x:\Tref{i}{\ty_x'}{r}][y:\ty_1+\ty_2] \nonumber \\
        r(0) = 1 \nonumber \\
        \FE\mid\TE [x\update\Tref{i}{\ty_x}{r}][y\update\ty_1] \p e' :\ty\To\TE' \label{eq:lem:subject-reduction:assign:exp-typable} \\
        i : \Tint \nu {\nu = 0}  \p \ty_x' \approx \ty_2\land_x(x=_{\ty_2}y) \qquad i : \Tint \nu {\nu \neq 0} \p \ty_x' \approx \ty_x \label{eq:lem:subject-reduction:assign:ty-equality-pre}
      \end{gather}
      for some \( r \), \( \ty_x \), \( \ty_x' \), \( \ty_1 \) and \( \ty_2 \).
      Observe that we have
      \begin{align}
         \empEnv \p [0 / \idx ]\ty_x' \approx [0/\idx](\ty_2\land_x(x=_{\ty_2}y)) \qquad \empEnv \p [n / \idx]\ty_x' \approx [n / \idx]\ty_x \quad (n \neq 0) \label{eq:lem:subject-reduction:assign:ty-equality}
      \end{align}
      from~\eqref{eq:lem:subject-reduction:assign:ty-equality-pre}.
      Note that we also have \( H \approx_{\Addr \addr \idx } H' \).

      We set \( \TETwo \defeq \TE[x\update\Tref{i}{\ty_x}{r}][y\update\ty_1] \).
      The typability of \( e' \) under \( \TETwo \) is exactly what~\eqref{eq:lem:subject-reduction:assign:exp-typable} claims.

      We check that \( \forall \raddrp  \in \dom(H') .\ \Own {H'} R \TETwo (\raddrpArg) \le 1 \).
      First, observe that \( x \) has the full ownership of \( \Addr \addr \idx \) because \( r(0) = 1\).
      Formally, we have
      \begin{equation}
        \own H {R(x)} {\TE(x)} (\AddrArg \addr \idx) = 1 \label{eq:lem:subject-reduction:assign:full-ownership}
      \end{equation}
      by the definition of \( \mathbf{own} \).
      Now we would like to appeal to Lemma~\ref{lem:heap-upd-own-preservation}, which states that equivalent heaps have the same \( \mathbf{own} \) function if ``the modified part of the heap is not owned''.
      To this end, we show that
      \begin{align}
        &\text{for each \(  z \in \dom(\TE) \setminus \{ x \} \), } \; \own H {R(z)} {\TE(z)} (\AddrArg \addr \idx) = 0 \label{eq:lem:subject-reduction:assign:own-zero} \\
        &\text{for each \(  j \neq 0 \) with \( r(j) > 0 \),} \; \own H {H(\AddrArg \addr {\idxAdd \idx \idxTwo})} {[\idxTwo / \idx]\ty_x} (\AddrArg \addr \idx) = 0 \label{eq:lem:subject-reduction:assign:own-zeroTwo}
      \end{align}
      The first equality follows from the following inequality:
      \begin{align}
        \Own H R \TE (\AddrArg \addr \idx)
        &= \sum_{z \in \dom(\TE)} \own H {R(z)} {\TE(z)} (\AddrArg \addr \idx)  \tag{by def.} \\
        &= 1 + \sum_{z \in \dom(\TE) \setminus \{ x \} } \own H {R(z)} {\TE(z)} (\AddrArg \addr \idx) \tag{by~\eqref{eq:lem:subject-reduction:assign:full-ownership}}\\
        &\le 1 \tag{by~\eqref{eq:lem:subject-reduction:Own-assumption}}
      \end{align}
      The second equality, i.e.~\eqref{eq:lem:subject-reduction:assign:own-zeroTwo}, can be shown by a similar argument.
      We therefore have:
      \begin{align*}
        &\Own {H'} {R} {\TETwo} \\
        &= \sum_{z \in \dom(\TETwo)} \own {H'} {R(z)} {\TETwo(z)}  \tag{by def.} \\
        &= \own {H'} {R(x)} {\Tref \idx {\ty_x} r} + \own {H'} {R(y)} {\ty_1} \nonumber \\
        &\phantom{=}\quad +  \sum_{z \in \dom(\TETwo) \setminus \{x, y\}} \own H {R(z)} {\TE(z)}
        \tag{by Lemma~\ref{lem:heap-upd-own-preservation} with~\eqref{eq:lem:subject-reduction:assign:own-zero}  and \( \TE(z) = \TETwo(z)\)}
      \end{align*}
    To give a bound to \( \Own {H'} {R} {\TETwo} \), we show that
      \begin{align*}
        &\own {H'} {R(x)} {\Tref \idx {\ty_x} r} + \own {H'} {R(y)} {\ty_1} \\
        &\le \own H {R(x)} {\Tref \idx {\ty'} r} + \own H {R(y)} {\ty_1 + \ty_2}.
      \end{align*}
      because this would imply \( \Own {H'} R \TETwo \le \Own H R {\TE} \).
      For all \( \raddrp \neq \raddr \), we have
      \begin{align*}
       &\own {H'} {R(x)} {\Tref \idx {\ty_x} r}(\raddrpArg) + \own {H'} {R(y)} {\ty_1}(\raddrpArg) \\
       &=\own {H'} \raddr {\Tref \idx {\ty_x} r}(\raddrpArg) + \own {H} {R(y)} {\ty_1}(\raddrpArg) \tag{Lemma~\ref{lem:heap-upd-own-preservation} and \( H \approx_{\Addr \addr \idx} H'\)} \\
       &= \{ \Addr \addr \idxij \mapsto r(\idxTwo) \mid \idxTwo \in \Idx \}(\raddrpArg) \\
       &\phantom{=}+ \sum_{\idxTwo} \own {H'} {H'(\AddrArg \addr \idxij)} {[\idxTwo / \idx]\ty_x}(\raddrpArg) + \own {H} {R(y)} {\ty_1}(\raddrpArg) \tag{by def. of \( \mathbf{own}\)} \\
       &= \{ \Addr \addr \idxij \mapsto r(\idxTwo) \mid \idxTwo \in \Idx \}(\AddrArg {\addr'} {\idx'}) + \sum_{\idxTwo: \idxTwo \neq 0} \own {H'} {H(\AddrArg \addr \idxij)} {[\idxTwo / \idx]\ty_x}(\raddrpArg)\\
       &\phantom{=}+ \own H {R(y)} {[0 / \idx]\ty_x} (\AddrArg {\addr'} {\idx'}) + \own {H} {R(y)} {\ty_1}(\raddrpArg) \tag{by \( H'(\AddrArg \addr \idx) = R(y)\) and \( H \approx_{\raddr} H' \)} \\
       &= \{ \Addr \addr \idxij \mapsto r(\idxTwo) \mid \idxTwo \in \Idx \}(\raddrpArg) + \sum_{\idxTwo: \idxTwo \neq 0} \own {H'} {H(\AddrArg \addr \idxij)} {[\idxTwo / \idx]\ty'}(\AddrArg {\addr'} {\idx'})\\
       &\phantom{=}+ \own H {R(y)} {\ty_2} (\AddrArg {\addr'} {\idx'}) + \own {H} {R(y)} {\ty_1}(\raddrpArg) \tag{by~\eqref{eq:lem:subject-reduction:assign:ty-equality} and Lemma~\ref{lem:strengthning-preserves-own}} \\
       &= \{ \Addr \addr \idxij \mapsto r(\idx) \mid \idx \in \Idx \}(\raddrpArg) + \sum_{\idxTwo: \idxTwo \neq 0} \own {H'} {H(\AddrArg \addr \idxij)} {[\idxTwo / \idx]\ty'}(\raddrpArg)\\
       &\phantom{=}+ \own {H} {R(y)} {\ty_1 + \ty_2}(\raddrpArg) \tag{by Lemma~\ref{lem:own-ty-add}} \\
       &= \{ \Addr \addr \idxij \mapsto r(\idxTwo) \mid \idxTwo \in \Idx \}(\AddrArg {\addr'} {\idx'}) + \sum_{\idxTwo: \idxTwo \neq 0} \own {H} {H(\AddrArg \addr \idxij)} {[\idxTwo / \idx]\ty'}(\AddrArg {\addr'} {\idx'})\\
       &\phantom{=}+ \own {H} {R(y)} {\ty_1 + \ty_2}(\raddrpArg) \tag{Lemma~\ref{lem:heap-upd-own-preservation} and \( H \approx_{\Addr \addr \idx} H'\)}  \\
       &\le \{ \Addr \addr \idxij \mapsto r(\idxTwo) \mid \idxTwo \in \Idx \}(\raddrpArg) + \sum_{\idxTwo: \idxTwo \neq 0} \own {H} {H(\AddrArg \addr \idxij)} {[\idxTwo / \idx]\ty'}(\raddrpArg)\\
       &\phantom{le}+ \own H {H(\AddrArg \addr \idx)} {\ty'} (\raddrpArg) + \own {H} {R(y)} {\ty_1 + \ty_2}(\raddrpArg) \\
       &= \own H {R(x)} {\Tref \idx {\ty'_x} r}(\raddrpArg) + \own H {R(y)} {\ty_1 + \ty_2}(\raddrpArg).
      \end{align*}
      The case where the argument is \( \raddr \) can be proved in a similar manner using Lemma~\ref{lem:heap-upd-own-preservation} and Lemma~\ref{lem:own-ty-add}.

      Now we show that \( \SAT {H'} R \TETwo \).
      By \( \SAT H R \TE \), for each \( z \in \dom( \TE ) \), we have \( \SATv H R {R(z)} {\TE(z)} \).
      Because satisfiability is preserved between equivalent heaps, we have \( \SATv {H'} R {R(z)} {\TE''(z)} \) for \( z \in \dom(\TE) \setminus \{ x, y \}\); more formally, this is by Lemma~\ref{lem:hp-upd-SATv-preservation} with \( H \approx_{\raddr} H' \) and~\eqref{eq:lem:subject-reduction:assign:own-zero}.
      Similarly, we have \( \SATv {H'} R {R(y)} {\ty_1 + \ty_2}\).
      In particular, we have
      \begin{align}
        \SATv {H'} R {R(y)} {\ty_1} \quad \text{and} \quad \SATv {H'} R {R(y)} {\ty_2} \label{eq:lem:subject-reduction:assign:satv-y-add}
      \end{align}
      by Lemma~\ref{lem:SATv-ty-add}.
      Since \( \ty_1 = \TETwo(y) \), it remains to show \( \SATv {H'} R {R(x)} {\TETwo(x)} \).
      We need to show that \( \SATv {H'} R {H'(\AddrArg \addr \idxij)} {[\idxTwo / \idx] \ty'_x} \) for each \( \idxTwo\) such that \( r(\idxTwo) > 0\).
      For \( j \neq 0 \), we can appeal to Lemma~\ref{lem:hp-upd-SATv-preservation} together with~\eqref{eq:lem:subject-reduction:assign:own-zeroTwo} and the fact that \( \SATv {H} R {R(x)} {\TE(x)} \) with \( \empEnv \vdash [\idxTwo / \idx]\ty_x' \approx [\idxTwo/ \idx] \ty_x \) and Lemma~\ref{lem:ty-equiv-preserves-own-and-SATv}.
      We also have \( \SATv {H'} R {H'(\AddrArg \addr \idx)} {\ty_x'[0 / \idx ]} \) from~\eqref{eq:lem:subject-reduction:assign:satv-y-add} because  \( H'(\AddrArg \addr \idx) = R(y)\) (hence the strengthening has no effect) and~\eqref{eq:lem:subject-reduction:assign:ty-equality}.
      \egroup

    \item[Case \rname{R-Deref}:]
      We must have
      \begin{gather}
        e = \letexp{x}{*y}{e_0} \quad e' = [x' / x]e_0 \quad H = H' \nonumber \\
        R' = \Rupd R {x'} {H(R(y))} \quad x' \notin \dom(H) \nonumber
      \end{gather}
      for some \( x \), \( x' \), \( y \) and \( e_0\).
      Furthermore, by inversion of the typing of \( e \), we have
      \begin{gather}
        \TE[y : \Tref \idx {\ty_y}{r}] \nonumber \\
        \FE\mid\TE[y\update\Tref{i}{\ty_y'}{r}],x:\ty_x\p e:\ty\To\TE' \label{eq:lem:subject-reduction:deref:exp-typable} \\
        i : \Tint \nu {\nu = 0}  \p \ty' + \ty_x \approx \ty_y \qquad i : \Tint \nu {\nu = 0}  \p \ty_y' \approx \ty'\land_y(y=_{\ty'}x) \label{eq:lem:subject-reduction:deref:ty-equality-zero}\\
        i : \Tint \nu {\nu \neq 0}  \p \ty_y' \approx \ty_y \label{eq:lem:subject-reduction:deref:ty-equality-nonzero}\\
        x\notin\dom(\TE') \qquad r(0)>0 \nonumber
      \end{gather}
      for some  \( \ty_x \), \( \ty_y \), \( \ty_y' \) \( \ty' \) and \( r \).

      We take \( \TE[y\update\Tref{i}{\ty_y'}{r}], x' : \ty_x\ \) for \( \TETwo \).
      Then from the substitution lemma (Lemma~\ref{lem:substitution}),~\eqref{eq:lem:subject-reduction:deref:exp-typable} and \( x \notin \TE' \), we have \( \FE \mid \TETwo \p e' : \ty \To \TE' \).

      We now show that  \( \forall \Addr \addr \idx. \Own H {R'} {\TETwo} (\AddrArg \addr \idx) \le 1\).
      Since the differences between \( \TE \) and \( \TETwo \) are in the type of \( x' \) and \( y \), we only need to show that
      \begin{align*}
        &\own H {R(y)} {\Tref \idx {\ty_y}{r}} \\
        &= \own[R'] H {R'(y)} {\Tref \idx {\ty'_y}{r}} {} + \own[R'] H {R'(x')} {\ty_x}.
      \end{align*}
      Since \( R(y)\) must be an address, let us write \( \Addr \addr \idx \) for \( R(y) \).
      The above equation is shown as follows:
      \begin{align*}
        & \own H {R(y)} {\Tref \idx {\ty_y}{r}} \\
        &= \{ \Addr \addr \idxij \mapsto r(\idxTwo) \mid \idxTwo \in \Idx \} + \sum_\idxTwo \own{H} {H(\AddrArg \addr \idxij)} {[\idxTwo / \idx]\ty_y} \tag{by def.} \\
        &= \{ \Addr \addr \idxij \mapsto r(\idxTwo) \mid \idxTwo \in \Idx \} + \sum_{\idxTwo: \idxTwo \neq 0} \own{H} {H(\AddrArg \addr \idxij)} {[\idxTwo / \idx]\ty_y} \\
        &\phantom{=} +  \own{H} {H(\AddrArg \addr \idx)} {[0 / \idx]\ty_y} \\
        &= \{ \Addr \addr \idxij \mapsto r(\idxTwo) \mid \idxTwo \in \Idx \} + \sum_{\idxTwo: \idxTwo \neq 0} \own{H} {H(\AddrArg \addr \idxij)} {[\idxTwo / \idx]\ty_y'} \\
        &\phantom{=} +  \own{H} {H(\AddrArg \addr \idx)} {\ty_x + \ty_y'[0 / \idx]} \tag{by Lemmas~\ref{lem:ty-equiv-preserves-own-and-SATv} and~\ref{lem:strengthning-preserves-own} with~\eqref{eq:lem:subject-reduction:deref:ty-equality-zero} and~\eqref{eq:lem:subject-reduction:deref:ty-equality-nonzero}} \\
        &= \{ \Addr \addr \idxij \mapsto r(\idxTwo) \mid \idxTwo \in \Idx \} + \sum_{\idxTwo: \idxTwo \neq 0} \own{H} {H(\AddrArg \addr \idxij)} {[\idxTwo / \idx]\ty'_y} \\
        &\phantom{=} +  \own{H} {H(\AddrArg \addr \idx)} {\ty_x} +  \own{H} {H(\AddrArg \addr \idx)} {\ty'_y[0 / \idx]}  \tag{Lemma~\ref{lem:own-ty-add}}\\
        &= \{ \Addr \addr \idxij \mapsto r(\idxTwo) \mid \idxTwo \in \Idx \} + \sum_{\idxTwo} \own{H} {H(\AddrArg \addr \idxij)} {[\idxTwo / \idx]\ty'_y} \\
        &\phantom{=} +  \own{H} {R'(x')} {\ty_x}   \tag{\( R'(x') = H(R(y)) \)}\\
        &= \own[R']{H} {R'(y)} {\Tref \idx {\ty'_y} r} +  \own[R']{H} {R'(x')} {\ty_x}
      \end{align*}

      For this case, it remains to show that \( \SAT H {R'} \TETwo \).
      Again, it suffices to consider \( \SATv H {R'} {R'(x')} {\TETwo(x')} \) and \( \SATv H {R'} {R'(y)} {\TETwo(y)} \).
      Since \( \SAT H R \TE \), we have \( \SATv H {R} {R(y)} {\TE(y)} \), and thus
      \begin{align*}
        \SATv H R {H(\AddrArg \addr {\idxAdd \idx \idxTwo})} {[\idxTwo/ \idx] \ty_y}
      \end{align*}
      for each \( r \) such that \( r(\idxTwo) > 0 \), where \( \Addr \addr \idx = R(y) \).
      In particular, since \( r(0) > 0 \) and~\eqref{eq:lem:subject-reduction:deref:ty-equality-zero}, we have
      \begin{align*}
        \SATv H R {H(R(y))} {\ty_x} \quad \text{and} \quad \SATv H R {H(R(y))} {[0 / \idx]\ty'}
      \end{align*}
      thanks to Lemma~\ref{lem:SATv-ty-add}.
      Because \( R'(x') = H(R(y)) \) and \( R' \) is an extension of \( R \), we have \( \SATv H {R'} {R'(x')} {\TETwo(x')}\) by Lemma~\ref{lem:register-extension-SATv}.
      Similarly, we also have \( \SATv H {R'} {R'(y)} {\TETwo(y)} \).

    \item[Case \rname{R-AddPtr}:]
      \bgroup %
      \newcommand*{\idxiz}{\idxAdd \idx {R(z)}}
      In this case, we must have
      \begin{gather}
        e = \letexp{x}{y\pplus z}{e_0} \qquad e' = [x'/x]e_0 \qquad H' = H   \nonumber \\
        R' = \Rupd R {x'} {\Addr \addr \idxiz} \qquad R(y) = \Addr \addr \idx \qquad x' \notin \dom(R)
      \end{gather}
      for some \( x \), \( y \), \( z\), \( e_0 \) and \( \Addr \addr \idx \).
      By inversion of the typing of \( e \), it must also be the case that
      \begin{gather}
        \TE[y:\Tref{i}{(\ty_1+\ty_2)}{r_y}][z:\Tint{\nu}{\form}]\nonumber \\
        \FE\mid\TE[y\update\Tref{i}{\ty_1}{r_y'}],x:\Tref{i}{[(i + z)/i]\ty_2}{r_x},z:\Tint{\nu}{\form}\p e_0 :\ty\To\TE' \label{eq:lem:subject-reduction:addptr:exp-typable} \\
        x \notin \dom(\TE') \nonumber \\
        r_x(\idx - R(z))+r_y'(\idx)=r_y(\idx) \label{eq:lem:subject-reduction:addptr:r}
      \end{gather}
      for some \( r_x \), \( r_y\), \( r_y'\), \( \ty_1 \) and \( \ty_2 \).

      We define \( \TETwo \defeq \TE[y\update\Tref{i}{\ty_1}{r_y'}], x' :\Tref{i}{[(i-z)/i]\ty_2}{r_x},z:\Tint{\nu}{\form} \).
      By substitution lemma (Lemma~\ref{lem:substitution}), \eqref{eq:lem:subject-reduction:addptr:exp-typable} and \( x \notin \dom(\TE') \) we have
      \( \TETwo \p e' : \ty \To \TE' \).

      We check the condition \( \forall \Addr {\addr'} {\idx'}. \Own H {R'} {\TETwo} (\AddrArg {\addr'} {\idx'}) \le 1\).
      It is enough to show that
      \begin{align*}
        &\own H {R(y)} {\Tref \idx {(\ty_1+\ty_2)} {r_y}} \\
        &=  \own[R'] H {R'(x')} {\Tref \idx {[\idxAdd \idx z/\idx]\ty_2}{r_x}} + \own[R'] H {R'(y)} {\Tref \idx {\ty_1}{r_y'}}.
      \end{align*}
      because then we have \( \Own H R \TE = \Own H {R'} {\TETwo} \).
      Indeed, we have
      \begin{align*}
        &\own H {R(y)} {\Tref \idx {(\ty_1+\ty_2)} {r_y}} \\
        &= \{ \Addr \addr \idxij \mapsto r_y(\idxTwo) \mid \idxTwo \in \Idx \} + \sum_{\idxTwo} \own H {\Addr \addr \idxij} {[\idxTwo / \idx] \ty_1 + \idx[\idxTwo / \idx]\ty_2 }\tag{by def. and \( R(y) = \Addr \addr \idx \)} \\
        &= \{ \Addr \addr \idxij \mapsto r_y(\idxTwo)  \mid \idxTwo \in \Idx \} + \sum_{\idxTwo} \own H {\Addr \addr \idxij} {[\idxTwo / \idx] \ty_1} \\
        &\phantom{=} + \sum_{\idxTwo} \own H {\Addr \addr \idxij} {[\idxTwo / \idx]\ty_2 } \tag{Lemma~\ref{lem:own-ty-add}} \\
        &= \{ \Addr \addr \idxij \mapsto r_x(\idxTwo - {R(z)}) + r'_y(\idxTwo)  \mid \idxTwo \in \Idx \} + \sum_{\idxTwo} \own H {\Addr \addr \idxij} {[\idxTwo / \idx] \ty_1} \\
        &\phantom{=} + \sum_{\idxTwo} \own H {\Addr \addr \idxij} {[\idxTwo / \idx]\ty_2 } \tag{by~\eqref{eq:lem:subject-reduction:addptr:r}}\\
        &= \{ \Addr \addr \idxij \mapsto  r'_y(\idxTwo)  \mid \idxTwo \in \Idx \} + \sum_{\idxTwo} \own H {\Addr \addr \idxij} {[\idxTwo / \idx] \ty_1} \\
        &\phantom{=} +   \{ \Addr \addr \idxij \mapsto  r_x(\idxTwo-R(z))  \mid \idxTwo \in \Idx \}  + \sum_{\idxTwo} \own H {\Addr \addr \idxij} {[\idxTwo / \idx]\ty_2 } \\
        &= \own[R'] H {R'(y)} {\Tref \idx {\ty_1} {r'_y}}\\
        &\phantom{=} +   \{ \Addr \addr {\idxAdd \idxiz \idxTwo} \mapsto  r_x(\idxTwo)  \mid \idxTwo \in \Idx \}  + \sum_{\idxTwo} \own H {\Addr \addr {\idxAdd \idxiz \idxTwo}} {[\idxAdd {R(z)} \idxTwo / \idx]\ty_2 } \\
        &= \own H {R'(y)} {\Tref \idx {\ty_1} {r'_y}}\\
        &\phantom{=} +   \{ \Addr \addr {\idxAdd \idxiz \idxTwo} \mapsto  r_x(\idxTwo)  \mid \idxTwo \in \Idx \}  + \sum_{\idxTwo} \own H {\Addr \addr {\idxAdd \idxiz \idxTwo}} {[\idxTwo / \idx][\idxAdd \idx z/ \idx]\ty_2 } \\
        &= \own[R'] H {R'(y)} {\Tref \idx {\ty_1} {r'_y}} + \own[R'] H {R'(x')} {\Tref \idx {[\idx + z / \idx ]\ty_2} {r_x}} \tag{\( R'(x') = \Addr \addr \idxiz \)}
      \end{align*}

      Similarly, to show that \( \SAT H {R'} \TETwo \) it suffices to show that, \( \SATv H {R'} {R'(x')} {\TETwo(x')} \) and \( \SATv H {R'} {R'(x')} {\TETwo(y)} \).
      Since \( \SAT H R \TE \), we have \( \SATv H R {R(y)} {\TE(y)} \).
      In particular, we have \( \SATv H R {H(\AddrArg \addr \idxij)} {[\idxTwo / \idx] \tau_1 + [\idxTwo / \idx] \ty_2}\) for  each \( \idxTwo \) such that \( r_y(\idxTwo) > 0 \).
      Therefore,
      \begin{align*}
        \SATv H R {H(\AddrArg \addr \idxij)} {[\idxTwo / \idx] \tau_1} \quad \text{and} \quad  \SATv H R {H(\AddrArg \addr \idxij)} {[\idxTwo / \idx] \ty_2}
      \end{align*}
      for  each \( \idxTwo \) such that \( r_y(\idxTwo) > 0 \), by Lemma~\ref{lem:SATv-ty-add}.
      Since \( r_y(\idxTwo) \ge r'_y (\idxTwo) \) and \( r_y(\idxTwo) \ge r_x (\idxTwo - R(z)) \) by~\eqref{eq:lem:subject-reduction:addptr:r}, we have
      \begin{align*}
        \SATv H {R'} {H(\AddrArg \addr \idxij)} {[\idxTwo / \idx] \tau_1} \quad (\text{for \( \idxTwo\) s.t. \( r'_y(\idxTwo) > 0\)}) \\
        \SATv H {R'} {H(\AddrArg \addr {\idxAdd \idxiz \idxTwo})} {[\idxTwo / \idx][\idxAdd \idx z/ \idx] \tau_2} \quad (\text{for \( \idxTwo\) s.t. \( r_x(\idxTwo) > 0\)})
      \end{align*}
      with the help of Lemma~\ref{lem:register-extension-SATv}.
      Hence, we conclude \( \SATv H {R'} {R'(y)} {\TETwo(y)} \) and \( \SATv H {R'} {R'(x')} {\TETwo(x')} \) as desired.

      \egroup

    \item[Case \rname{R-AliasAddPtr}]
      \bgroup %
      \newcommand*{\xidx}{\idxAdd \idx R(z)}
      \newcommand*{\xaddr}{\Addr \addr \xidx}
      It must be the case that
      \begin{gather*}
        e = \aliasexp{x}{y\pplus z}e' \\
        H' = H \qquad R' = R \qquad R(x) = \xaddr \qquad R(y) = \raddr
      \end{gather*}
      for some \( x \), \( y \) and \( z \).
      By inversion on the typing of \( e \) we must also have
      \begin{gather}
        \TE[x:\Tref{i}{\ty_x}{r_x}][y:\Tref{i}{\ty_y}{r_y}][z:\Tint{\nu}{\form}] \nonumber \\
        \FE \mid \TE[x\update\Tref{i}{\ty_x'}{r_x'}][y\update\Tref{i}{\ty_y'}{r_y'}],z:\Tint{\nu}{\form}\p e':\ty\To\TE' \nonumber \\
        \empEnv \p (\Tref{i}{[(i - z)/i]\ty_x}{r_{x_z}} + \Tref{i}{\ty_y}{r_y}) \approx (\Tref{i}{[(i-z)/i]\ty_x'}{r_{x_z}'} + \Tref{i}{\ty_y'}{r_y'}) \label{eq:lem:subject-reduction:aliasaddptr:ty-equiv} \\
         r_{x_z}(i)=r_x(i-R(z))\andalso r_{x_z}'(i)=r_x'(i-R(z)) \label{eq:lem:subject-reduction:aliasaddptr:own-fun}
      \end{gather}

      We define \( \TETwo \) as \( \TE[x\update\Tref{i}{\ty_x'}{r_x'}][y\update\Tref{i}{\ty_y'}{r_y'}],z:\Tint{\nu}{\form} \).
      Obviously, \( \FE \mid \TETwo \p e' : \ty' \To \TE' \).

      Now we check that the invariant about the ownership is preserved.
      It is enough to check that
      \begin{align*}
        &\own H {R(x)} {\TE(x)} + \own H {R(y)} {\TE(y)} \\
        &= \own H {R(x)} {\TETwo(x)} + \own H {R(y)} {\TETwo(y)}.
      \end{align*}
      Observe that
      \begin{align*}
        &\own H {R(x)} {\TE (x)} \\
        &= \own H \xaddr {\Tref{i}{\ty_x}{r_x}} \\
        &= \{ \Addr \addr {\idxAdd {\idxAdd \idx R(z)} \idxTwo} \mapsto r_x(\idxTwo) \mid \idxTwo \in \Idx \} + \sum_\idxTwo \own H {\Addr \addr {\idxAdd {\idxAdd \idx R(z)} \idxTwo}} {[j /\idx]\ty_x} \\
        &= \{ \Addr \addr {\idxAdd \idx \idxTwo} \mapsto r_x(\idxTwo - R(z)) \mid \idxTwo \in \Idx \} + \sum_\idxTwo \own H {\Addr \addr {\idxAdd \idx \idxTwo}} {[j - R(z) /\idx]\ty_x} \\
        &= \{ \Addr \addr {\idxAdd \idx \idxTwo} \mapsto r_{x_z}(\idxTwo) \mid \idxTwo \in \Idx \} + \sum_\idxTwo \own H {\Addr \addr {\idxAdd \idx \idxTwo}} {[\idxTwo / \idx][\idx - z / \idx]\ty_x} \tag{by~\eqref{eq:lem:subject-reduction:aliasaddptr:own-fun}} \\
        &= \own H {R(y)} {\Tref{i}{[(i - z)/i]\ty_x}{r_{x_z}}} \tag{by \( R(y) = \addr \)}
      \end{align*}
      Similarly, we have \( \own H {R(x)} {\TETwo(x)} = \own H {R(y)} {\Tref{i}{[(i-z)/i]\ty_x'}{r_{x_z}'}} \).
      Hence, we have
      \begin{align*}
        &\own H {R(x)} {\TE(x)} + \own H {R(y)} {\TE(y)} \\
        &= \own H {R(y)} {\Tref{i}{[(i - z)/i]\ty_x}{r_{x_z}}} + \own H {R(y)} {\Tref{\idx}{\ty_y}{r_y}} \\
        &= \own H {R(y)} {\Tref{i}{[(i - z)/i]\ty_x}{r_{x_z}} + \Tref{\idx}{\ty_y}{r_y}} \tag{Lemma~\ref{lem:own-ty-add}} \\
        &= \own H {R(y)} {\Tref{i}{[(i - z)/i]\ty'_x}{r_{x_z}'} + \Tref{\idx}{\ty'_y}{r'_y}} \tag{Lemma~\ref{lem:ty-equiv-preserves-own-and-SATv} with~\eqref{eq:lem:subject-reduction:aliasaddptr:ty-equiv}} \\
        &= \own H {R(y)} {\Tref{i}{[(i-z)/i]\ty_x'}{r_{x_z}'}} + \own H {R(y)} {\Tref{\idx}{\ty_y'}{r_y'}} \tag{Lemma~\ref{lem:own-ty-add}} \\
        &= \own H {R(x)} {\TETwo(x)} + \own H {R(y)} {\TETwo(y)}.
      \end{align*}
      as desired.

      To finish this case we show that \( \SAT H R \TETwo \).
      Once again, we may focus on \( x \) and \( y \).
      By a similar reasoning we did above for \( \mathbf{own} \), we have
      \begin{align*}
        \SATv H R {R(x)} {\TE(x)} \quad &\text{iff} \quad \SATv H R {R(y)} {\Tref{i}{[(i-z)/i]\ty_x}{r_{x_z}}} \\
        \SATv H R {R(x)} {\TETwo(x)} \quad &\text{iff} \quad \SATv H R {R(y)} {\Tref{i}{[(i-z)/i]\ty_x'}{r_{x_z}'}}
      \end{align*}
      Since \( \SAT H R \TE \), we have \( \SATv H  R {R(x)} {\TE(x)} \) and \( \SATv H R {R(y)} {\TE(y)} \).
      Hence, from the above bi-implication and Lemma~\ref{lem:SATv-ty-add}, it follows that \( \SATv H R {R(y)} {\Tref{i}{\ty_y}{r_y}] + \Tref{i}{[(i-z)/i]\ty_x}{r_{x_z}}}\).
      Lemma~\ref{lem:ty-equiv-preserves-own-and-SATv} with~\eqref{eq:lem:subject-reduction:aliasaddptr:ty-equiv}
      gives us \( \SATv H R {R(y)} {\Tref{i}{\ty'_y}{r'_y}} + \Tref{i}{[(i-z)/i]\ty'_x}{r'_{x_z}} \).
      By Lemma~\ref{lem:SATv-ty-add} together with the above bi-implication, we conclude \( \SATv H R {R(x)} {\TETwo(x)} \) and \( \SATv H R {R(y)} {\TETwo(y)}\).
      \egroup %
    \item[Case \rname{R-AliasDeref}]
      Similar to the case of \rname{R-AliasAddPtr}.
    \item[Case \rname{R-MkArray}]
      For this case, we have
      \begin{gather*}
        e =  \letexp{x}{\mkarray\;n}e_0 \qquad e' = [x' / x]e_0 \\
        H' = H \set{\Addr \addr 0 \mapsto m_0}\cdots \set{\Addr \addr {n-1} \mapsto m_{n-1}} \qquad \Addr \addr 0, \ldots, \Addr \addr {n - 1} \notin \dom(H) \\
        R' = \Rupd R {x'} {\Addr \addr 0} \qquad x' \notin \dom(R)
      \end{gather*}
      for some \( x \), \( x' \), \( e_0\) and a base address \( a \).
      By inversion on the typing of \( e_0 \), we also have
      \begin{gather}
         \FE\mid\TE,x:\Tref{i}{\top_k}{r}\p e_0:\ty\To\TE' \label{eq:lem:subject-reduction:mkarray:exp-typable}\\
         r(i)=\begin{cases}
                1&0\leqq i\leqq n-1\\
                0&otherwise
              \end{cases} \label{eq:lem:subject-reduction:ownership}\\
        x\notin\dom(\TE') \nonumber
      \end{gather}

      We set \( \TETwo \defeq  \TE,x:\Tref{\idx}{\top_k}{r} \).
      The fact that  \( \FE \mid \TETwo \p e'  : \ty \To \TE' \) is immediate from substitution lemma (Lemma~\ref{lem:substitution}),~\eqref{eq:lem:subject-reduction:mkarray:exp-typable} and \( x \notin \dom(\TE') \).

      We first check that \( \SAT {H'} {R'} \TETwo \).
      Since \( \SAT H R \TE \), we have
      \begin{align}
        \SATv H R {R(z)} {\TE(z)} \qquad \text{for each \( z \in \dom(\TE) \)} \label{eq:lem:subject-reduction:mkarray:SATv}
      \end{align}
      Because \( R \sqsubseteq R' \) and \( \TE(z) = \TETwo(z) \) for \( z \in \dom(\TE)\), we have \( \SATv H R {R'(z)} {\TETwo(z)} \) by Lemma~\ref{lem:register-extension-SATv}.
      Hence, it remains to show that \( \SATv {H'} {R'} {R(x')} {\Tref{i}{\top_k}{r}} \).
      This is immediate by the definition of \( \mathbf{SATv} \) and the fact that ``\( \top_k \) is valid'' (Lemma~\ref{lem:SATv-top}).

      Now we check that, for every \( \Addr {\addr'} {\idx'} \in \dom(H') \), \( \Own {H'} {R'} \TETwo (\AddrArg {\addr'} {\idx'}) \le 1 \).
      Observe that \( \own H {R(z)} {\TE(z)} = \own {H'} {R'(z)} {\TETwo(z)} \) for each \( z \in \dom(\TE) \) thanks to Lemma~\ref{lem:heap-extension-preserves-own} and~\eqref{eq:lem:subject-reduction:mkarray:SATv}.
      We therefore have
      \begin{align*}
        \Own {H'} {R'} \TETwo
        &= \Own H R \TE + \own[R'] {H'} {R'(x')}{\TETwo(x')} \\
        &= \Own H R \TE + \own[R'] H {\Addr \addr 0} {\Tref{\idx}{\top_k}{r}}  \\
        &= \Own H R \TE +  \{ \Addr \addr \idx \mapsto r(\idx) \mid \idx \in \Idx \} \tag{Lemma~\ref{lem:own-top-is-zero}}
      \end{align*}
      Thus, for each \( \Addr {\addr'} {\idx'} \in \dom(H')\) if \( \Addr {\addr'} {\idx'} \in \dom(H) \), then \( \Own {H'}{R'} \TETwo (\AddrArg {\addr'} {\idx'}) = \Own H R \TE (\AddrArg {\addr'} {\idx'}) \le 1 \); otherwise \( \Own {H'}{R'} \TETwo (\AddrArg {\addr'} {\idx'}) =  1 \) by~\eqref{eq:lem:subject-reduction:ownership}.

    \item[Case \rname{R-LetVar}]
      In this case, we must have
      \begin{gather}
        e = \letexp x y {e_0} \qquad e' = [x'/x] e_0 \qquad H' = H  \nonumber \\
        R' = \Rupd R {x'} {R(y)} \qquad x' \notin \dom(R) \nonumber
      \end{gather}
      for some \( x \), \( y \), \( e_0 \), \( \addr \) and \( \idx \).
      By inversion of the typing of \( e \), we must also have
      \begin{gather}
        \FE \mid \TE \p y : \ty_1 \To \TE_1 \label{eq:lem:subject-reduction:letvar:y-typable} \\
        \FE \mid \TE_1, x : \ty_1 \p e_0 : \ty \To \TE' \label{eq:lem:subject-reduction:letvar:exp-typable} \\
        x \notin \dom(\TE') \qquad x \notin \fv \ty \nonumber
      \end{gather}
      for some \( \TE_1 \) and \( \ty_1 \).
      By inversion on~\eqref{eq:lem:subject-reduction:letvar:y-typable}, \( \TE \) and \( \TE_1 \) must satisfy
      \begin{gather*}
        \TE[y : \ty_1 + \ty_2] \qquad \TE_1 = \TE [y \update  \ty_2]
      \end{gather*}
      for some \( \ty_2 \).

      We define \( \TETwo  \defeq \TE [y \update  \ty_2], x' : \ty_1 \).
      By substitution lemma (Lemma~\ref{lem:substitution}),~\eqref{eq:lem:subject-reduction:letvar:exp-typable} and \( x \notin \dom(\TE') \cup \fv{\ty}\), we have \( \FE \mid \TE'' \p [x'/ x]e_0 : \ty \To \TE' \).

      Now we check that the ownership invariant is preserved.
      As in other cases, it suffices to check that
      \begin{align*}
        \own H {R(y)} {\TE(y)} = \own H {R'(y)} {\TETwo(y)} + \own H {R'(x')} {\TETwo(x')}
      \end{align*}
      This trivially holds because
      \begin{align*}
        &\own H {R'(y)} {\TETwo(y)} + \own H {R'(x')} {\TETwo(x')} \\
        &=\own H {R(y)} {\ty_2} + \own H {R(y)} {\ty_1} \tag{by \( R'(y) = R(y) \) and \( R'(x') = R(y)\)} \\
        &=\own H {R(y)} {\ty_1 + \ty_2} \tag{Lemma~\ref{lem:own-ty-add}} \\
        &=\own H {R(y)} {\TE(y)}.
      \end{align*}

      Finally, we show that \( \SAT H {R'} \TETwo \).
      We only need to check \( \SATv H {R'} {R'(y)} {\TETwo(y)} \) and \( \SATv H {R'} {R'(x')} {\TETwo(x')} \).
      By \( \SAT H R \TE \), we have \( \SATv H R {R(y)} {\ty_1 + \ty_2}\).
      Applying Lemma~\ref{lem:SATv-ty-add} and~\ref{lem:register-extension-SATv} together with \( R \sqsubseteq  R' \) and \( \ty_1 + \ty_2 = \TETwo(y) + \TETwo(x') \) concludes this case.
    \item[Case \rname{R-LetInt}]
      Similar to the case for \rname{R-LetVar}.
    \item[Case \rname{R-Context}]
      By induction on the structure of \( E \).
      If \( E = \hole \) then the result holds by the induction hypothesis (of the induction on the transition rules).

      Now suppose that
      \begin{gather}
        e = E[e_0] \qquad e' = E[e_0'] \qquad E = \letexp{x}{E'}e_2 \nonumber \\
        \sconfig R H {e_0} \red \sconfig{R'}{H'}{e_0'} \label{eq:lem:subject-reduction:context:red}
      \end{gather}
      By inversion on the typing of \( e \), we have
      \begin{gather}
        \FE \mid \TE \p E[e_0] : \ty_1 \To \TE_1 \nonumber \\
        \FE \mid \TE_1, x : \ty_1 : e_2 : \tau \To \TE' \label{eq:lem:subject-reduction:context:expTwo-typable}\\
        x \notin  \dom(\TE_1) \cup \FV(\ty_1) \nonumber
      \end{gather}
      for some \( \ty_1 \) and \( \TE_1\).
      We have
      \begin{align*}
        \sconfig R H {E'[e_0]} \red \sconfig{R'}{H'}{E'[e_0']}
      \end{align*}
      by \rname{R-Context} and~\eqref{eq:lem:subject-reduction:context:red}.

      By the induction hypothesis (on the evaluation context) there exists a type environment \( \TETwo \) such that
      \begin{gather}
        \FE \mid \TETwo \p E'[e'_0] : \ty_1 \To \TE_1 \label{eq:lem:subject-reduction:context:exp-after-red-typable} \\
        \forall \raddr . \Own {H'} {R'} \TETwo (\AddrArg \addr \idx) \le 1 \qquad \SAT {H'} {R'} {\TETwo} \nonumber
      \end{gather}
      It remains to check that \( \FE \mid \TETwo \p e' : \ty \To \TE' \).
      This is immediate from~\eqref{eq:lem:subject-reduction:context:expTwo-typable} and~\eqref{eq:lem:subject-reduction:context:exp-after-red-typable}.

    \item[Case \rname{R-AssertFail}]
      This case cannot happen by Lemma~\ref{lem:assert-fail-never-happens}.
    \item[Case \rname{R-Assert}]
      In this case, we must have
      \begin{gather*}
        e = \assertexp \form; e' \quad R' = R \quad H' = H.
      \end{gather*}
      Furthermore, we have
      \begin{gather*}
        \FE \mid \TE \p e' : \ty \To \TE'
      \end{gather*}
      by inversion on the typing of \( e \).
      Therefore, we can simply take \( \TE \) for \( \TETwo \).
    \item[Case \rname{R-IfTrue}]
      It must be the case that
      \begin{gather*}
        e = \ifexp x {e_1} {e_2} \qquad e' = e_1 \qquad H' = H \\
        R' = R \qquad R(x) \le 0
      \end{gather*}
      By inversion on the typing of \( e \), we have
      \begin{align}
        &\TE[x : \Tint \nu \form ] \nonumber \\
        &\FE \mid \TE[x \update \Tint \nu {\form \land \nu \le 0} ] \p e_1 : \ty \To \TE' \label{eq:lem:subject-reduction:iftrue:exp-typable}
      \end{align}

      We set \( \TETwo \defeq \TE[x \update \Tint \nu {\form \land \nu \le 0} ] \).
      Clearly, we have \( \FE \mid \TETwo \p e' : \ty \To \TE' \) from~\eqref{eq:lem:subject-reduction:iftrue:exp-typable},
      Since \( \own H {R(x)} {\TE(x)} = \zerofun =  \own H {R(x)} {\TETwo(x)} \), we have \( \Own H R \TE = \Own H R \TETwo\).
      Hence, it remains to show that \( \SAT H R {\TETwo } \).
      In particular, we need to show that \( \SATv H R {R(x)} {\Tint \nu {\form \land \nu = 0}}  \).
      Since \( R(x) \le 0 \), we have \( \models [R](x \le  0)\).
      It remains to show  \( \models [R][R(x) / \nu ] \form \), which we have from \( \SATv H R {R(x)} {\TE(x)} \).
    \item[Case \rname{R-IfFalse}]
      Similar to the case of \rname{T-IfTrue}.
    \item[Case \rname{R-Call}]
      In this case, we have
      \begin{gather*}
        e = \letexp{x}{f(y_1,\dots ,y_n)}e_2 \qquad e' = \letexp{x}{[y_1/x_1] \cdots [y_n/x_n] e_1}e_2 \\
        f \mapsto (x_1, \ldots, x_n) e_2 \in D \\
        H' = H \qquad R' = R
      \end{gather*}
      for some \( x\), \( f \), \( e_1 \) and \( e_2 \).
      Without loss of generality we may assume that \( x_1, \ldots, x_n \) are fresh.
      By inversion on the typing of \( e \), we have
      \begin{gather}
        \TE[y_1 : \sigma \ty_1] \cdots [y_n : \sigma \ty_n] \nonumber \\
        \FE(f)=\langle x_1:\ty_1, \dots ,x_n:\ty_n\rangle\rightarrow\langle x_1:\ty_1',\dots ,x_n:\ty_n'\mid \ty_x \rangle \nonumber \\
        \FE\mid\TE[y_i\update\sigma\ty_i'],x: \sigma \ty_x \p e_2:\ty\To\TE'\andalso x\notin\dom(\TE') \cup \FV(\ty)  \label{eq:lem:subject-reduction:call:exp-two-typable}
      \end{gather}
      for some \( \ty_x \), \( \ty_i\) and \( \ty'_i\) where \( \sigma=[y_1/x_1]\cdots[y_n/x_n]  \).
      Moreover, because \( \FE \p D \), we must have
      \begin{align}
        \FE \mid x_1 : \ty_1 \ldots x_n : \ty_n \p e_1 : \ty_x \To x_1 : \ty'_1, \ldots, x_n : \ty_n'  \label{eq:lem:subject-reduction:call:exp-one-typable}
      \end{align}
      by the rules \( \rname{T-Funs} \) and \( \rname{T-FunDef}\).

      We take \( \TE \) as \( \TETwo \).
      Since the preservation of the invariants are trivial, we only need to check that \( e' \) can be typed under \( \TE \).
      By repeatedly applying the substitution lemma (Lemma~\ref{lem:substitution}) to~\eqref{eq:lem:subject-reduction:call:exp-one-typable}, and then applying a standard weakening lemma, we obtain
      \begin{align*}
        \FE \mid \TE \p [y_1/x_1] \cdots [y_n / x_n] e_1 : \sigma \ty_x \To \TE[y_i\update\sigma\ty_i'].
      \end{align*}
      Applying \rname{T-Let} to this and~\eqref{eq:lem:subject-reduction:call:exp-one-typable} gives \( \FE \mid \TE \p e' : \ty \To \TE' \) as desired.
    \item[Case \rname{R-AliasDerefFail} and \rname{R-AliasAddPtrFail}]
      Trivial.
  \end{description}
\end{proof}

\subsection{Proof of Progress}
\label{sec:progress}
Finally, we prove the progress lemma (Lemma~\ref{lem:progress}) to conclude the soundness of our type system.
We first observe that any expression can be decomposed into an evaluation context and a redex.
\begin{lemma}
  \label{lem:decomposition}
  For any expression \( e \), either (i) \( e = x \) for some variable \( x \), (ii) \( e = n \) for some integer \( n \), or (iii) \( e = E[e'] \) for some evaluation context \( E \) and an expression \( e' \) where \( e' \) is one of the following forms:
  \begin{enumerate}
    \item \( \letexp{x}{n}e_0 \)
    \item \( \letexp{x}{y}e_0  \)
    \item \( \ifexp{x}{e_1}{e_2} \)
    \item \( \letexp{x}{*y}e_0\)
    \item \( x:=y;e_0 \)
    \item \( \letexp{x}{y\pplus z}e_0 \)
    \item \( \letexp{x}{\mkarray\;n}e_0 \)
    \item \( \letexp{x}{f(y_1,\dots ,y_n)}e_0 \)
    \item \( \aliasexp{x}{*y}e_0 \)
    \item \( \aliasexp{x}{y\pplus z}e_0 \)
    \item \( \assertexp{\form}; e_0 \)
  \end{enumerate}
\end{lemma}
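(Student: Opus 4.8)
The plan is to prove the statement by structural induction on the expression $e$. The two base cases $e = n$ and $e = x$ immediately yield alternatives (ii) and (i), so the substance of the argument lies in the inductive cases, which I would split according to the top-level production of the grammar for $e$.

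For every production whose bound (or guard) subterm is already a \emph{simple} operation — namely $\ifexp{x}{e_1}e_2$, $x := y; e_0$, $\letexp{x}{*y}e_0$, $\letexp{x}{y \pplus z}e_0$, $\mkarrexp{x}{n}e_0$, $\letexp{x}{f(y_1,\ldots,y_n)}e_0$, the two alias forms $\aliasexp{x}{*y}e_0$ and $\aliasexp{x}{y \pplus z}e_0$, and $\assertexp{\form}; e_0$ — the expression is itself one of the listed redex shapes (1)--(11). In each of these I simply take $E = \hole$, so that $e = E[e]$ with $e$ in the required form; no appeal to the induction hypothesis is needed, and the same trivial choice $E = \hole$ covers the integer-operation form $\letexp{x}{y \OP z}e_0$ as well.

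The only genuinely inductive case is the general let $\letexp{x}{e_1}e_2$, in which $e_1$ may be an arbitrary, compound expression. Applying the induction hypothesis to $e_1$ gives three sub-cases. If $e_1 = x'$ for some variable, the whole expression matches redex form (2), and if $e_1 = n$ it matches form (1); in both $E = \hole$ works. Otherwise $e_1 = E'[e']$ for some evaluation context $E'$ and some $e'$ of one of the forms (1)--(11), and I would set $E \defeq \letexp{x}{E'}e_2$, keeping the same $e'$, so that $e = \letexp{x}{E'[e']}e_2 = E[e']$.

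The crux — such as it is — is this last step: I must check that $\letexp{x}{E'}e_2$ is again a legal evaluation context. This is immediate from the grammar $E ::= \hole \mid \letexp{x}{E}e$, which is precisely closed under prefixing a let-frame with a fixed body. Consequently the lemma is essentially routine and I do not anticipate any real obstacle; the only care required is to enumerate the grammar's productions exhaustively, so that every syntactic form of $e$ is matched either with the empty context or with a context extended by exactly one let-frame.
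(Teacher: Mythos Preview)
Your argument is correct and is precisely the structural induction the paper invokes in its one-line proof (``By a straightforward induction on the structure of $e$''); you have simply spelled out the cases. The one slip is your remark that $E=\hole$ also ``covers'' $\letexp{x}{y\ \OP\ z}{e_0}$: this form is not among the eleven listed shapes (and the paper gives no reduction rule for it either), so the lemma as stated does not actually account for it --- but that is an omission in the paper's statement, not a defect in your induction.
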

\begin{proof}
  By a straight forward induction on the structure of \( e \).
\end{proof}

\Progress*
\begin{proof}
  The proof is by a case analysis on the shape of \( e \) according to Lemma~\ref{lem:decomposition}.
  The case where \( e = x \) or \( e = n \) is exactly the case \ref{it:lem:progress:halt}, so it remains to consider the case where \( e = E[e'] \).
  Now the proof proceeds by induction on the structure of \( E \).
  We only show the base case where \( E = \hole \) because the inductive case follows immediately from the induction hypothesis.

  Note that, by \( \vdash_D \sconfig R H e : \ty \To \TE' \), we have \( \FE \mid \TE \vdash e : \ty \To \TE' \) and \( \SAT R H {\TE} \) for some \( \TE \).
  In particular, this implies that
  \begin{align}
    &\forall x \in \dom(\TE).\ x \in \dom(R) \\
    &\text{if \( x : \Tint {\nu}{\form} \in  \TE \), then \( R(x) \in \mathbb Z \)} \label{eq:lem:progress:int}\\
    &\text{if \( x : \Tref{\idx}{\ty'} r \in \TE\), then \( R(x) = \Addr \addr \idx \) for some \( \Addr \addr \idx \) and \( \Addr \addr \idx \in \dom(H) \).} \label{eq:lem:progress:ref}
  \end{align}

  Now we proceed by a case analysis on the shape of \( e' \).
  \begin{description}
    \item[Case \( e' = \letexp{x}{n}e_0 \) and \( e'=  \letexp{x}{y}e_0 \):]
      The configuration can reduce using \rname{R-LetInt} and \rname{R-LetVar}, respectively.
    \item[Case \( e' = \ifexp{x}{e_1}{e_2} \):]
      By \eqref{eq:lem:progress:int}, we have \( R(x) \in \mathbb Z \).
      The configuration can reduce depending on the value \( R(x) \) using \rname{R-IfTrue} or \rname{R-IfFalse}.
    \item[Case \( e' = \letexp{x}{*y}e_0\):]
      By inversion on the typing of \( e' \),  \( y \) must be a reference type.
      Therefore, by~\eqref{eq:lem:progress:ref}, \( H(R(y)) \) is defined, and the configuration can reduce using \rname{R-Deref}.
    \item[Case \( e' = x:=y;e_0 \):]
      We can reduce the configuration using \rname{R-Assign} if \( R(x) \in \dom(H) \).
      The condition \( R(x) \in \dom(H) \) is met by~\eqref{eq:lem:progress:ref} and the fact that \( x \) has a reference type, which follows from the well-typedness of \( e' \).
    \item[Case \( e' = \letexp{x}{y\pplus z}e_0 \):]
      Since \( R(z) \in \mathbb Z \) by~\eqref{eq:lem:progress:int} and the typing of \( e' \), we can reduce the configuration using \rname{R-AddPtr}.
    \item[Case \( e' = \letexp{x}{\mkarray\;n}e_0 \):]
      This case is obvious as we can reduce the configuration according to \rname{R-MkArray}.
    \item[Case \( \letexp{x}{f(y_1,\dots ,y_n)}e_0 \):]
      It suffices to check that \( f \mapsto (x_1, \ldots, x_n) e'' \in D \).
      Then we can conclude this case using \rname{R-Call}.
      By inversion on the typing of \( e' \), we have \( f \in \dom(\FE) \).
      We must also have \( \FE \vdash D \) from \( \vdash_D  \sconfig R H e : \ty \To \TE' \).
      By the definition of  \( \FE \vdash D \), we have \( f \mapsto (x_1, \ldots, x_n) e'' \in D \) as desired.
    \item[Case \( \aliasexp{x}{*y}e_0 \):]
      By~\eqref{eq:lem:progress:int},~\eqref{eq:lem:progress:ref} and the fact that \( e' \) is well-typed, we know that \( R(x) \) and \( H(R(y)) \) are defined.
      Hence, we can reduce the configuration using \rname{R-AliasDeref} or \rname{R-AliasDerefFail}.
    \item[Case \( e' = \aliasexp{x}{y\pplus z}e_0 \):]
      Similar to the previous case. %
    \item[Case \( e' = \assertexp{\form}; e_0 \):]
      We can reduce the configuration using \rname{R-Assert} or \rname{R-AssertFail} according to whether \( [R] \models \form \) holds or not.
      However, by Lemma~\ref{lem:assert-fail-never-happens} we know that the configuration cannot reduce to assert failure, so the state after the reduction step must be a configuration.
  \end{description}
\end{proof}

\clearpage
\section{Benchmark Programs}
\label{sec:benchmark}

\lstdefinestyle{mystyleML}{
    language=caml,
    basicstyle={\footnotesize\ttfamily},
    identifierstyle={\small},
    commentstyle={\small\ttfamily \color[rgb]{0,0,1}},
    keywordstyle={\small\ttfamily \color[rgb]{0,0.5,0}},
    ndkeywordstyle={\small},
    stringstyle={\small\ttfamily \color[rgb]{0,0,1}},
    frame={tb},
    breaklines=true,
    columns=[l]{fullflexible},
    xrightmargin=0pt,
    xleftmargin=3pt,
    numberstyle={\scriptsize},
    stepnumber=1,
    numbersep=1pt,
    morecomment=[s]{[}{]},
    morekeywords={alloc, assert, alias},
    mathescape=true,
}

Figures~\ref{fig:bench-init10}--\ref{fig:bench-add} list the benchmark programs used in
the experiments described in Section~\ref{sec:experiment}.
Note that those programs were given as they are, without any annotations
of ownerships and refinement predicates. For the comparison with SeaHorn,
we prepared corresponding C programs and passed them to SeaHorn.

\begin{figure}[h]
\begin{lstlisting}[style=mystyleML] 
init(n, p) 
[ <n: int, p: int ref> -> 
  <n: int, p: int ref | int> ]
{
  if n <= 0 then {
    1
  } else {
    p := 0; let q = p + 1 in let m = n - 1 in
    let d = init(m, q) in 0
  }
}

init_assert(n, p)
[ <n: int, p: int ref> -> 
  <n: int, p: int ref | int> ]
{
  if n <= 0 then {
    1
  } else {
    let y = *p in assert(y = 0); let q = p + 1 in let m = n - 1 in 
    let d = init_assert(m, q) in 0
  }
}

{
  let p = alloc 10 in let m = 10 in
  let d1 = init(m, p) in let d2 = init_assert(m, p) in 0
}
\end{lstlisting}
\caption{\textbf{Init-10}}
\label{fig:bench-init10}
\end{figure}

\begin{figure}
\begin{lstlisting}[style=mystyleML] 
init(n, p) 
[ <n: int, p: int ref> -> 
  <n: int, p: int ref | int> ]
{
  if n <= 0 then {
    1
  } else {
    p := 0; let q = p + 1 in let m = n - 1 in
    let d = init(m, q) in 0
  }
}

init_assert(n, p)
[ <n: int, p: int ref> -> 
  <n: int, p: int ref | int> ]
{
  if n <= 0 then {
    1
  } else {
    let y = *p in assert(y = 0); let q = p + 1 in let m = n - 1 in 
    let d = init_assert(m, q) in 0
  }
}

{
  let p = alloc 1000 in let m = 1000 in
  let d1 = init(m, p) in let d2 = init_assert(m, p) in 0
}
\end{lstlisting}
\caption{\textbf{Init}}
\label{fig:bench-init}
\end{figure}

\begin{figure}
\begin{lstlisting}[style=mystyleML] 
sum(n, p) 
[ <n: int, p: int ref> -> 
  <n: int, p: int ref | int> ]
{
  if n <= 0 then {
    0
  } else {
    let x = *p in let q = p + 1 in let m = n - 1 in
    let s = sum(m, q) in let y = x + s in y
  }
}

abs(m)
[ <m: int> -> <m: int | int> ]
{
  if m >= 0 then {
    m
  } else {
    let k = -m in k
  }
}

init_x(n, x, p) 
[ <n: int, x: int, p: int ref> -> 
  <n: int, x: int, p: int ref | int> ]
{
  if n <= 0 then {
    1
  } else {
    p := x; let q = p + 1 in let m = n - 1 in
    let d = init_x(m, x, q) in 0
  }
}

{
  let p = alloc 1000 in let m = 1000 in
  let rand = _ in let z = abs(rand) in
  let d = init_x(m, z, p) in let x = sum(m, p) in
  assert(x >= 0); 0
}

\end{lstlisting}
\caption{\textbf{Sum}}
\label{fig:bench-sum}
\end{figure}

\begin{figure}
\begin{lstlisting}[style=mystyleML] 
sum_back(n, p) 
[ <n: int, p: int ref> -> 
  <n: int, p: int ref | int> ]
{
  if n <= 0 then {
    0
  } else {
    let m = n - 1 in let q = p + m in let x = *q in
    let s = sum_back(m, p) in let y = x + s in y
  }
}

abs(m)
[ <m: int> -> <m: int | int> ]
{
  if m >= 0 then {
    m
  } else {
    let k = -m in k
  }
}

init_x(n, x, p) 
[ <n: int, x: int, p: int ref> -> 
  <n: int, x: int, p: int ref | int> ]
{
  if n <= 0 then {
    1
  } else {
    p := x; let q = p + 1 in let m = n - 1 in
    let d = init_x(m, x, q) in 0
  }
}

{
  let p = alloc 1000 in let m = 1000 in
  let rand = _ in let z = abs(rand) in
  let d = init_x(m, z, p) in let x = sum_back(m, p) in
  assert(x >= 0); 0
}

\end{lstlisting}
\caption{\textbf{Sum-Back}}
\label{fig:bench-sumback}
\end{figure}

\begin{figure}
\begin{lstlisting}[style=mystyleML] 
sum_both(n, p) 
[ <n: int, p: int ref> -> 
  <n: int, p: int ref | int> ]
{
  if n <= 0 then {
    0
  } else {
    let x = *p in
    if n = 1 then {
      x
    } else {
      let k = n - 1 in let q = p + k in let x2 = *q in
      let y = x + x2 in let m = n - 2 in let p' = p + 1 in
      let s = sum(m, p') in let z = y + s in z
    } 
  }
}

abs(m)
[ <m: int> -> <m: int | int> ]
{
  if m >= 0 then {
    m
  } else {
    let k = -m in k
  }
}

init_x(n, x, p) 
[ <n: int, x: int, p: int ref> -> 
  <n: int, x: int, p: int ref | int> ]
{
  if n <= 0 then {
    1
  } else {
    p := x; let q = p + 1 in let m = n - 1 in
    let d = init_x(m, x, q) in 0
  }
}

{
  let p = alloc 1000 in let m = 1000 in
  let rand = _ in let z = abs(rand) in
  let d = init_x(m, z, p) in let x = sum_both(m, p) in
  assert(x >= 0); 0
}

\end{lstlisting}
\caption{\textbf{Sum-Both}}
\label{fig:bench-sumboth}
\end{figure}

\begin{figure}
\begin{lstlisting}[style=mystyleML] 
sum_div(n, m, p) 
[ <n: int, m: int, p: int ref> -> 
  <n: int, m: int, p: int ref | int> ]
{
  if n < m then {
    0
  } else {
    if n <= 0 then {
      0
    } else {
      let x = *p in
      if n = 1 then {
        x
      } else {
        let q = p + m in let m2 = m / 2 in let s1 = sum_div(m, m2, p) in
        let k = n - m in let k2 = k / 2 in let s2 = sum_div(k, k2, q) in
        let z = s1 + s2 in z
      } 
    }
  }
}

abs(m)
[ <m: int> -> <m: int | int> ]
{
  if m >= 0 then {
    m
  } else {
    let k = -m in k
  }
}

init_x(n, x, p) 
[ <n: int, x: int, p: int ref> -> 
  <n: int, x: int, p: int ref | int> ]
{
  if n <= 0 then {
    1
  } else {
    p := x; let q = p + 1 in let m = n - 1 in
    let d = init_x(m, x, q) in 0
  }
}

{
  let p = mkarray 1000 in let m = 1000 in 
  let rand = _ in let z = abs(rand) in let d = init_x(m, z, p) in 
  let m2 = 500 in let x = sum_div(m, m2, p) in
  assert(x >= 0); 0
}

\end{lstlisting}
\caption{\textbf{Sum-Div}}
\label{fig:bench-sumdiv}
\end{figure}

\begin{figure}
\begin{lstlisting}[style=mystyleML] 
copy_array(n, p, q)
[ <n: int, p: int ref, q: int ref> -> 
  <n: int, p: int ref, q: int ref | int> ]
{
  if n <= 0 then {
    1
  } else {
    let x = *p in q := x; let p' = p + 1 in let q' = q + 1 in
    let m = n - 1 in let d = copy_array(m, p', q') in 0
  }
}

abs(m)
[ <m: int> -> <m: int | int> ]
{
  if m >= 0 then {
    m
  } else {
    let k = -m in k
  }
}

init_x(n, x, p) 
[ <n: int, x: int, p: int ref> -> 
  <n: int, x: int, p: int ref | int> ]
{
  if n <= 0 then {
    1
  } else {
    p := x; let q = p + 1 in let m = n - 1 in
    let d = init_x(m, x, q) in 0
  }
}

copy_assert(n, p)
[ <n: int, p: int ref> -> 
  <n: int, p: int ref | int> ]
{
  if n <= 0 then {
    1
  } else {
    let y = *p in assert(y >= 0); let q = p + 1 in 
    let m = n - 1 in let d = copy_assert(m, q) in 0
  }
}

{
  let p = alloc 1000 in let q = alloc 1000 in let m = 1000 in
  let rand = _ in let z = abs(rand) in let d1 = init_x(m, z, p) in 
  let d2 = copy_array(m, p, q) in let d3 = copy_assert(m, q) in 0
}

\end{lstlisting}
\caption{\textbf{Copy-Array}}
\label{fig:bench-copy}
\end{figure}

\begin{figure}
\begin{lstlisting}[style=mystyleML] 
add_array(n, p, q, r)
[ <n: int, p: int ref, q: int ref, r: int ref> -> 
  <n: int, p: int ref, q: int ref, r: int ref | int> ]
{
  if n <= 0 then {
    1
  } else {
    let x = *p in let y = *q in let z = x + y in r := z;
    let p' = p + 1 in let q' = q + 1 in let r' = r + 1 in
    let m = n - 1 in let d = add_array(m, p', q', r') in 0
  }
}

abs(m)
[ <m: int> -> <m: int | int> ]
{
  if m >= 0 then {
    m
  } else {
    let k = -m in k
  }
}

init_x(n, x, p) 
[ <n: int, x: int, p: int ref> -> 
  <n: int, x: int, p: int ref | int> ]
{
  if n <= 0 then {
    1
  } else {
    p := x; let q = p + 1 in let m = n - 1 in
    let d = init_x(m, x, q) in 0
  }
}

add_assert(n, p)
[ <n: int, p: int ref> -> 
  <n: int, p: int ref | int> ]
{
  if n <= 0 then {
    1
  } else {
    let y = *p in assert(y >= 0); let q = p + 1 in 
    let m = n - 1 in let d = add_assert(m, q) in 0
  }
}

{
  let p = alloc 1000 in let q = alloc 1000 in let r = alloc 1000 in
  let m = 1000 in let rand = _ in let z = abs(rand) in 
  let d1 = init_x(m, z, p) in let d2 = init_x(m, z, q) in 
  let d3 = add_array(m, p, q, r) in let d4 = add_assert(m, r) in 0
}

\end{lstlisting}
\caption{\textbf{Add-Array}}
\label{fig:bench-add}
\end{figure}

\fi
\end{document}